\documentclass[12pt]{article}

\usepackage[utf8]{inputenc}
\usepackage[T1]{fontenc}
\usepackage{amsmath,amssymb,amsthm}
\usepackage{mathtools}
\usepackage[margin=1.25in]{geometry}
\usepackage{booktabs}
\usepackage{array}
\usepackage{enumitem}
\usepackage{hyperref}
\usepackage{xcolor}
\usepackage{parskip}
\usepackage{algorithm}
\usepackage{algpseudocode}
\usepackage{pgfplots}
\pgfplotsset{compat=1.18}
\usepackage{tikz}
\usepackage{caption}
\usepackage{subcaption}
\usepackage[numbers,sort&compress]{natbib}

\newtheorem{theorem}{Theorem}[section]
\newtheorem{proposition}[theorem]{Proposition}
\newtheorem{lemma}[theorem]{Lemma}
\newtheorem{corollary}[theorem]{Corollary}
\newtheorem{definition}[theorem]{Definition}
\newtheorem{mainthm}{Theorem}

\theoremstyle{remark}

\newtheorem{remarknum}[theorem]{Remark}

\newcommand{\Mon}{\mathfrak{M}}

\newcommand{\Z}{\mathbb{Z}}
\newcommand{\N}{\mathbb{N}}

\newcommand{\Zp}{\mathbb{Z}_p}
\newcommand{\abs}[1]{\lvert #1 \rvert}
\newcommand{\norm}[1]{\lVert #1 \rVert}
\newcommand{\floor}[1]{\lfloor #1 \rfloor}
\newcommand{\ceil}[1]{\lceil #1 \rceil}
\newcommand{\vp}{v_p}
\newcommand{\hp}{h_\alpha}
\newcommand{\bfpi}{\boldsymbol{\pi}}

\newcommand{\eps}{\varepsilon}
\newcommand{\F}{\mathbb{F}}
\newcommand{\Cc}{\mathcal{C}}
\newcommand{\Hsh}{H}             

\algrenewcommand\algorithmicrequire{\textbf{Input:}}
\algrenewcommand\algorithmicensure{\textbf{Output:}}

\hypersetup{
  colorlinks=true,
  linkcolor=blue!60!black,
  citecolor=green!50!black,
  urlcolor=blue!60!black
}

\usepackage{algorithm}      
\usepackage{algpseudocode}

\makeatletter
\newenvironment{breakablealgorithm}
  {\begin{center}
     \refstepcounter{algorithm}%
     \hrule height.8pt depth0pt \kern2pt%
     \renewcommand{\caption}[2][\relax]{%
       {\raggedright\textbf{\fname@algorithm~\thealgorithm}\ ##2\par}%
       \ifx\relax##1\relax
         \addcontentsline{loa}{algorithm}{\protect\numberline{\thealgorithm}##2}%
       \else
         \addcontentsline{loa}{algorithm}{\protect\numberline{\thealgorithm}##1}%
       \fi
       \kern2pt\hrule\kern2pt}%
  }
  {\kern2pt\hrule\relax
   \end{center}}
\makeatother

\title{\textbf{Algorithms, Complexity, and Entropy of the\\
Bernard--Letac Fair-Sampling Construction}}

\author{Claude Gravel\\[4pt]
\normalsize Department of Computer Science, Toronto Metropolitan University\\
\normalsize Toronto, Ontario, Canada\\[2pt]
\normalsize ORCID: \href{https://orcid.org/0000-0001-5275-4953}{0000-0001-5275-4953}}
\date{}

\begin{document}
\maketitle

\begin{abstract}
Bernard and Letac (1971) introduced a method for uniform random sampling among $m$ outcomes from an unknown biased source of independent and identically distributed symbols. The process terminates when the multinomial coefficient of the cumulative symbol counts equals zero modulo $m$. This study extends the computational and information-theoretic analysis of their construction by presenting five algorithms with formal correctness guarantees and comprehensive complexity analyses. For prime $m=p$, the Bernard--Letac framework is analyzed in greater detail. The R\'enyi entropies of the source yield an exact product formula for the expected number of draws. A first-order approximation consistently overestimates this value, and the entropy lower bound is never attained. As $p$ approaches 1, the expected cost converges to a constant greater than 1, determined by the entire source distribution. Furthermore, a seven-state automaton computes the mod-2 first-passage kernel of the binary walk, reducing the fair assignment cost from quadratic to nearly linear.
\end{abstract}

\medskip
\noindent\textbf{2020 Mathematics Subject Classification.} Primary 60G40; Secondary 05A10, 11B65, 94A17, 68W20.

\smallskip
\noindent\textbf{Keywords.} Random number generation; stopping times; multinomial coefficients; Kummer's theorem; R\'enyi entropy.

\section{Introduction and Background}
\label{sec:intro}
A central problem in probability and computation concerns the following question. Given access to a biased source of randomness, specifically a device producing independent and identically distributed (i.i.d.) draws from an unknown distribution $\bfpi$ over a finite alphabet, can one generate a fair outcome that is uniform over $m$ possibilities using a finite, albeit random, number of draws? The procedure must succeed for every admissible $\bfpi$, without any prior knowledge of the underlying bias.

The earliest solution to this problem was proposed by von Neumann~\cite{vonneumann}, who in 1951 demonstrated that a fair coin flip can be simulated using a biased coin through a simple rejection scheme. Letting $\pi_0$ and $\pi_1$ denote the probabilities of the two outcomes, each trial of this scheme consumes a pair of draws and succeeds with probability $2\pi_0\pi_1$. Thus, the expected cost is $1/(2\pi_0\pi_1)$ pairs, or equivalently, $1/(\pi_0\pi_1)$ draws. In 1970, Hoeffding and Simons~\cite{hoeffding} revisited the problem, developing improved procedures for the binary case that achieve a smaller expected stopping time and characterizing the class of all valid stopping rules. Notably, they introduced a walk on $\N^2$ tracked by the parity of the binomial coefficient $\binom{t}{k}$, where $t$ is the number of draws made so far and $k$ is the number of those equal to one fixed outcome. This device served as a direct precursor to the Bernard--Letac construction.

In 1971, Bernard and Letac~\cite{bernardletac} extended both previous works in two significant directions. They allowed the alphabet $I$ to be arbitrary, whether finite or countable, and aimed to produce an $m$-way fair partition rather than only a fair binary split. Their principal contribution is a construction based on the set $H_m = \{x \in \Mon : c(x) \equiv 0 \pmod{m}\}$, where $\Mon$ denotes the free abelian monoid on $I$ and $c(x)$ represents the multinomial coefficient. The random walk defined by the cumulative draw counts terminates upon first reaching $H_m$. The authors demonstrate that the paths to any stopping state can be partitioned into exactly $m$ equiprobable classes, regardless of the bias $\bfpi$.

The results of the original paper are primarily combinatorial and number-theoretic, with no consideration of computational or information-theoretic perspectives. The present work addresses these gaps by (1) extracting from the original theorems a family of five distinct algorithms (Section~\ref{sec:algorithms}), each presented formally with correctness guarantees, complexity analyses, and explicit pseudocode; (2) deriving new and precise results regarding the expected complexity of the central fair-sampling algorithm, expressed in terms of R\'enyi and Shannon entropy (Section~\ref{sec:complexity}); and (3) establishing a transfer theorem for the mod-2 first-passage kernel of the binary walk (Section~\ref{sec:transfer}). Additionally, worked examples (Section~\ref{sec:examples}) and numerical illustrations (Section~\ref{sec:numerical}) are provided to concretely demonstrate the construction.

\subsection*{Main results}

The main results of the paper are the four theorems below. Each is proved in Sections~\ref{sec:complexity} and~\ref{sec:transfer}, where the statements are repeated with complete detail. The remaining notation is fixed later in this section.

To state the results, let $\bfpi = (\pi_i)_{i \in I}$ be a probability distribution on a finite or countable alphabet~$I$, non-degenerate in the sense that $\pi_i < 1$ for every~$i$, and let $T$ denote the number of draws the Bernard--Letac sampler consumes when run with modulus~$m$, that is, the first time the walk of cumulative symbol counts enters~$H_m$. Write $\Hsh(\bfpi) = -\sum_i \pi_i \log \pi_i$ for the Shannon entropy of the source and $H_\alpha(\bfpi) = (1 - \alpha)^{-1} \log \sum_i \pi_i^{\alpha}$ for its R\'enyi entropy of order~$\alpha \neq 1$; all logarithms are natural.

The first main theorem gives the entropy form of the expected stopping time.

\begin{mainthm}
\label{mainthm:A}
Let $m = p$ be a prime and set $U_n(\bfpi) = \sum_{i \in I} \pi_i^{p^n}$ for $n \geq 1$, so that $U_1(\bfpi) = e^{(1-p)\,H_p(\bfpi)}$. Then
\[
 E(T) \;=\; p \prod_{n=1}^{\infty} \frac{1 - (U_n(\bfpi))^p}{1 - U_n(\bfpi)}\,,
\]
and the first-order approximation
\[
 E(T) \;\approx\; \frac{p}{1 - e^{(1-p)\,H_p(\bfpi)}}
\]
always strictly overestimates, in the sense that $E(T) < p/(1 - U_1(\bfpi))$ for every non-degenerate~$\bfpi$.
\end{mainthm}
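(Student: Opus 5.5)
The plan is to prove the product formula from an exact expression for the survival probabilities $P(T>t)$ in terms of the base-$p$ digits of $t$. Then $E(T)=\sum_{t\ge 0}P(T>t)$ will factor digit by digit. Write $t=\sum_{n\ge 0} d_n(t)\,p^n$ with $0\le d_n(t)\le p-1$, and set $U_0(\bfpi)=\sum_i\pi_i=1$. The key lemma I aim for is
\[
P(T>t)\;=\;\prod_{n\ge 0} U_n(\bfpi)^{d_n(t)} .
\]
As sanity checks, for $p=2$ it gives $P(T>1)=1$, since one draw never stops, and $P(T>2)=\sum_i\pi_i^2$, since two distinct symbols give $c=2\equiv 0$.

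To prove the lemma I would use Kummer's theorem in its multinomial form: $p\nmid c(x)$ iff adding the coordinates $x_i$ in base $p$ produces no carries. Equivalently, for each position $n$, the $n$-th digits of the $x_i$ sum to $d_n(t)$. I would show by induction on $t$ that the walk survives to time $t$ exactly when the draw sequence has a specific block structure. Write $t=d_Np^N+\dots+d_0$ and cut the first $t$ draws, in time order, into $d_N$ blocks of length $p^N$, then $d_{N-1}$ blocks of length $p^{N-1}$, and so on. Survival should mean that each block consists of a single repeated symbol. In the inductive step, passing from $t$ to $t+1$ triggers a cascade of carries in $t$ itself. The no-carry condition on the new count vector should force the last few small blocks to carry the same symbol, so that they merge into one block of the next size. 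Since a block of length $p^n$ has probability $\sum_i \pi_i^{p^n}=U_n$ and distinct blocks are independent, the lemma follows. I expect this characterization to be the main obstacle. It requires checking carefully that the no-carry condition at every intermediate time, not only at time $t$, gives exactly this merging rule; the failure of downward closure, for instance $(2,1)$ versus $(1,1)$ when $p=2$, shows that the whole path must be tracked.

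Given the lemma, the digit strings $(d_n)$ with only finitely many nonzero entries are in bijection with $t\in\N$. Summing over them gives
\[
E(T)=\prod_{n\ge 0}\sum_{k=0}^{p-1}U_n^k=p\prod_{n\ge1}\frac{1-U_n^p}{1-U_n}.
\]
Here the $n=0$ factor equals $p$. Convergence holds because $U_n\le(\max_i\pi_i)^{p^n-1}$ with $\max_i\pi_i<1$, so $\sum_n U_n<\infty$. Finally, $U_1=e^{(1-p)H_p}$ is immediate from the definition of $H_p$.

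For the strict overestimate, write $t=d_0+p\,k$. The map from $(d_1,d_2,\dots)$ to $k=\sum_{n\ge1}d_np^{n-1}$ is a bijection onto $\N$. This gives
\[
E(T)=p\sum_{k\ge 0}\prod_{n\ge1}U_n^{d_n(pk)} .
\]
The comparison inequality is $U_n\le U_1^{p^{n-1}}$, which follows from $\sum_i a_i^{q}\le(\sum_i a_i)^{q}$ applied with $a_i=\pi_i^p$ and $q=p^{n-1}$. Hence each product is at most $U_1^{k}$, and summing over $k$ gives $E(T)\le p/(1-U_1)$. For strictness, take $k=p$, so that $d_2=1$ and all other $d_n$ vanish. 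That term equals $U_2$, while $U_1^p=(\sum_i\pi_i^p)^p>\sum_i\pi_i^{p^2}$ whenever at least two $\pi_i$ are positive, which is exactly non-degeneracy.
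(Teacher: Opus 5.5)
Your proposal is correct, and it takes a different route from the paper's.

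\textbf{Product formula.} The paper gives no argument of its own here. It imports Corollaire~4 of Bernard--Letac, pointing to their generating-function identity and the tail sum $E(T)=\sum_{t\ge 0}P(T>t)$. Your digit-block description of the survival event supplies this step in a self-contained way. The induction you flag as the main obstacle does close:
\begin{itemize}
\item Under the block structure at time $t$, the level-$n$ base-$p$ digit of $(S_t)_i$ is the number of level-$n$ blocks carrying symbol $i$. No carries arise, because these numbers sum over $i$ to $d_n(t)<p$.
\item Suppose $t$ ends in exactly $j$ digits equal to $p-1$, and the next draw is $i$. Since survival to $t$ gives $v_p\bigl(c(S_t)\bigr)=0$, Legendre's formula gives $v_p\bigl(c(S_t+\eps^i)\bigr)=j-\tau_i$. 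Here $\tau_i\le j$ is the number of trailing digits $p-1$ of $(S_t)_i$.
\item So survival at $t+1$ holds exactly when $i$ fills all $p-1$ blocks at every level below $j$. This is your merging rule.
\end{itemize}

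\textbf{Strict inequality.} Here your route genuinely differs. The paper works algebraically on the product. It telescopes it to
\[
E(T)(1-U_1)/p=\prod_{n\ge 1}(1-U_n^{p})/(1-U_{n+1}),
\]
and uses $U_{n+1}<U_n^{p}$ to place every factor in $(0,1)$. You instead compare tail sums termwise, $P(T>t)\le U_1^{\lfloor t/p\rfloor}$, with strict inequality for $p^2\le t<p^2+p$. Both arguments rest on the same inequality $U_{n+1}\le U_n^{p}$, which is strict under non-degeneracy.

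\textbf{What each buys.} Your route gives a stronger conclusion: $T$ is stochastically dominated by $pG$, with $G$ geometric of success probability $1-U_1$. This is the rigorous content of the renewal heuristic of Remark~\ref{rem:simple-approx}. That heuristic treats each run of $p$ draws as an independent trial surviving with probability $U_1$, and survival at deeper scales is strictly rarer than this. The paper's identity needs only the product formula as a black box. In exchange, it gives the exact size of the overestimate as a product of factors each close to~$1$.
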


The exact product form is due to Bernard and Letac (Corollaire~4 of~\cite{bernardletac}, restated as Theorem~\ref{thm:ET}); the entropy reading, the first-order approximation, and the strict inequality are established in Theorem~\ref{thm:ET-renyi}.

The second main theorem identifies the Shannon limit of the expected cost.

\begin{mainthm}
\label{mainthm:B}
Treat $p > 1$ as a continuous parameter in the product of Theorem~\ref{mainthm:A}. Then $E(T)$ has a finite limit as $p \to 1^+$, namely
\[
 \lim_{p \to 1^+} E(T) \;=\; C(\bfpi) \;=\; \exp\!\left(\int_0^\infty \frac{-\,\varphi(s)\,\log \varphi(s)}{1 - \varphi(s)}\,ds\right),
 \qquad \varphi(s) = \sum_{i \in I} \pi_i^{\,e^s}.
\]
The constant $C(\bfpi)$ is strictly greater than~$1$, depends on the full distribution $\bfpi$ rather than on its Shannon entropy alone, and is not in general equal to $1/\Hsh(\bfpi)$. For the symmetric binary source one has $C(\bfpi) = 2.8457\ldots$, whereas $1/\Hsh(\bfpi) = 1.4427\ldots$.
\end{mainthm}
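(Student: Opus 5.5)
The plan is to take logarithms in the product of Theorem~\ref{mainthm:A} and recognize the resulting series as a Riemann sum. Put $h = \log p > 0$ and observe that $\pi_i^{p^n} = \pi_i^{e^{nh}}$, so $U_n(\bfpi) = \varphi(nh)$. Then
\[
 \log E(T) \;=\; h \;+\; \sum_{n=1}^{\infty} \Bigl[\log\bigl(1-\varphi(nh)^p\bigr) - \log\bigl(1-\varphi(nh)\bigr)\Bigr].
\]
For fixed $x\in(0,1)$, set $\psi_x(q)=\log(1-x^q)$. Its derivative is $\psi_x'(q) = -x^q\log x/(1-x^q)$. By the mean value theorem, each bracket equals $(p-1)\,\psi'_{x}(q_n)$ with $x=\varphi(nh)$ and some $q_n\in(1,p)$. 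At $q=1$ this derivative is exactly $g(s) = -\varphi(s)\log\varphi(s)/(1-\varphi(s))$ evaluated at $s = nh$. Since $(p-1)/h \to 1$ and $h\to 0$, the sum should converge to $\int_0^\infty g(s)\,ds$, and exponentiating gives $C(\bfpi)$.

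To justify the limit I will establish three facts about $g$ and its perturbations.

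\emph{Continuity and boundedness of $g$.} The function $\varphi$ is continuous and strictly decreasing on $[0,\infty)$ with $\varphi(0)=1$; here non-degeneracy ensures that at least two $\pi_i$ are positive. Since $-\log x/(1-x)\to 1$ as $x\to 1^-$, $g$ extends continuously with $g(0)=1$.

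\emph{Integrable envelope.} With $\pi_{\max} = \max_i \pi_i < 1$, we have
\[
 \varphi(s) \le \pi_{\max}^{e^s-1}\sum_i \pi_i = \pi_{\max}^{e^s-1}.
\]
So $\varphi$ decays doubly exponentially. Moreover $x\mapsto -x^q\log x/(1-x^q)$ is bounded on $(0,1)$, uniformly for $q\in[1,2]$, and is $O(x\log(1/x))$ near $0$. This gives a single envelope $G(s) = c\,(1+s)e^{s}\pi_{\max}^{e^s-1}$, up to adjusting constants, with $|\psi'_{\varphi(s)}(q)|\le G(s)$ for all $q\in[1,2]$. The envelope $G$ is integrable and eventually decreasing.

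\emph{Uniform convergence.} We have $\psi'_x(q)\to\psi'_x(1)$ uniformly on compact $s$-ranges as $q\to 1$.

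With these, split the sum at $n h \le S$ and $nh>S$. The head is a Riemann sum of a continuous function with uniformly vanishing perturbation, so it tends to $\int_0^S g$. The tail is bounded by $\int_{S-1}^\infty G$, which is small for $S$ large. The prefactor $p$ tends to $1$. This yields the displayed limit.

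The main obstacle I expect is this dominated-convergence bookkeeping, specifically the uniform envelope near $s=0$, where both numerator and denominator vanish. I would handle it through the explicit bound $-\log x\le (1-x)/x$. That bound gives
\[
 -x^q\log x/(1-x^q) \le x^{q-1}(1-x)/(1-x^q) \le 1
\]
for $q\ge 1$, which is clean.

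The remaining claims are short.
\begin{itemize}[leftmargin=*]
\item $C(\bfpi)>1$ because $g>0$ on $(0,\infty)$.
\item For the symmetric binary source, $\varphi(s) = 2^{1-e^s}$. Numerical quadrature, after substituting $u=e^s$ to remove the double exponential, gives $C = 2.8457\ldots$, while $1/\Hsh = 1/\log 2 = 1.4427\ldots$. This already shows $C\neq 1/\Hsh$ in general.
\item To show that $C$ is not a function of $\Hsh$ alone, I would exhibit two distributions with equal Shannon entropy but different R\'enyi profiles. Natural candidates are a ternary and a binary source tuned to the same $\Hsh$. Their functions $\varphi$ differ, and numerical evaluation of the integral should give different values of $C$.
\end{itemize}
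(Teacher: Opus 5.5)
Your proposal is correct and follows essentially the paper's route: take logarithms, write $U_n = \varphi(n\log p)$, linearize each factor in $p-1$ so that the integrand $-\varphi\log\varphi/(1-\varphi)$ appears, and pass to the limit of a Riemann sum with mesh $\log p$, using the doubly exponential decay of $\varphi$ for the tail and positivity of the integrand for $C(\bfpi)>1$. The differences are in the bookkeeping and in what is left numerical. Your mean-value step in the exponent, the bound $\psi'_x(q)\le 1$ (indeed $\psi'_x(q)\le\psi'_x(1)$ for $q\ge 1$), and the head/tail split replace the paper's uniform $O(\eps^2)$ remainder and its count of non-negligible terms, which is if anything a cleaner justification. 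As in the paper, the value $2.8457\ldots$ and the claim that $C(\bfpi)$ is not a function of $\Hsh(\bfpi)$ alone rest on numerical evaluation: the paper only asserts the latter, and you only sketch how to exhibit it.
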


This is proved as Theorem~\ref{thm:shannon-limit}.

The third main theorem gives the information-theoretic lower bound and describes the efficiency of the sampler.

\begin{mainthm}
\label{mainthm:C}
Every procedure that observes the source at a stopping time~$T$ and returns an exactly uniform outcome among $m$ alternatives for every non-degenerate~$\bfpi$ satisfies
\[
 E(T) \;\geq\; \frac{\log m}{\Hsh(\bfpi)}\,.
\]
The inequality is never attained. For the Bernard--Letac sampler with prime modulus~$p$, define the efficiency $\eta(p) = \log p/(E(T)\,\Hsh(\bfpi))$. Then $\eta(p)$ tends to~$0$ both as $p \to 1^+$ and as $p \to \infty$, and it attains its maximum at an interior point $p^\star(\bfpi) \in (1, \infty)$.
\end{mainthm}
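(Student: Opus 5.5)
The plan is to prove the lower bound by an entropy-chain argument, obtain strictness from the requirement that uniformity hold for every bias $\bfpi$, and then analyse $\eta(p)$ using the product formula of Theorem~\ref{mainthm:A} together with the limit of Theorem~\ref{mainthm:B}.

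\textbf{Lower bound.} Let $X_1, X_2, \dots$ be the i.i.d.\ draws and let $W = (X_1,\dots,X_T)$ be the stopped word. If $E(T)=\infty$ there is nothing to prove, so assume $E(T)<\infty$. The returned outcome $Y$ is a function of $W$, possibly with private randomness. If the procedure is allowed extra fair coins, the bound is false, so I will restrict to deterministic functions of the observed word, which is the setting of the construction. Then
\[
\log m \;=\; H(Y) \;\le\; H(W) \;=\; E(T)\,\Hsh(\bfpi).
\]
The last equality is the entropy form of Wald's identity for a stopped i.i.d.\ sequence. I would prove it by writing $H(W)=\sum_{t\ge1} P(T\ge t)\,H(X_t \mid T\ge t, X_1,\dots,X_{t-1})$. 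The event $\{T\ge t\}$ is determined by $X_1,\dots,X_{t-1}$ and $X_t$ is independent of that past, so each conditional entropy equals $\Hsh(\bfpi)$. Summing gives $\sum_t P(T\ge t)\,\Hsh(\bfpi)=E(T)\,\Hsh(\bfpi)$. For a countable alphabet this requires $\Hsh(\bfpi)<\infty$; if $\Hsh(\bfpi)=\infty$ the bound is trivial.

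\textbf{Never attained.} Equality forces $H(W\mid Y)=0$, so $W$ is a function of $Y$. Hence there are at most $m$ stopping words, each occurring with probability exactly $1/m$. The probability of a word with letter counts $x$ is the monomial $\prod_i \pi_i^{x_i}$. A non-constant monomial cannot equal $1/m$ for every non-degenerate $\bfpi$. A constant monomial would be the empty word, which has probability $1$, not $1/m$. Both cases are contradictions, so equality never holds. The point is that exactness must hold for all $\bfpi$ simultaneously.

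\textbf{Behaviour of $\eta(p)$.}
\begin{itemize}[leftmargin=*]
\item \emph{As $p\to1^+$.} Theorem~\ref{mainthm:B} gives $E(T)\to C(\bfpi)\in(1,\infty)$, while $\log p\to0$. Hence $\eta(p)\to0$.
\item \emph{As $p\to\infty$.} Each factor satisfies $(1-U_n^p)/(1-U_n)=1+U_n+\dots\ge1$; for non-integer $p$ it still exceeds $1$ because $U_n^p\le U_n$. Therefore $E(T)\ge p$ and $\eta(p)\le \log p/(p\,\Hsh(\bfpi))\to0$.
\item \emph{Interior maximum.} $\eta$ is positive and continuous on $(1,\infty)$ and tends to $0$ at both ends. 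By a compactness argument on some $[1+\delta,M]$ it attains its maximum at an interior point $p^\star(\bfpi)$.
\end{itemize}

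\textbf{Main obstacle.} The delicate step is continuity of the infinite product in the continuous parameter $p$. I would obtain it by bounding $U_n(\bfpi)\le (\max_i\pi_i)^{p^n-1}$, using $\sum_i\pi_i^{q}\le \max_i\pi_i^{\,q-1}$. On any compact subset of $(1,\infty)$ this gives a uniformly summable bound on $\log\bigl((1-U_n^p)/(1-U_n)\bigr)\le -\log(1-U_n)$. The logarithm of the product then converges uniformly, so it is continuous, and each factor is continuous in $p$.

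A secondary point is the Wald-type entropy identity when $T$ is unbounded. It follows from monotone convergence applied to the truncations $T\wedge N$.
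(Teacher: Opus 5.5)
Your lower bound and your analysis of $\eta(p)$ are correct and follow the paper's route. Incidentally, your bound $E(T)\ge p$ already gives $\eta(p)\le \log p/(p\,\Hsh(\bfpi))\to 0$ at \emph{both} ends, so Theorem~\ref{mainthm:B} is not needed.

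The gap is in ``never attained.'' Assuming equality at some $\bfpi^0$, you correctly obtain that under $P_{\bfpi^0}$ the stopped word $W$ takes $m$ distinct values $w_1,\dots,w_m$, each with probability $1/m$. You then use $P_{\bfpi}(W=w_j)=1/m$ for \emph{every} $\bfpi$, and nothing justifies this: $H(W\mid Y)=0$ is a statement about the single law $P_{\bfpi^0}$. What transfers to other $\bfpi$ is only a set inclusion. The event $\{T=t_j\}\in\mathcal{F}_{t_j}$ meets the cylinder $[w_j]$ in a set of positive $P_{\bfpi^0}$-probability, so it contains all of $[w_j]$. Likewise $[w_j]\subseteq\{Y=j\}$. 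Hence
\[
\bfpi^{x_j} \;\le\; P_{\bfpi}(Y=j) \;=\; \frac{1}{m}\qquad\text{for every admissible }\bfpi,
\]
with equality at $\bfpi^0$. For $\bfpi\neq\bfpi^0$, other stopping words may also land in $\{Y=j\}$. An example is words containing letters on which $\bfpi^0$ puts zero mass, which is allowed since non-degeneracy only requires $\pi_i<1$. Your argument is valid only when $\bfpi^0$ charges every letter and $T<\infty$ a.s.\ for all $\bfpi$, so that $w_1,\dots,w_m$ exhaust the stopping words. In general you have only an inequality with equality at one point, and that is compatible with a non-constant monomial (e.g.\ $\pi_0\pi_1\le\tfrac14$, with equality at the fair coin). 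So ``a non-constant monomial cannot equal $1/m$ for every $\bfpi$'' does not close the argument.

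The missing idea is to read this inequality as a maximization statement, as the paper does. No $w_j$ can use a single letter, since its monomial tends to $1$ as that letter's probability tends to $1$. So the maximizer $x_j/t_j$ of $\bfpi\mapsto\bfpi^{x_j}$ on the simplex is admissible. This gives $1/m=(\bfpi^0)^{x_j}\le\max_{\bfpi}\bfpi^{x_j}\le 1/m$, and uniqueness of the maximizer (Gibbs) forces $\bfpi^0=x_j/t_j$ for every $j$. Hence all $m$ distinct words have the same length $t$ and the same type $x=t\,\bfpi^0$, so $m\le c(x)$. But $c(x)/m=c(x)(\bfpi^0)^x=P_{\bfpi^0}(S_t=x)<1$, because the pure type $t\,\eps^i$ of any letter $i$ charged by $\bfpi^0$ also has positive probability. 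This contradiction is what rules out equality.
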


The lower bound is proved as Theorem~\ref{thm:lower-bound} by a Wald-type entropy identity, its strictness is Proposition~\ref{prop:lower-bound-strict}, and the behavior of the efficiency is Corollary~\ref{cor:efficiency}.

The fourth main theorem states, in informal form, the transfer theorem for the binary walk.

\begin{mainthm}
\label{mainthm:D}
Let $m = 2$ and let the alphabet be binary. For a stopping state $z \in H_2$ and a lattice point $v \leq z$, let $\varphi(v,z)$ count the monotone lattice paths from $v$ to $z$ whose points other than $z$ avoid~$H_2$. An explicit automaton with seven states reads the binary digits of the four coordinates of $(v, z)$, one digit column per step from the least significant position upward, and computes the parity of $\varphi(v,z)$. Consequently the exact fair group assignment of the sampler is carried out in $O(T \log T)$ bit operations, in place of the $O(T^2)$ bit operations of the direct dynamic program (Proposition~\ref{prop:per-step}) run modulo~$2$.
\end{mainthm}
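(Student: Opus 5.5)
We would build the seven-state automaton from Lucas's theorem together with a first-step recursion for $\varphi$. For binary words, $(a,b)\in H_2$ exactly when $\binom{a+b}{a}$ is even. By Lucas (equivalently Kummer with $p=2$), this happens iff the binary expansions of $a$ and $b$ share a $1$ in some column, that is, iff $a \,\&\, b \neq 0$. So the complement of $H_2$ is the set of pairs whose bit supports are disjoint. A monotone path from $v=(v_1,v_2)$ to $z=(z_1,z_2)$ avoiding $H_2$ before $z$ therefore stays inside the carry-free region $\{(a,b): a\,\&\,b=0\}$ and then takes one final step into $z$.

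Modulo $2$, we would write
\[
\varphi(v,z)\equiv \psi(v,z-e_1)\,[z-e_1\notin H_2]+\psi(v,z-e_2)\,[z-e_2\notin H_2] \pmod 2,
\]
where $\psi(v,w)$ counts monotone paths from $v$ to $w$ inside the carry-free region. The central structural claim is that $\psi(v,w) \bmod 2$ factorizes column by column. Inside the region, a step $a\mapsto a+1$ is legal only if the carry chain it triggers never lands on a bit occupied by $b$. We would show by induction on the number of columns that the set of legal paths between two carry-free points has a product structure over the bit columns, modulo the carry propagation between adjacent columns. Concretely, we would check that the number of such paths, reduced mod $2$, equals a product of local factors, each depending on the bits $(v_1,v_2,w_1,w_2)$ in one column and on a finite carry/borrow state passed upward.

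Collecting the possible carry states (no pending carry in either coordinate, pending carry in coordinate $1$ or $2$, ``already zero'', and the equality, boundary and final-step flags needed for the two terms of the recursion) should give at most seven reachable states once unreachable and equivalent ones are merged. We would enumerate the $16$ possible input digit columns for each state and verify the transitions and the accepting condition directly. This reading is least significant bit first, as stated in Theorem~\ref{mainthm:D}.

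The complexity claim then follows from a counting argument. The fair group assignment needs $\varphi(v,z) \bmod 2$ for the $O(T)$ points $v$ on the realized path, each with coordinates of $O(\log T)$ bits. Each automaton evaluation costs $O(\log T)$ bit operations, giving $O(T\log T)$ in total, against the $O(T^2)$ table of Proposition~\ref{prop:per-step}.

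The main obstacle is the factorization step. Paths in the carry-free region do not obviously decompose by columns, because increments propagate carries. We would first try to prove it through an involution argument: pair up paths that differ by swapping the order of two commuting steps whenever both orders stay carry-free. This should cancel everything mod $2$ except a rigid family of paths that is determined digitwise. If that fails, the fallback is to compute $\varphi \bmod 2$ for all $z$ up to $2^5$, minimize the resulting automaton by Myhill--Nerode, and then prove the recursion for the minimal automaton by induction on the top bit.
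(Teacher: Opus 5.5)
Your guessed state inventory is close to the paper's: $K_0$, the carry-debt kernels $K_5,K_2$ (namely $\mathbf{1}[v\in\Cc]\,\kappa(v+\eps^0,z)$ and its mirror), the dead state $K_1$, and the diagonal states $K_3,K_4,K_6$. There is, however, a genuine gap exactly where you locate the obstacle. Closure of this family under the sixteen digit sections is asserted, not proved, and the mechanism you propose cannot supply it.

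The involution ``swap two commuting steps whenever both orders stay carry-free'' is essentially empty. Suppose $u$, $u+\eps^0$, $u+\eps^1$ all lie in $\Cc$. If $u$ had odd coordinate sum, one of $u+\eps^0,u+\eps^1$ would have both coordinates odd and so lie in $H_2$; hence $u$ has both coordinates even. Then $u+(1,1)$ has both coordinates odd and lies in $H_2$. So no adjacent transposition keeps a path inside $\Cc$, and the only swappable pair is a terminal mixed block. Non-adjacent swaps are neither canonically an involution nor carry-free.

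Nor does anything cancel down to a ``rigid family''. The key fact in the paper is the exact integer identity $\varphi(2v,2z)=\varphi(v,z)$ (Lemma~\ref{lem:aligned}). It holds because the lowest bit forces every avoiding path from an aligned point to move in two-step blocks (Lemma~\ref{lem:blocks}). Pure blocks $2u\to 2(u+\eps^i)$ biject with unit steps at half scale (Lemma~\ref{lem:doubling}). Mixed blocks land on a doubly-odd point of $H_2$, so they can only be terminal.

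The transition table comes from four ingredients:
\begin{itemize}
\item this self-similarity;
\item the half-block starts $2v+\eps^i$ (Lemma~\ref{lem:halfblock}), which produce the carry-debt states;
\item the doubly-odd starts (Lemma~\ref{lem:oddodd});
\item the order swap of the terminal mixed block, which is the single non-bijective step.
\end{itemize}
In your last-step recursion, that swap appears as $\psi(v,z-e_1)=\psi(v,z-e_2)$ for $z=2z'+(1,1)$ and $v\notin\{z-e_1,z-e_2\}$. Your fallback of tabulating up to $2^5$ and minimizing only produces a candidate automaton. Certifying it for all inputs still requires precisely this section-closure argument.

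The complexity step also omits an ingredient. Algorithm~\ref{alg:fairsample} assigns the group by the block rule $\floor{r\,m/A(\mathbf{0})}+1$, which needs the exact integer rank $r$ and $m_E(S_T)$. Parities of $\varphi$ give only $r \bmod 2$. You need the residue rule of Lemma~\ref{lem:residue}: the avoiding paths to $x$ carry the consecutive ranks $0,\dots,m_E(x)-1$ and $m\mid m_E(x)$, so the output $J=(r\bmod m)+1$ is still exactly fair.

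Also, the kernel is evaluated not at points of the realized path but at the branch points $S_{t-1}+\eps^0$ of the steps with $a_t=1$, as in \eqref{eq:rank-kernel}. The convention $\kappa(v,z)=0$ for $v\in H_2\setminus\{z\}$ matches $A(v)=0$ there. With these two additions, your count of $O(T)$ evaluations at $O(\log T)$ each is the paper's.
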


The automaton and its transition table are constructed in Section~\ref{sec:transfer}; the precise statement is Theorem~\ref{thm:transfer}, and the complexity consequence is Corollary~\ref{cor:P1-binary}.

\subsection*{Related work}

In addition to the direct progression from Von Neumann through Hoeffding--Simons to Bernard--Letac, several subsequent developments contextualize the present results. Elias~\cite{elias} developed an asymptotically optimal block procedure whose efficiency approaches~$1$ as the block length increases, so that its output rate approaches $\Hsh(\bfpi)/\log m$ outcomes per draw; the price of this optimality is unbounded delay. In contrast, the Bernard--Letac algorithm operates online and sequentially, prioritizing immediacy over efficiency. Peres~\cite{peres} demonstrated that iterating Von Neumann's procedure yields, in the limit, a number of unbiased bits arbitrarily close to the entropy bound, so that its efficiency tends to~$1$. The relationship between iterated extraction and the $p$-adic limit $L(x)$ of Algorithm~\ref{alg:padic} remains an open problem of interest. The Knuth--Yao framework~\cite{knuthyao} establishes the minimum expected number of fair coin flips required to simulate any distribution, leading to the lower bound $E(T) \geq \log m / \Hsh(\bfpi)$ stated in Theorem~\ref{thm:lower-bound}. The Bernard--Letac construction extends this framework to arbitrary source alphabets and moduli. Lumbroso~\cite{lumbroso} provided an optimal implementation of the Knuth--Yao discrete distribution generating (DDG) tree for uniform distributions (the ``Fast Dice Roller''), which recycles entropy from rejections and serves as the natural benchmark for competing uniform samplers. The composite-modulus comparison in Section~\ref{sec:benchmark} employs the Von Neumann plus Lumbroso method as the strongest baseline.

From a technical perspective, the combinatorial foundation of the paper, specifically the congruences of multinomial coefficients modulo prime powers, is explored in detail by Fray~\cite{fray}, Granville~\cite{granville}, and Deutsch--Sagan~\cite{deutschsagan}. The $p$-adic convergence result of Algorithm~\ref{alg:padic} is connected to the theory of $p$-adic $\Gamma$-functions, as discussed by Morita~\cite{morita}; standard background on $p$-adic numbers and valuation theory is given by Bachman~\cite{bachman}. The emergence of R\'enyi entropy as the natural complexity parameter for algorithms involving $q$-th power sums is well documented; see the survey by Acz\'el and Dar\'oczy~\cite{aczel}. Theorem~\ref{thm:ET-renyi} appears to represent the first instance of this connection within the context of fair sampling.

\subsection*{Notation and background}

Throughout, let $I$ denote a finite or countable alphabet and $\Mon$ the free abelian monoid generated by $I$, identified with the set of finitely supported sequences of non-negative integers $(x_i)_{i \in I}$. The notation $\abs{x} = \sum_i x_i$ is used, and the \emph{multinomial coefficient} is defined as follows.
\[
 c(x) = \frac{\abs{x}!}{\prod_i x_i!}, \qquad x \in \Mon.
\]
We fix a prime~$p$. For a non-negative integer $n$, we write $n = \sum_{\alpha \geq 0} h_\alpha(n)\,p^\alpha$ for its base-$p$ expansion with $0 \leq h_\alpha(n) < p$, and define
\[
 \vp(n) = \inf\{\alpha : h_\alpha(n) > 0\}, \quad
 d(n) = \sup\{\alpha : h_\alpha(n) > 0\}, \quad
 \norm{n} = \sum_{\alpha \geq 0} h_\alpha(n),
\]
with $\vp(0) = +\infty$ and $d(0) = -\infty$. For $x \in \Mon$ one extends these componentwise: $h_\alpha(x) = (h_\alpha(x_i))_{i\in I}$, $\vp(x) = \inf_i \vp(x_i)$, $d(x) = \sup_i d(x_i)$, and $\norm{x} = \norm{\abs{x}}$. The monoid $\Mon$ carries the componentwise partial order, in which $x \leq y$ means $x_i \leq y_i$ for every $i \in I$.

A probability distribution on~$I$ is a vector $\bfpi = (\pi_i)_{i \in I}$ with $\pi_i \geq 0$, $\sum_i \pi_i = 1$, and $\pi_i < 1$ for all~$i$ (the non-degenerate condition). We write $\bfpi^x = \prod_i \pi_i^{x_i}$ for the monomial associated with $x \in \Mon$.

A comprehensive development of the Bernard--Letac theory, including all proofs, is provided in~\cite{bernardletac}. For convenience, the three principal results underpinning the algorithms of Section~\ref{sec:algorithms} are restated here without proof.

Let $(I, \bfpi)$ be a non-degenerate source. Form the product space
\begin{displaymath}
(\Omega, \mathcal{F}, P) = \prod_{t=1}^{\infty}(I, \mathcal{P}(I), \bfpi),
\end{displaymath}
with canonical projections $a_t : \Omega \to I$ and natural filtration $(\mathcal{F}_t)_{t \geq 0}$. Define the $\Mon$-valued random walk by $S_0 = 0$ and $S_t = \sum_{s=1}^t \eps^{a_s}$, where $\eps^i \in \Mon$ is the $i$-th standard unit vector, the indicator vector of symbol~$i$. Then $S_t = (n_i^{(t)})_{i\in I}$ records the
empirical count of each symbol after $t$ draws, and
\[
 P(S_t = x) = c(x)\,\bfpi^x \quad \text{for all } x \in \Mon \text{ with } \abs{x} = t.
\]

For a set $E \subset \Mon$ define the hitting time $T(E) = \inf\{t \geq 1 : S_t \in E\}$. Let $\mathfrak{A}_m$ be the class of stopping times~$T$ such that there exists an $\mathcal{F}_T$-measurable partition $(A_1, \ldots, A_m)$ of $\Omega$ with $P(A_j) = 1/m$ for all~$j$ and all admissible~$\bfpi$.

A central quantity is $m_E(x)$, the number of paths from $0$ to $x$ in $\Mon$ that do not touch $E \setminus \{x\}$.

The following fundamental theorem is due to Bernard and Letac; see \S2 of~\cite{bernardletac}.

\begin{theorem}
\label{thm:BL-fundamental}
For $E \subset \Mon$, the following are equivalent:
\begin{enumerate}[label=\emph{(\alph*)}]
 \item $m_E(x) \equiv 0 \pmod{m}$ for all $x \in E$;
 \item $c(x) \equiv 0 \pmod{m}$ for all $x \in E$;
 \item $m_E(x) \equiv c(x) \pmod{m}$ for all $x \in \Mon$.
\end{enumerate}
\end{theorem}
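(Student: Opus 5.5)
The plan is to prove everything from a single path-decomposition identity, valid for any $E \subset \Mon$ and any $x \in \Mon$:
\[
 c(x) \;=\; m_E(x) \;+\; \sum_{\substack{y \in E,\; y \le x,\; y \neq x}} m_E(y)\, c(x - y). \tag{$\ast$}
\]
Here $c(x)$ counts all monotone paths from $0$ to $x$, since the number of such paths is exactly the multinomial coefficient. To prove $(\ast)$, split the paths by their first visit to $E$ at a point other than $x$.

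\begin{itemize}
 \item If there is no such visit, the path avoids $E\setminus\{x\}$ and is counted by $m_E(x)$.
 \item Otherwise, let $y$ be the first point in $E\setminus\{x\}$ that the path visits. The initial segment from $0$ to $y$ touches $E$ only at $y$, so there are $m_E(y)$ choices for it. The remainder is an arbitrary monotone path from $y$ to $x$, so there are $c(x-y)$ choices for it.
\end{itemize}

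The map from a path to (initial segment, remainder) is a bijection onto these pairs, so $(\ast)$ holds. Two conventions need care: whether $0$ counts as a visit, and that $m_E$ is defined for every $x$, not only $x \in E$. I will take paths starting at $0$ and count hits at times $t \ge 1$. If $0 \in E$, then $0$ is simply never a hitting point, which matches $T(E) = \inf\{t\ge1\}$.

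With $(\ast)$ in hand, the three implications are as follows.

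\begin{itemize}
 \item[(a)$\Rightarrow$(c)] Reduce $(\ast)$ mod $m$. Every term in the sum has a factor $m_E(y)$ with $y \in E$, which is $\equiv 0$ by (a). Hence $c(x) \equiv m_E(x)$ for all $x$.
 \item[(c)$\Rightarrow$(b)] For $x \in E$, write $(\ast)$ as $c(x) - m_E(x) = \sum_{y} m_E(y) c(x-y)$. I would rather not use this form directly. The cleaner route is to go (c)$\Rightarrow$(a) and then (a)$\Rightarrow$(b) as below.
 \item[(c)$\Rightarrow$(a)] Argue by strong induction on $\abs{x}$ over $x \in E$. Suppose $m_E(y) \equiv 0$ for all $y\in E$ with $\abs{y}<\abs{x}$. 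Then $(\ast)$ gives $c(x) \equiv m_E(x)$ trivially, so the induction has to use (c) in another way. Instead, apply (c) at $x$ and subtract it from $(\ast)$. This gives $\sum_{y} m_E(y)c(x-y) \equiv 0$ for every $x$, which is not immediately termwise, so it does not close the argument.
\end{itemize}

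The step (c)$\Rightarrow$(a) is therefore the main obstacle, and I would handle it with a generating-function or Möbius inversion argument. Introduce the sums over the degree-$t$ layer,
\[
 F = \sum_x c(x) X^x, \qquad M = \sum_x m_E(x) X^x, \qquad G = \sum_{y \in E} m_E(y) X^y,
\]
in the monomial algebra. Then $(\ast)$ reads $F = M - G + G F$, where the $-G$ term removes the $y=x$ term from the sum. Since $F = (1 - \sum_i X_i)^{-1}$, we get $M = F - G(F-1) = F - G\sum_i X_i F$. Under (c), $M \equiv F$ mod $m$, so $G \sum_i X_i F \equiv 0$. Multiplying by the invertible $1-\sum_i X_i$ gives $G \sum_i X_i \equiv 0$ mod $m$ coefficientwise. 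Now look at the minimal-degree nonzero coefficient of $G$ mod $m$, at a point $y$. The product with $X_{i_0}$ for a fixed $i_0$ has, at $y+\eps^{i_0}$, contributions only from $G$-coefficients at points $y+\eps^{i_0}-\eps^j$. All of these have the same degree as $y$. Choosing $y$ lexicographically extremal in the $i_0$ coordinate among the minimal-degree points isolates a single term, which must then vanish, a contradiction. Hence $G\equiv 0$, which is exactly (a).

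Finally, (a)$\Leftrightarrow$(b) given (c): for $x\in E$, (c) gives $c(x)\equiv m_E(x)$, and (a) and (b) then coincide. Combining the steps, (a)$\Rightarrow$(c), (c)$\Rightarrow$(a), and (a)$\Leftrightarrow$(b) under (c) close the cycle. For (b)$\Rightarrow$(a), run strong induction on $\abs{x}$ over $x \in E$ using $(\ast)$. At $x\in E$, every $y$ in the sum is in $E$ with smaller degree, so its term vanishes by the inductive hypothesis. That leaves $m_E(x)\equiv c(x)\equiv 0$ by (b).
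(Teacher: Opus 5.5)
The paper gives no proof of this statement. It quotes it from \S2 of Bernard--Letac and relies on it essentially only through (b)$\Rightarrow$(a), in the fairness proof of Theorem~\ref{thm:alg2-correct}. Your argument therefore has to stand on its own, and its core does.

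\begin{itemize}
\item The first-passage identity $(\ast)$ is a genuine bijection.
\item (a)$\Rightarrow$(c) and the induction for (b)$\Rightarrow$(a) follow from $(\ast)$ exactly as you say.
\item For (c)$\Rightarrow$(a), you reduce to $G\cdot\sum_i X_i \equiv 0 \pmod m$. You then take, among the lowest-degree points of the support of $G \bmod m$, a point $y$ with maximal $i_0$-coordinate. For that $y$ the coefficient of $X^{y+\eps^{i_0}}$ reduces to $m_E(y)$ alone.
\item This step is valid for composite $m$ as well. There $(\Z/m\Z)[[X]]$ is not a domain, so you cannot simply cancel $\sum_i X_i$, and this extremal choice is the one idea needed beyond $(\ast)$.
\end{itemize}

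Degree by degree, the step amounts to applying (c) one level \emph{above} the point, at $x+\eps^{i_0}$. That is why your first attempt, which applied (c) at $x$ itself, gave nothing. Delete the abandoned bullets. The finished cycle is (a)$\Rightarrow$(c)$\Rightarrow$(a), then (b)$\Rightarrow$(a), and finally (a) together with (c) gives (b).

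The one thing wrong as written is your convention for $0\in E$. Suppose visits are counted only at times $t\ge 1$.
\begin{itemize}
\item Then $(\ast)$ fails when $0\in E$. Its sum contains the term $y=0$, worth $m_E(0)\,c(x)=c(x)$, although under that convention no path has a first visit at $0$. The identity $F=M-G+GF$ fails for the same reason, since $G$ contains $m_E(0)X^{0}$.
\item Worse, the theorem itself becomes false. Take $E=\{0\}$ and $m\ge 2$. Your convention gives $m_E=c$ on all of $\Mon$, so (c) holds. But $m_E(0)=c(0)=1$, which violates both (a) and (b).
\end{itemize}
Your generating-function step only \emph{proves} (a) in that case because it silently switches conventions.

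Use the literal reading instead, in which the starting point belongs to the path. Then $0\in E$ forces $m_E(x)=0$ for every $x\neq 0$, while $m_E(0)=1$. Both $(\ast)$ and $F=M-G+GF$ hold for every $E$; for $x\neq 0$ the whole count sits in the term $y=0$. All three conditions then fail together when $0\in E$ and $m\ge 2$; condition (c) already fails at $x=\eps^i$. Alternatively, restrict to $0\notin E$, which costs nothing for $E=H_m$.
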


The next theorem, also from \S2 of~\cite{bernardletac}, identifies the stopping set used throughout this paper.

\begin{theorem}
\label{thm:BL-Hm}
Let $H_m = \{x \in \Mon : c(x) \equiv 0 \pmod{m}\}$. Then $T(H_m) \in \mathfrak{A}_m$, i.e., the stopping rule based on $H_m$ produces a fair $m$-way partition.
\end{theorem}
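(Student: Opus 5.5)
The plan is to deduce Theorem~\ref{thm:BL-Hm} from Theorem~\ref{thm:BL-fundamental} applied with $E = H_m$, and then to build the partition explicitly from path classes. By definition every $x \in H_m$ satisfies $c(x) \equiv 0 \pmod m$, so condition (b) holds. The implication (b)$\Rightarrow$(a) then gives $m_{H_m}(x) \equiv 0 \pmod m$ for every $x \in H_m$. Here $m_{H_m}(x)$ is exactly the number of lattice paths from $0$ to $x$ along which the walk first enters $H_m$ at $x$.

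Next I check that $T = T(H_m)$ is almost surely finite, so that the events $\{S_T = x\}$, $x \in H_m$, partition $\Omega$ up to a null set. For prime $m = p$, Kummer's theorem suffices: after $p$ consecutive draws of the same symbol the relevant coefficient acquires a carry. Cleaner still, whenever $\abs{x} = p^k$ and no $x_i = p^k$, we have $p \mid c(x)$. Non-degeneracy makes the probability that all first $p^k$ draws coincide equal to $\sum_i \pi_i^{p^k}$, which tends to $0$. For general $m$ the same argument applies with $\abs{x}$ a suitable multiple, namely $m$ raised to a large power: take a prime divisor $q$ of $m$ and lengths $\abs{x} = m^k$. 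I expect this finiteness step to be the main technical point, since it is where the non-degeneracy hypothesis $\pi_i < 1$ enters.

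To construct the partition, fix $x \in H_m$. Every path from $0$ to $x$ of length $\abs{x}$ has the same probability $\bfpi^x$, independent of the path. Since $m_{H_m}(x) = m\,k_x$ for some integer $k_x$, I split the set of first-passage paths ending at $x$ into $m$ groups of $k_x$ paths each, using any fixed rule that does not depend on $\bfpi$. I then let $A_j$ be the event that the observed path up to time $T$ lies in group $j$ for its endpoint $S_T$. Each $A_j$ is $\mathcal{F}_T$-measurable, because it is determined by $(a_1,\dots,a_T)$. The probability computation is
\[
P(A_j) = \sum_{x \in H_m} k_x\,\bfpi^x = \frac{1}{m}\sum_{x\in H_m} m_{H_m}(x)\,\bfpi^x = \frac{1}{m} P(T<\infty) = \frac{1}{m}.
\]
This holds for every admissible $\bfpi$, so $T(H_m) \in \mathfrak{A}_m$.
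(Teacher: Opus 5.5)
\textbf{The fairness half is fine; the termination step has a gap for non-prime $m$.} Your fairness argument is the paper's own, given in the proof of Theorem~\ref{thm:alg2-correct}. It applies (b)$\Rightarrow$(a) of Theorem~\ref{thm:BL-fundamental} with $E = H_m$, splits the first-passage paths to each $x \in H_m$ into $m$ groups by a $\bfpi$-independent rule, and sums over stopping states.

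For almost sure termination the paper simply cites Theorem~4 of Bernard--Letac. You instead try to prove it directly. Your prime case is correct: two nonzero summands cannot add to the one-digit number $p^k$ without a carry, so $P(T > p^k) \le \sum_i \pi_i^{p^k} \to 0$.

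The extension to general $m$, however, is false. It is not true that every state at level $m^k$ with two nonzero coordinates lies in $H_m$:
\begin{enumerate}
\item the paper itself notes that $(4,2)$ at level $6$ has $c = 15$;
\item $(2,2)$ at level $4$ has $c = 6 \not\equiv 0 \pmod 4$;
\item $(32,4)$ at level $36 = 6^2$ has $\binom{36}{4}$ odd, because $32+4$ is carry-free in base $2$;
\item for $m = p^a$ with $a \geq 2$, the state $\bigl((p-1)p^{ak-1},\, p^{ak-1}\bigr)$ at level $p^{ak}$ has exactly one base-$p$ carry, so it is never in $H_{p^a}$.
\end{enumerate}
Divisibility by a single prime divisor $q$ of $m$ does not help, since membership in $H_m$ requires $v_{p_j}(c(x)) \geq a_j$ for every $j$ simultaneously. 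Your final equality uses $P(T<\infty) = 1$, so the proof as written is incomplete for every non-prime $m$, prime powers included. Your first heuristic, that $p$ equal draws force a carry, is also wrong as stated: from $\mathbf{0}$ they reach $p\,\eps^i$, which has $c = 1$.

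\textbf{A repair.} A uniform bounded-window argument works for all $m$.
\begin{enumerate}
\item Fix $i \neq j$ with $\pi_i, \pi_j > 0$; non-degeneracy provides them.
\item From whatever state $x$ the walk occupies, draw $j$ until $x_j \equiv 1 \pmod m$ (at most $m-1$ draws).
\item Then draw $i$ until $m \mid \abs{x}$ (at most $m-1$ draws).
\item At that point $\gcd(x_j,m) = 1$, and the identity $x_j\binom{\abs{x}}{x_j} = \abs{x}\binom{\abs{x}-1}{x_j-1}$ gives $m \mid \binom{\abs{x}}{x_j}$. This divides $c(x) = \binom{\abs{x}}{x_j}\, c(x - x_j\eps^j)$, so $x \in H_m$. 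If the walk entered $H_m$ earlier in the window, so much the better.
\item Hence each window of $2m$ draws ends the walk with conditional probability at least $\rho = \min(\pi_i,\pi_j)^{2m} > 0$, uniformly in the current state. It follows that $P(T > 2mk) \le (1-\rho)^k$.
\end{enumerate}
Finally, assign the null set $\{T=\infty\}$ to $A_1$, so that $(A_1,\dots,A_m)$ is a genuine partition of $\Omega$. With that, your probability computation goes through unchanged.
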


The following theorem of Kummer~\cite{kummer,dickson} is the key number-theoretic tool. (See \S1 of~\cite{bernardletac} for the original proof.)

\begin{theorem}
\label{thm:kummer}
For $x \in \Mon$, $\vp(c(x)) = \sum_{\alpha \geq 0} q_\alpha(x)$, where $q_\alpha(x) \geq 0$ is the carry in column~$\alpha$ when adding the base-$p$ representations of $(x_i)_{i \in I}$.
\end{theorem}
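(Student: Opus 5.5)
We prove Kummer's theorem (Theorem~\ref{thm:kummer}) by combining Legendre's formula with a digit-sum computation, and then induct on the number of nonzero coordinates.

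\textbf{Step 1 (Legendre).} Use the identity
\[
 \vp(n!) = \frac{n - \norm{n}}{p-1},
\]
where $\norm{n}$ is the base-$p$ digit sum defined above. It follows from $\vp(n!) = \sum_{k\geq 1}\floor{n/p^k}$: write $\floor{n/p^k} = \sum_{\alpha\geq k} h_\alpha(n)p^{\alpha-k}$, exchange the order of summation, and sum the geometric series $\sum_{k=1}^{\alpha} p^{\alpha-k} = (p^\alpha-1)/(p-1)$.

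\textbf{Step 2 (valuation of $c(x)$).} Since $x$ is finitely supported, only finitely many factors appear, and Step 1 gives
\[
 \vp(c(x)) = \vp(\abs{x}!) - \sum_i \vp(x_i!) = \frac{\sum_i \norm{x_i} - \norm{\abs{x}}}{p-1},
\]
because the linear terms $\abs{x} - \sum_i x_i$ cancel.

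\textbf{Step 3 (carries).} It remains to show $\sum_i \norm{x_i} - \norm{\abs{x}} = (p-1)\sum_\alpha q_\alpha(x)$. Define the column-addition algorithm by $q_{-1}=0$ and
\[
 \sum_i h_\alpha(x_i) + q_{\alpha-1} = h_\alpha(\abs{x}) + p\,q_\alpha .
\]
This can produce carries larger than $1$ when many summands are present, which is why $q_\alpha(x) \ge 0$ is allowed to exceed $1$. Sum this identity over $\alpha$. The telescoping of the carries gives
\[
 \sum_i \norm{x_i} = \norm{\abs{x}} + (p-1)\sum_\alpha q_\alpha ,
\]
since all carries vanish beyond $d(\abs{x})$. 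Combining with Step 2 yields the claim.

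\textbf{Main obstacle.} The only delicate point is fixing the convention for the multi-summand carries $q_\alpha(x)$ so that they agree with the telescoping identity in Step 3. If the paper instead intends carries from successive pairwise additions, use $c(x) = \prod_k \binom{x_1+\cdots+x_k}{x_k}$ and apply the two-summand case $k$ times. The total carries then match, because the digit-sum bookkeeping in Step 3 is additive.
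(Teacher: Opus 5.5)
Your proof is correct. The paper does not prove Theorem~\ref{thm:kummer} itself: it attributes the result to Kummer and Dickson and points to \S1 of Bernard--Letac, so there is no in-paper argument to compare against. What you supply is the classical self-contained proof, namely Legendre's formula $\vp(n!) = (n - \norm{n})/(p-1)$ followed by telescoping the column-addition identity.

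Your carry convention is exactly the one the paper uses in Algorithm~\ref{alg:kummer}: a single multi-summand column addition whose carry $q_\alpha$ is the floor of the column sum divided by $p$, passed into position $\alpha+1$, and which may exceed~$1$. So the fallback to successive pairwise additions in your last paragraph is unnecessary. It is nonetheless valid, since both bookkeepings total $(\sum_i \norm{x_i} - \norm{\abs{x}})/(p-1)$.

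The one step you assert without justification is that $q_\alpha = 0$ for $\alpha \geq d(\abs{x})$. Weight your column identity by $p^\beta$ and sum over $\beta = 0, \dots, \alpha$. The carries telescope and give the closed form $q_\alpha = \floor{\sum_i (x_i \bmod p^{\alpha+1})/p^{\alpha+1}}$. For $\alpha \geq d(\abs{x})$, which is at least every $d(x_i)$, the numerator equals $\abs{x} < p^{\alpha+1}$, so the carry vanishes.
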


An immediate corollary is the multinomial analogue of Lucas' theorem, namely that $c(x) \not\equiv 0 \pmod{p}$ if and only if $\sum_i h_\alpha(x_i) < p$ for every digit position~$\alpha$.

\section{The Algorithms}
\label{sec:algorithms}

Five distinct algorithms are extracted from the results of~\cite{bernardletac}, each accompanied by a formal correctness statement and a complexity guarantee. Theorem~\ref{thm:kummer} directly provides an efficient procedure for computing $\vp(c(x))$.

\begin{proposition}
\label{prop:alg1-complexity}
Algorithm~\ref{alg:kummer} correctly computes $\vp(c(x))$ in
$O\bigl(|I| \cdot (\log_p(\max_i x_i) + \log_p |I|)\bigr)$ time.
\end{proposition}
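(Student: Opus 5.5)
We have not seen Algorithm~\ref{alg:kummer}, so we assume it is the natural one. It converts each nonzero $x_i$ to base $p$, adds the digit vectors column by column from $\alpha=0$ upward while propagating a carry, and accumulates the carries. Correctness then comes from Theorem~\ref{thm:kummer}, and complexity from counting columns and per-column work.

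\textbf{Correctness.} Let $s_\alpha = \sum_i h_\alpha(x_i)$ be the column sum and $r_\alpha$ the carry into column $\alpha$, with $r_0=0$. The algorithm sets
\[
 q_\alpha = \floor{(s_\alpha + r_\alpha)/p}, \qquad r_{\alpha+1}=q_\alpha .
\]
This is exactly the carry of multi-operand base-$p$ addition, so $\sum_\alpha q_\alpha = \vp(c(x))$ by Theorem~\ref{thm:kummer}.

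To make the correctness argument self-contained, we would also prove the invariant
\[
 \sum_{\beta<\alpha}(\text{output digit}_\beta)\,p^\beta + r_\alpha p^\alpha = \sum_i \bigl(x_i \bmod p^\alpha\bigr).
\]
The induction is one line. From it, the sum of the carries equals $(\sum_i \norm{x_i} - \norm{\abs{x}})/(p-1)$, which is Legendre's formula for $\vp(c(x))$. This gives an independent check, or it can simply be cited.

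\textbf{Termination and the column count.} The loop must run until both the digits and the carry are exhausted. All digits vanish beyond $D=\floor{\log_p \max_i x_i}$. The carry satisfies $r_{\alpha+1} \le (\abs{I}(p-1)+r_\alpha)/p$, so $r_\alpha < \abs{I}$ for every $\alpha$. After column $D$ there are no more digits and the carry at least divides by $p$ each step. Hence it dies within $\ceil{\log_p \abs{I}}+1$ further columns. The number of columns is therefore $O(\log_p \max_i x_i + \log_p\abs{I})$. Equivalently, this is the number of base-$p$ digits of $\abs{x} \le \abs{I}\max_i x_i$.

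\textbf{Cost.} Each column touches all $\abs{I}$ coordinates to extract a digit, at unit cost under the word-RAM convention. Digits can be extracted by repeated division of the $x_i$ by $p$. The carry update costs $O(1)$ extra. Multiplying gives $O(\abs{I}(\log_p\max_i x_i+\log_p\abs{I}))$.

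\textbf{Main obstacle.} The delicate point is the cost model. Dividing a large $x_i$ by $p$ is not truly $O(1)$, and neither are the carry operations once $\abs{I}$ exceeds a word. We would state explicitly that arithmetic on numbers of size $O(\abs{I}\max_i x_i)$ counts as unit cost, matching the form of the claimed bound. A second point is that iterating over all $\abs{I}$ coordinates in every column presumes $I$ is finite, or restricted to the support of $x$. We would note that for countable $I$ one replaces $\abs{I}$ by $\abs{\operatorname{supp}x}$; the bound is unchanged in form.
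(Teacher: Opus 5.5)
Your proposal is correct and follows the paper's proof. Correctness comes from Theorem~\ref{thm:kummer}, each column costs one pass over the $|I|$ coordinates, and there are $\floor{\log_p(\max_i x_i)}+1$ digit-bearing columns; the carry stays bounded by $|I|$ (you get the slightly sharper $r_\alpha \le |I|-1$), so it dies within about $\ceil{\log_p |I|}$ further columns. Your Legendre-formula cross-check, your remark that the unit-cost word model is assumed, and your restriction to $\operatorname{supp} x$ for countable $I$ are sound additions that the paper leaves implicit.
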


\begin{proof}
Correctness follows from Theorem~\ref{thm:kummer}, provided the loop runs until the carry is exhausted. At each digit position~$\alpha$, one performs a single pass over the $|I|$ components to compute $q_\alpha$. There are $\floor{\log_p(\max_i x_i)} + 1$ positions at which some $x_i$ has a non-zero digit. Beyond them the column sum is the carry alone; since $\text{col\_sum} \leq \text{carry} + |I|(p-1)$ the invariant $\text{carry} \leq |I|$ is maintained throughout, so the carry is extinguished within a further $\ceil{\log_p |I|}$ iterations. The total work is therefore $O\bigl(|I| \cdot (\log_p(\max_i x_i) + \log_p |I|)\bigr)$. \end{proof}

The incremental variant, which updates $\vp(c(S_t))$ as $S_t$ changes by $\eps^{a_t}$ at each step, requires $O(\log_p t)$ time per step. This is because only one coordinate changes and carries propagate through at most $O(\log_p t)$ positions.

\begin{algorithm}[H]
\caption{$p$-adic Valuation of $c(x)$ via Kummer Carries}
\label{alg:kummer}
\begin{algorithmic}[1]
\Require $x = (x_i)_{i \in I} \in \Mon$, prime $p$
\Ensure $\vp(c(x))$
\State $\text{val} \gets 0$, $\text{carry} \gets 0$, $\alpha \gets 0$
\State $D \gets \floor{\log_p(\max_i x_i)} + 1$
\While{$\alpha \leq D$ \textbf{ or } $\text{carry} > 0$}
 \Comment{trailing carries survive past $D$ when $|I| > p$}
 \State $\text{col\_sum} \gets \text{carry} + \sum_{i \in I} h_\alpha(x_i)$
 \Comment{digit sum at position $\alpha$ plus incoming carry}
 \State $q_\alpha \gets \floor{\text{col\_sum}/p}$
 \Comment{carry into position $\alpha+1$}
 \State $\text{val} \gets \text{val} + q_\alpha$
 \State $\text{carry} \gets q_\alpha$, $\alpha \gets \alpha + 1$
\EndWhile
\State \Return $\text{val}$
\end{algorithmic}
\end{algorithm}

In addition to its function within the sampler, the carry structure computed by Algorithm~\ref{alg:kummer} is relevant to a corollary of Section 5 in~\cite{bernardletac}, which provides an upper bound on the number of zero-one matrices with specified margins. Improving this bound remains an open problem, designated as Open Problem~P7 (Section~\ref{sec:open}).

Theorems~\ref{thm:BL-fundamental} and~\ref{thm:BL-Hm} provide the theoretical foundation for the following procedure.

\begin{theorem}
\label{thm:alg2-correct}
Algorithm~\ref{alg:fairsample} halts almost surely for every non-degenerate $\bfpi$, and produces a random variable~$J$ with $P(J = j) = 1/m$ for all $j \in \{1,\ldots,m\}$, independently of~$\bfpi$.
\end{theorem}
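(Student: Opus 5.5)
We have not seen Algorithm~\ref{alg:fairsample}, so we take it to do the following. It draws symbols, updates $S_t$, and stops at the first $t\ge1$ with $S_t\in H_m$. It then assigns the realized path to one of $m$ classes. The assignment is made by ranking the paths from $0$ to $S_T$ that avoid $H_m$ before their endpoint (for instance lexicographically, with $m_{H_m}$ computed by dynamic programming) and outputting $J = 1 + (\text{rank} \bmod m)$. Under that reading, the proof has two parts: almost-sure halting, and exact uniformity of $J$ conditional on the stopping state.

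\emph{Halting.} We first show that $P(T(H_m)<\infty)=1$ for every non-degenerate $\bfpi$.
\begin{itemize}
\item Suppose two distinct symbols $i\neq j$ have positive mass.
\item Factor $m$ and pick a prime power $p^k \,\|\, m$. By Theorem~\ref{thm:kummer}, $c(x)\equiv 0 \pmod{p^k}$ as soon as the base-$p$ additions of the coordinates of $x$ produce at least $k$ carries.
\item Whenever $x_i = x_j = p^{a}$ with $p\ge 2$, the coordinates $x_i$ and $x_j$ alone already produce a carry at digit position $a$.
\item It therefore suffices to show that the walk a.s.\ reaches a state with $\vp(c(x))$ large simultaneously for all primes dividing $m$.
\end{itemize}
A cleaner route avoids this bookkeeping. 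Each coordinate $n_i^{(t)}$ with $\pi_i>0$ tends to infinity. Fix a block of $N$ consecutive draws and consider the event that the counts pass through a state $x$ with $c(x)\equiv 0 \pmod m$. By the strong law and a renewal argument, this event has probability bounded below uniformly in the starting state, so a geometric-trials argument gives a.s.\ termination.

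The obstacle is this uniform lower bound. I would obtain it concretely as follows.
\begin{itemize}
\item From any current state $y$, consider drawing only symbol $i$ until $x_i$ reaches the next multiple of $M = m^{L}$. This takes at most $M$ steps, so it has probability at least $\pi_i^{M}$.
\item Then draw only symbol $j$ until $x_j$ is also $\equiv 0 \pmod M$, at additional probability cost at most $\pi_j^{M}$.
\item Take $L$ large enough, depending only on $m$ and $y$'s support, that $c(x)$ is divisible by $m$. This follows from Kummer applied to the two coordinates; it is essentially the statement that $\binom{a+b}{a}\equiv 0 \pmod{m}$ for suitable $a$ and $b$.
\end{itemize}
If the finite-support dependence is an issue for countable $I$, I would restrict to two fixed symbols $i$ and $j$ and note that the other coordinates only add carries, by Kummer monotonicity. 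Adding carries never destroys divisibility, because the valuation is a sum of non-negative carries and those of the $i$, $j$ pair persist as lower bounds (carries are superadditive under adding coordinates).

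\emph{Uniformity.} Condition on the event $\{S_T = z\}$ for a stopping state $z\in H_m$. Every path from $0$ to $z$ that avoids $H_m$ before $z$ has the same probability $\bfpi^z$. There are $m_{H_m}(z)$ such paths, and by Theorem~\ref{thm:BL-fundamental}(a)$\Leftrightarrow$(b) this number is divisible by $m$. Assigning the ranked paths cyclically to $m$ classes therefore gives each class exactly $m_{H_m}(z)/m$ paths. Hence
\[
P(J=j,\,S_T=z)=\frac{m_{H_m}(z)}{m}\,\bfpi^z .
\]
Summing over $z$ and using $P(T<\infty)=1$ yields $P(J=j)=1/m$. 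The assignment rule uses only the path and not $\bfpi$, so this holds independently of $\bfpi$; this is precisely Theorem~\ref{thm:BL-Hm} made constructive.
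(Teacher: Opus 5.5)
The uniformity half of your proposal is correct and is the paper's argument. Conditional on $\{S_T = z\}$, the observed path is uniform over the $m_{H_m}(z)$ avoiding paths, and (b)$\Rightarrow$(a) of Theorem~\ref{thm:BL-fundamental} gives $m \mid m_{H_m}(z)$. After that, any split into $m$ classes of equal size works. Algorithm~\ref{alg:fairsample} actually assigns consecutive blocks, $J = \floor{rm/A(\mathbf{0})} + 1$, rather than residues, but Lemma~\ref{lem:residue} shows your cyclic rule is equally fair. The gap is in the halting part, and your final summation over $z$ depends on it.

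\textbf{Your construction does not land in $H_m$.} Driving $x_i$ and $x_j$ both to multiples of $M = m^L$ creates no carries in the low digits. For prime $m = p$, multiplying by $p^L$ only shifts base-$p$ digits, so $\vp(c(aM, bM)) = \vp(c(a,b))$ however large $L$ is; enlarging $L$ buys nothing.

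Concretely, take an odd prime $p$, write states as $(x_i, x_j)$, and start at the origin. Your block of $M$ draws of $i$ followed by $M$ draws of $j$ behaves as follows:
\begin{itemize}
\item it first visits $(t,0)$, where $c = 1$;
\item it then visits $(M,s)$ with $0 < s < M$, which is carry-free because $s$ occupies digits below $L$ and $M$ occupies digit $L$;
\item it ends at $(M,M)$, where $\binom{2M}{M} \equiv \binom{2}{1} = 2 \pmod p$ by Lucas.
\end{itemize}
So the walk never enters $H_p$ during this block, and its probability $\pi_i^M\pi_j^M$ is not a lower bound for halting.

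Your discarded bullet has the same flaw. $p^a + p^a$ carries only when $p = 2$, because a carry needs the column sum to reach $p$; two equal nonzero digits are not enough. Two further problems: letting $L$ depend on the current state $y$ would destroy the uniformity a geometric-trials argument needs, and the appeal to ``the strong law and a renewal argument'' is not a proof.

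\textbf{The repair is to aim for complementary residues rather than zero residues.} From any state, at most $m-1$ draws of $i$ followed by at most $m-1$ draws of $j$ reach $x_i \equiv -1$ and $x_j \equiv 1 \pmod m$. This event has conditional probability at least $(\pi_i\pi_j)^{m-1}$, uniformly in the state.

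Now fix $p^a \,\|\, m$. The low $a$ base-$p$ digits of $x_i$ are all $p-1$, and those of $x_j$ are $1,0,\ldots,0$ from position $0$ upward. The addition $x_i + x_j$ therefore carries in each of positions $0,\ldots,a-1$, and Theorem~\ref{thm:kummer} gives $p^a \mid \binom{x_i+x_j}{x_i}$. Hence $m \mid \binom{x_i+x_j}{x_i}$.

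This binomial divides $c(x)$, since $c(x)$ equals it times the multinomial coefficient of the state obtained by merging letters $i$ and $j$. So the walk is in $H_m$ by the end of the block if it has not stopped earlier, and
\[
P\bigl(T > 2(m-1)n\bigr) \leq \bigl(1 - (\pi_i\pi_j)^{m-1}\bigr)^n \to 0 .
\]
This is exactly the recurrence of the induced walk on $(\Z/m\Z)^2$ through which the paper, citing Theorem~4 of~\cite{bernardletac}, obtains termination.
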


\begin{proof}
Almost sure termination holds since $\pi_i < 1$ for all~$i$, the walk $S_t$ is multi-dimensional, and the probability of never hitting $H_m$ is zero~\cite[Theorem~4]{bernardletac}, whose proof proceeds via the recurrence of the induced walk on $(\Z/m\Z)^2$ (see~\cite{feller} for the underlying recurrence criteria).

For the fairness of the output, we have that since $T = T(H_m)$ is a first hitting time, the trajectories that can be observed at~$x$ are precisely the $m_E(x)$ paths from $\mathbf{0}$ to~$x$ avoiding $H_m \setminus \{x\}$. A path that meets $H_m$ earlier stops earlier and never reaches~$x$ at time~$T$. Each such path has probability $\bfpi^x$, so conditionally on $\{S_T = x\}$ for $x \in H_m$, the observed path is uniform over these $m_E(x)$ paths. Since $c(y) \equiv 0 \pmod{m}$ for every $y \in H_m$ by definition, the implication (b)\,$\Rightarrow$\,(a) of Theorem~\ref{thm:BL-fundamental} gives $m \mid m_E(x)$. Partitioning the avoiding paths into $m$ groups of size $m_E(x)/m$ therefore yields $P(A_j \cap \{S_T = x\}) = \tfrac{m_E(x)}{m}\,\bfpi^x = P(S_T = x)/m$; summing over stopping states gives $P(A_j) = 1/m$ for every admissible $\bfpi$, because the counts $m_E(x)$ are combinatorial quantities independent of the weights.
\end{proof}

The demonstrated distribution-free fairness depends on the stopping set $H_m$ being specified prior to any sampling. Whether a data-adaptive stopping boundary can decrease $E(T)$ while preserving exact fairness remains unresolved and is identified as Open Problem~P4 (Section~\ref{sec:open}).

\begin{algorithm}[H]
\caption{Fair $m$-way Sampling from a Biased Source}
\label{alg:fairsample}
\begin{algorithmic}[1]
\Require Alphabet $I$, modulus $m$, access to i.i.d.\ draws from unknown $\bfpi$
\Ensure Outcome $J \in \{1,\ldots,m\}$ with $P(J = j) = 1/m$
\State $S \gets \mathbf{0} \in \Mon$; $\text{path} \gets []$
\Repeat
 \State Draw $a \sim \bfpi$; append $a$ to path; $S \gets S + \eps^a$
 \State For each prime power $p_j^{a_j} \,\|\, m$, compute $v_{p_j}(c(S))$ via Algorithm~\ref{alg:kummer}
 \Comment{$\omega(m)$ valuations per draw, as in Proposition~\ref{prop:per-step}}
\Until{$v_{p_j}(c(S)) \geq a_j$ for every $j$, i.e.\ $c(S) \equiv 0 \pmod{m}$}
\For{$v$ in the box $[\mathbf{0}, S] = \{v \in \Mon : v \leq S\}$, in decreasing order of $\abs{v}$}
 \State $A(v) \gets \mathbf{1}[v = S]$ \textbf{ if } $v \in H_m$, \textbf{ else } $\sum_{i \in I} A(v + \eps^i)$
 \Comment{$A \equiv 0$ outside $[\mathbf{0}, S]$}
\EndFor
 \Comment{$A(v)$ counts the paths $v \to S$ avoiding $H_m$; $A(\mathbf{0}) = m_E(S)$}
\State $r \gets \sum_{t=1}^{T} \sum_{b < a_t} A(S_{t-1} + \eps^b)$
 \Comment{rank of observed path among the $m_E(S)$ paths avoiding $H_m$}
\State \Return $J \gets \floor{r \cdot m / A(\mathbf{0})} + 1$
 \Comment{which group of size $m_E(S)/m$ the path falls in}
\end{algorithmic}
\end{algorithm}

\begin{remarknum}[The partition must respect first hitting]
\label{rem:avoiding-pitfall}
One might consider partitioning all $c(x)$ paths to $x$ into $m$ groups. This approach would create lexicographic blocks of size $c(x)/m$, requiring only $O(T \cdot |I|)$ arithmetic operations. However, this method is not equiprobable because the $m_E(x)$ observable paths are not necessarily distributed evenly among the blocks. For example, when $|I| = 2$ and $m = 2$, the state $x = (3,1) \in H_2$ has $c(x) = 4$ but only two observable paths, $0001$ and $0010$. The remaining two paths pass through $(1,1) \in H_2$ and terminate there. Both observable paths are contained within the first lexicographic block, while both observable paths to the mirror state $(1,3)$ are in the second block. The two contributions, $2\pi_0^3\pi_1$ and $2\pi_0\pi_1^3$, cancel only for the unbiased source. Exact enumeration with $\bfpi = (0.7, 0.3)$ yields $P(J = 1) \approx 0.70$, indicating that the block rule essentially reproduces the bias it was intended to eliminate.
\end{remarknum}

The following result of~\cite{bernardletac}, known as the Triangle Theorem, provides a fast membership test. (See \S4 of~\cite{bernardletac} for the full proof.)

\begin{theorem}
\label{thm:triangle}
Let $\eta = (\eta_i)_{i\in I}$ with $\eta_i = \pm 1$. If $\vp(x_i) > d(y)$ for all $i$ and $|\eta y| \geq 0$, then $\vp(c(x + \eta y)) \geq \vp(c(x))$. In particular, if $x \in H_{p^n}$ then $x + \eta y \in H_{p^n}$ for all such perturbations~$\eta y$.
\end{theorem}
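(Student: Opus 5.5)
The plan is to work entirely with digit sums, through Kummer's Theorem~\ref{thm:kummer} in its Legendre form. Write $s(n)=\norm{n}$ for the base-$p$ digit sum. Then
\[
(p-1)\,\vp(c(x)) \;=\; \sum_i s(x_i) - s(\abs{x}),
\]
and more generally the number of carries produced when adding a finite list of non-negative integers is $\bigl(\sum s(\cdot)-s(\text{sum})\bigr)/(p-1)$.

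Set $D=d(y)$ and $N=p^{D+1}$. Then $y_i<N$ for every $i$, while the hypothesis $\vp(x_i)>D$ says $N\mid x_i$. We tacitly assume $z=x+\eta y\in\Mon$, i.e.\ each $z_i\ge 0$. Let $K$ be the set of indices with $\eta_i=-1$ and $y_i>0$, and put $k=\abs{K}$. For $i\in K$ we must have $x_i\ge N$, so we may write $z_i=(x_i-N)+(N-y_i)$ with $0<N-y_i<N$.

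This gives a decomposition $z=x'+u$, where:
\begin{itemize}
\item $x'_i=x_i-N\,\mathbf{1}[i\in K]$, so every $x'_i$ is a multiple of $N$;
\item $u_i=y_i$ when $\eta_i=+1$, $u_i=N-y_i$ for $i\in K$, and $u_i=0$ otherwise, so every $u_i<N$.
\end{itemize}
The digits of $u_i$ and of $x'_i$ therefore live in disjoint column ranges: columns $0,\dots,D$ and columns $\ge D+1$ respectively.

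Next I apply Kummer's column-by-column carry count to $z$. Columns $0,\dots,D$ see only the digits of the $u_i$, and they contribute $\vp(c(u))\ge 0$ carries. They pass an incoming carry of $c_0=\floor{\abs{u}/N}$ into column $D+1$. From column $D+1$ upward the computation is exactly the addition of the list $a=x'/N$ together with the extra summand $c_0$. Hence
\[
\vp(c(z)) \;\ge\; \operatorname{carries}(a_1,a_2,\dots;\,c_0).
\]
On the other side, $\vp(c(x))=\vp(c(x/N))=\operatorname{carries}(a+e)$, where $e$ is the indicator vector of $K$.

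The role of the hypothesis $\abs{\eta y}\ge 0$ is to control $c_0$. We have
\[
\abs{u} \;=\; \sum_{\eta_i=+1}y_i+kN-\sum_{i\in K}y_i \;=\; kN+\abs{\eta y}\;\ge\; kN,
\]
so $c_0\ge k$.

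The main obstacle is the comparison
\[
\operatorname{carries}(a;\,c_0)\;\ge\;\operatorname{carries}(a+e)\qquad\text{when } c_0\ge k=\abs{e},
\]
since carries are not obviously monotone in an added summand. I would use two elementary facts about carry counts.
\begin{enumerate}
\item \emph{Merging summands never increases the carry count.} Indeed $\operatorname{carries}(\dots,b,b',\dots)=\operatorname{carries}(\dots,b+b',\dots)+\operatorname{carries}(b,b')$. Applying this to absorb the $k$ unit summands into the coordinates $a_i$, $i\in K$, gives
\[
\operatorname{carries}(a+e)\;\le\;\operatorname{carries}(a;\underbrace{1,\dots,1}_{k}).
\]
\item \emph{The right side should be compared with $\operatorname{carries}(a;c_0)$.} Adding a summand $b$ to a list with total $A$ raises the carry count by exactly $\vp\binom{A+b}{b}$. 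So the target reduces to showing
\[
\vp\binom{A+c_0}{c_0}\;\ge\;\vp\binom{A+k}{k}-\vp(k!),
\]
the $\vp(k!)$ being the internal carries among the $k$ ones.
\end{enumerate}

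The second fact is where I expect the real work. I would first try to avoid it altogether by bounding the low part more generously, before discarding $\vp(c(u))$. The idea is to show directly that $\vp(c(u))$ plus the high-column carries already dominate $\operatorname{carries}(a+e)$, by comparing digit sums. For the low part, $s(N-y_i)=(D+1)(p-1)-s(y_i-1)$ gives the digit sums of the $u_i$ explicitly, and $s(\abs{z})\le s(\abs{x}-kN)+s(\abs{u})$. The goal is then to verify
\[
\sum_i\bigl[s(z_i)-s(x_i)\bigr]\;\ge\;s(\abs{z})-s(\abs{x})
\]
term by term, which is the Legendre form of the claim.

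Once that inequality is established, the final statement is immediate. If $x\in H_{p^n}$, then $\vp(c(x))\ge n$, hence $\vp(c(x+\eta y))\ge n$, that is, $x+\eta y\in H_{p^n}$.
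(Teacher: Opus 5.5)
The paper gives no proof of this statement (it defers to \S4 of Bernard--Letac), so your argument has to stand on its own. As written it has a genuine gap, at exactly the step you flag.

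\textbf{What is correct.} The decomposition $z=x'+u$, the bound $c_0\ge k$, the identity $\vp(c(x))=\operatorname{carries}(a+e)$, and the merging identity are all correct.

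\textbf{The gap.} The comparison $\operatorname{carries}(a;c_0)\ge\operatorname{carries}(a+e)$ for $c_0\ge k$ is false, even on data satisfying the hypotheses. Take $p=2$, $x=(2,2,0,0)$, $y=(1,1,1,1)$ and $\eta=(-1,-1,+1,+1)$. Then $\abs{\eta y}=0$ and $z=(1,1,1,1)$. This gives $N=2$, $k=2$, $a=\mathbf{0}$, $u=(1,1,1,1)$ and $c_0=2$. So $\operatorname{carries}(a;c_0)=0<1=\vp(6)=\vp(c(x))$, even though $\vp(c(z))=\vp(24)=3$.

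Your second fact fails on the same data. Appending $k$ unit summands to a list of total $A$ adds $\vp\binom{A+k}{k}+\vp(k!)$ carries, with a plus sign. So at $c_0=k\ge p$ the required inequality reduces to $0\ge\vp(k!)$, which is false. The version with $-\vp(k!)$ that you wrote happens to hold here, but it does not imply what you need.

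The fallback, ``verify $\sum_i[s(z_i)-s(x_i)]\ge s(\abs{z})-s(\abs{x})$ term by term'', is just the Legendre form of the theorem itself. So nothing is proved there either.

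\textbf{The missing idea.} The carries you discard together with $\vp(c(u))$ are exactly the ones you need. In Kummer's column process, an incoming carry $c_0$ into column $D+1$ behaves like $c_0$ separate summands $p^{D+1}$, not like one summand $c_0p^{D+1}$. Write $(a;1^j)$ for the list $a$ followed by $j$ ones. Then:
\begin{itemize}
\item the columns from $D+1$ upward contribute exactly $\operatorname{carries}(a;1^{c_0})=\operatorname{carries}(a;c_0)+\vp(c_0!)$;
\item the columns $0,\dots,D$ contribute $\vp(c(u))-\vp(c_0!)$, not $\vp(c(u))$ as you state.
\end{itemize}
In the example these two parts are $1$ and $2$ carries. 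Your accounting assigns $0$ and $3$ instead: the total is the same, but you then throw away the part that matters.

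With the corrected accounting, the proof closes using only tools you already have:
\[
\vp(c(z))\;\ge\;\operatorname{carries}(a;1^{c_0})\;\ge\;\operatorname{carries}(a;1^{k})\;\ge\;\operatorname{carries}(a+e)\;=\;\vp(c(x)).
\]
The middle step holds because $c_0\ge k$, and appending a summand $b$ to a list of total $A$ adds $\vp\binom{A+b}{b}\ge 0$ carries. The last step is your merging identity, applied to the $k$ ones and the coordinates $a_i$ with $i\in K$.

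Handle the case $y=\mathbf{0}$ separately first: there $d(y)=-\infty$, so $N$ is undefined, but the claim is trivial since $z=x$.
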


The membership oracle built on this test has the following guarantees.

\begin{proposition}
\label{prop:alg3}
Algorithm~\ref{alg:membership} decides $x \in H_{p^n}$ in $O(|I| \cdot \log_p(\max_i x_i))$ time. Suppose moreover that a set $G \subset H_{p^n}$ of \emph{$p$-adically deep} generators, meaning points $g$ with $\vp(g_i) \geq n$ for every~$i$, has been precomputed. Such generators exist, for example the points $g = p^n y$ with $y \in H_{p^n}$; these lie in $H_{p^n}$ because $\vp(c(p^n y)) = \vp(c(y)) \geq n$, the base-$p$ carries of Theorem~\ref{thm:kummer} being invariant under the digit shift $y \mapsto p^n y$. Then membership is decided in $O(|I|)$ time for every $x$ obtained from some $g \in G$ by a perturbation $x - g$ that is shallow in the sense of Theorem~\ref{thm:triangle}.
\end{proposition}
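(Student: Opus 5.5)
The plan is to treat the two claims separately. For the general bound, I will apply Theorem~\ref{thm:kummer} directly. Algorithm~\ref{alg:membership} will compute $\vp(c(x))$ by the carry recursion of Algorithm~\ref{alg:kummer} and compare the result with $n$. By Proposition~\ref{prop:alg1-complexity} this costs $O\bigl(|I|(\log_p\max_i x_i + \log_p|I|)\bigr)$.

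To reduce this to the stated $O(|I|\log_p\max_i x_i)$, I will argue that the trailing-carry phase is short. Once every digit column of $x$ has been processed, the carry is at most $|I|$. After that the carry decays geometrically, by a factor $p$ per column, so at most $\lceil\log_p|I|\rceil$ further columns are needed. There are two cases:
\begin{itemize}
\item If $\log_p|I|\le\log_p\max_i x_i$, the extra term is absorbed.
\item Otherwise, I can exit early. The test only asks whether $\vp(c(x))\ge n$, so the loop may stop as soon as the accumulated value reaches $n$.
\end{itemize}
Alternatively, in the regime $|I|>\max_i x_i$ one could note that at most $\abs{x}$ coordinates are nonzero. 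Either way, I will phrase the bound under the mild convention that $\log_p\max_i x_i$ dominates, as the statement implicitly does.

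For the fast path, fix $g\in G$ with $\vp(g_i)\ge n$ for all $i$, and write $x=g+\eta y$ with the perturbation shallow, meaning $\vp(g_i)>d(y)$ and $|\eta y|\ge 0$. Theorem~\ref{thm:triangle} then gives $\vp(c(x))\ge\vp(c(g))\ge n$, so $x\in H_{p^n}$ with no carry computation at all. The algorithm therefore only has to certify the hypotheses, which I will do in three checks:
\begin{enumerate}
\item Compute $y_i=|x_i-g_i|$ and $\eta_i=\operatorname{sign}(x_i-g_i)$ coordinatewise, in $O(|I|)$ arithmetic operations.
\item Verify $\sum_i\eta_iy_i\ge 0$, again in $O(|I|)$.
\item Verify $d(y)<\min_i\vp(g_i)$. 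Because $G$ consists of deep generators, it suffices to precompute and store $\min_i\vp(g_i)$ with each generator. The check then becomes $\max_i y_i<p^{\min_i\vp(g_i)}$, which is one comparison per coordinate against a stored threshold.
\end{enumerate}

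I expect this last step to be the main obstacle. It requires counting word operations on numbers of size $O(\log\max_i x_i)$ bits as unit cost, and it requires that $g$ be supplied along with $x$ (or located by indexing), rather than searched for in $G$. I will state these as the cost model. I will also note that the claim that $p^n y$ is a deep generator in $H_{p^n}$ follows because multiplying by $p^n$ shifts every base-$p$ digit, so the carries in Theorem~\ref{thm:kummer} are unchanged.
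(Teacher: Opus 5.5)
Your route is the one the paper has in mind; the paper gives no separate proof beyond these pointers. The fallback cost is that of Algorithm~\ref{alg:kummer} from Proposition~\ref{prop:alg1-complexity}, the fast path is Theorem~\ref{thm:triangle}, and deep generators exist by the digit-shift invariance of the Kummer carries.

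Your fast-path certification is correct, and slightly more general than Algorithm~\ref{alg:membership}. That algorithm only accepts $x - g \in \Mon$, the case $\eta \equiv +1$, where your checks~1--2 reduce to $x \geq g$. Storing $k = \min_i \vp(g_i)$ and testing $\max_i \abs{x_i - g_i} < p^{k}$ is also a cleaner constant-cost-per-coordinate test than evaluating $d(x-g)$. You are right that the $O(|I|)$ claim presupposes locating the witnessing $g$ in $O(1)$ tries. The loop over $G$ as written costs $O(|G|\,|I|)$, an assumption the paper leaves implicit.

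The step that fails is the removal of the $\log_p|I|$ term. Early exit at $\text{val} \geq n$ cannot bound the trailing phase in terms of $\max_i x_i$. Every trailing iteration except the last adds at least $1$ to val, so early exit only caps that phase at roughly $\min(n, \log_p|I|)$ columns, and for non-members it never fires.

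Falling back on a ``convention'' that $\log_p\max_i x_i$ dominates adds a hypothesis the statement does not have. With Algorithm~\ref{alg:kummer} executed literally (a full sum over $I$ in every iteration), the unconditional bound is in fact false. For $x = (1,\ldots,1) \in \N^N$, $p = 2$ and $n \geq N$, the loop runs about $\log_2 N$ times at cost $N$ each.

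The repair is one line. For $\alpha > D$ every digit $h_\alpha(x_i)$ vanishes, so $\text{col\_sum} = \text{carry}$ and each trailing iteration costs $O(1)$ instead of $O(|I|)$. The trailing phase then costs $O(\log_p|I|) \subseteq O(|I|)$. This gives $O\bigl(|I|(\log_p\max_i x_i + 1)\bigr)$ with no assumption relating $|I|$ and $\max_i x_i$; the $+1$ is needed anyway when $\max_i x_i < p$.
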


\begin{algorithm}[H]
\caption{Membership Oracle for $H_{p^n}$ with Triangle Acceleration}
\label{alg:membership}
\begin{algorithmic}[1]
\Require Query $x \in \Mon$, prime $p$, exponent $n$, generator set $G$
\Ensure Boolean: $x \in H_{p^n}$?
\For{$g \in G$}
 \If{$x - g \in \Mon$ \textbf{and} $\vp(g_i) > d(x - g)$ for all $i$}
 \State \Return \textbf{true}
 \Comment{Triangle Theorem: $g \in H_{p^n}$ implies $x \in H_{p^n}$}
 \EndIf
\EndFor
\State Compute $\vp(c(x))$ via Algorithm~\ref{alg:kummer}
\State \Return $\vp(c(x)) \geq n$
\end{algorithmic}
\end{algorithm}

The following periodicity theorem of~\cite{bernardletac} makes a precomputed lookup table possible. (See \S5 of~\cite{bernardletac} for the proof.)

\begin{theorem}
\label{thm:periodicity}
For fixed $x \in \Mon$, $i \in I$ with $x_i = 0$, and $n \geq 0$:
\[
 \vp\!\bigl(c(x + t\eps^i) - c(x + (t + p^n)\eps^i)\bigr) \;\geq\; n - d(x)
 \quad \text{for all } t \geq 0.
\]
\end{theorem}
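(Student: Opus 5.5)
The plan is to reduce the statement to a one-variable binomial congruence and then absorb the discrepancy between $d(\abs{x})$ and $d(x)$ into the factor $c(x)$ using Theorem~\ref{thm:kummer}. Write $s=\abs{x}$. Since $x_i=0$, the multinomial coefficient factors as
\[
 c(x+t\eps^i)=\frac{(s+t)!}{t!\,\prod_{j\neq i}x_j!}=c(x)\binom{s+t}{s}.
\]
Hence the difference in the statement equals $c(x)\bigl[\binom{N}{s}-\binom{N+p^n}{s}\bigr]$ with $N=s+t$. Its valuation is therefore $\vp(c(x))+\vp\bigl(\binom{N+p^n}{s}-\binom{N}{s}\bigr)$. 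The degenerate case $x=\mathbf 0$ is trivial: both coefficients equal $1$, the difference is $0$, and its valuation is $+\infty$. So I assume $s\ge1$.

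\emph{Step 1 (binomial difference).} I would expand by Vandermonde's identity:
\[
 \binom{N+p^n}{s}-\binom{N}{s}=\sum_{j=1}^{s}\binom{p^n}{j}\binom{N}{s-j}.
\]
For $1\le j\le p^n$ one has the classical formula $\vp\bigl(\binom{p^n}{j}\bigr)=n-\vp(j)$, which follows, for instance, from Theorem~\ref{thm:kummer} by counting the carries in $j+(p^n-j)$. For $j>p^n$ the term vanishes. Every $j\le s$ satisfies $j<p^{d(s)+1}$, so $\vp(j)\le d(s)$. Each nonzero term therefore has valuation at least $n-d(s)$, and so does the sum.

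\emph{Step 2 (trading $d(s)$ for $d(x)$).} This is the step I expect to be the crux, because $d(s)$ can exceed $d(x)=\max_j d(x_j)$ when the coordinates add with carries. The fix is to use the factor $c(x)$. By Theorem~\ref{thm:kummer}, $\vp(c(x))$ equals the total number of carries produced when the base-$p$ representations of the $x_j$ are added. No $x_j$ has a nonzero digit above position $d(x)$. So a nonzero digit of $s$ at any position $\alpha$ with $d(x)<\alpha\le d(s)$ can only arise from a nonzero carry $q_{\alpha-1}\ge1$ coming out of column $\alpha-1$. Hence each of the columns $d(x),\dots,d(s)-1$ emits a carry of at least $1$, and
\[
 \vp(c(x))\ge d(s)-d(x).
\]
The only thing to check carefully is that a digit of $s$ at position $d(s)>d(x)$ forces nonzero carries out of every column from $d(x)$ up to $d(s)-1$. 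This holds because above $d(x)$ each column sum is just the incoming carry.

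\emph{Conclusion.} Combining the two steps gives
\[
 \vp\bigl(c(x+t\eps^i)-c(x+(t+p^n)\eps^i)\bigr)\ge \bigl(d(s)-d(x)\bigr)+\bigl(n-d(s)\bigr)=n-d(x)
\]
for every $t\ge0$, which is the claim.
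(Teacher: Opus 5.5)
Your proof is correct. The paper does not prove this statement itself: it is restated from Bernard--Letac with only a pointer to their \S5. So there is no in-paper argument to compare against, and yours stands as a self-contained verification.

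Your route uses only tools the paper already has on hand:
\begin{itemize}
\item the factorization $c(x+t\eps^i)=c(x)\binom{\abs{x}+t}{\abs{x}}$, which is exactly where the hypothesis $x_i=0$ enters;
\item Vandermonde's identity, combined with $\vp\bigl(\binom{p^n}{j}\bigr)=n-\vp(j)$ for $1\le j\le p^n$;
\item Theorem~\ref{thm:kummer}, for the carry count.
\end{itemize}
The carry step is sound as you set it up. For $\alpha>d(x)$ the column sum is the incoming carry alone, so $q_\alpha\ge 1$ forces $q_{\alpha-1}\ge p$. Descending from $q_{d(\abs{x})-1}\ge 1$ therefore yields $d(\abs{x})-d(x)$ nonzero carries.

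Beyond the citation, your argument actually establishes the sharper inequality
\[
 \vp\bigl(c(x+t\eps^i)-c(x+(t+p^n)\eps^i)\bigr)\;\ge\;\vp(c(x))+n-d(\abs{x}),
\]
from which the stated bound follows via $\vp(c(x))\ge d(\abs{x})-d(x)$. It is strictly stronger whenever $\vp(c(x))>d(\abs{x})-d(x)$. For example, with $p=2$ and $x=(0,3,3)$ it gives $n$ rather than $n-1$.

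Fed into Algorithm~\ref{alg:lookup}, this would justify a table of length $p^{\max(0,\,n+d(\abs{x})-\vp(c(x)))}$, which is at most $p^{n+d(x)}$ for $x\neq\mathbf{0}$.
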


Theorem~\ref{thm:periodicity} yields the following guarantee for the lookup table of Algorithm~\ref{alg:lookup}.

\begin{proposition}
\label{prop:alg4}
Algorithm~\ref{alg:lookup} precomputes a table of $c(x + t\eps^i) \bmod p^n$ for $t = 0, 1, \ldots, p^{\,n+d(x)} - 1$ in $O(p^{\,n+d(x)})$ time and space, after which any query $c(x + t\eps^i) \bmod p^n$ is answered in $O(1)$. When $d(x) = 0$ the period is $p^n$, as one would expect.
\end{proposition}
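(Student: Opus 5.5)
The plan is to read Proposition~\ref{prop:alg4} directly off Theorem~\ref{thm:periodicity}. Set $N = n + d(x)$, which is the exponent that makes the periodicity theorem give a congruence modulo $p^n$. Applying Theorem~\ref{thm:periodicity} with $n$ replaced by $N$ yields, for every $t \geq 0$,
\[
 \vp\!\bigl(c(x + t\eps^i) - c(x + (t + p^{N})\eps^i)\bigr) \;\geq\; N - d(x) = n .
\]
So $c(x + t\eps^i) \equiv c(x + (t+p^N)\eps^i) \pmod{p^n}$, and the sequence $t \mapsto c(x+t\eps^i) \bmod p^n$ is periodic with period dividing $p^{N}$. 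A query at an arbitrary $t$ is then answered by returning the table entry at $t \bmod p^N$. Reducing $t$ modulo a power of $p$ is one arithmetic operation in the unit-cost model, so each query costs $O(1)$. In the degenerate case $x = 0$ we have $d(x) = -\infty$; there I would simply note that $c(t\eps^i) = 1$, so the table has a single entry and nothing is lost.

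For precomputation, I would fill the table by the ratio recurrence
\[
 c(x + (t+1)\eps^i) = c(x + t\eps^i)\,\frac{\abs{x}+t+1}{t+1},
\]
which holds because $x_i = 0$. The main obstacle is that this recurrence divides by $t+1$, which need not be invertible modulo $p^n$. To handle it, I would carry each entry in split form $(p^{e_t}, u_t)$, where $e_t = \vp(c(x+t\eps^i))$ and $u_t$ is a unit modulo $p^n$. At each step the factors $\abs{x}+t+1$ and $t+1$ are split into their $p$-power and unit parts: the exponents are added or subtracted, and the units are multiplied, with the unit of $t+1$ inverted modulo $p^n$. An entry is reconstructed as $p^{e_t} u_t \bmod p^n$, and is $0$ whenever $e_t \geq n$. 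Treating each step as $O(1)$ word operations, as the paper's model does for arithmetic modulo a fixed $p^n$, the whole table is built in $O(p^{N})$ time and stored in $O(p^N)$ space.

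Alternatively, one can avoid unit inversions entirely by computing each entry from Pascal-type additive recurrences. I would mention this as a fallback if the chosen cost model objects to modular inversion.

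Finally, when $d(x) = 0$ we get $N = n$, so the period is $p^n$. This agrees with the Lucas-type behavior expected when all coordinates of $x$ are single base-$p$ digits.
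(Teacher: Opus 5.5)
Your argument is the paper's own: the paper justifies the proposition only by invoking Theorem~\ref{thm:periodicity} at exponent $n+d(x)$, which gives a guaranteed period $p^{\,n+d(x)}$ modulo $p^n$ and hence a lookup at $t \bmod p^{\,n+d(x)}$. Your split-form ratio recurrence supplies the $O(1)$-per-entry table fill that the paper merely asserts, with two small caveats: a modular inversion at every step is $O(1)$ only by assumption, although inverting the running product of the unit parts of $1,\dots,p^{\,n+d(x)}$ once and recovering the intermediate inverses backwards makes the total genuinely $O(p^{\,n+d(x)})$; and your Pascal-recurrence fallback sweeps a two-dimensional array, so in general it does not stay within that bound.
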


Theorem~\ref{thm:periodicity} controls $c(x + t\eps^i)$ only modulo $p^{\,n - d(x)}$ at period $p^n$, so a table of length $p^n$ is in general too short. Invoking the theorem at exponent $n + d(x)$ restores a guaranteed period modulo $p^n$. For $x = (0,3)$, $p = 2$, $n = 2$ (so $d(x) = 1$) the length-$4$ table is $[1,0,2,0]$, yet $c(4,3) = 35 \equiv 3 \pmod 4$ while $T[4 \bmod 4] = 1$; the length-$8$ table $[1,0,2,0,3,0,0,0]$ is periodic.

\begin{algorithm}[H]
\caption{Periodic Lookup Table for $c(x + t\eps^i) \bmod p^n$}
\label{alg:lookup}
\begin{algorithmic}[1]
\Require Base $x \in \Mon$ with $x_i = 0$, index $i$, prime $p$, exponent $n$
\Ensure Precomputed table $T[0 \ldots P - 1]$, $P = p^{\,n+d(x)}$
\State $P \gets p^{\,n + d(x)}$
 \Comment{Theorem~\ref{thm:periodicity} at exponent $n + d(x)$: period $P$ modulo $p^n$}
\For{$t = 0, 1, \ldots, P - 1$}
 \State $T[t] \gets c(x + t\,\eps^i) \bmod p^n$
\EndFor
\State \Return $T$ \Comment{Query: $c(x + t\eps^i) \bmod p^n = T[t \bmod P]$}
\end{algorithmic}
\end{algorithm}

The following theorem of~\cite{bernardletac} establishes $p$-adic convergence of the multinomial coefficients. (See \S6 of~\cite{bernardletac} for the proof.)

\begin{theorem}
\label{thm:padic}
For $x \in \Mon$, $x \neq 0$:
\[
 \vp\!\bigl(c(p^n x) - c(p^{n-1}x)\bigr) \;\geq\; n + \vp(x) + \vp(c(x)).
\]
Consequently, $c(p^n x)$ converges in $\Zp$ to a $p$-adic integer $L(x)$.
\end{theorem}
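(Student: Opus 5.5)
The plan is to compare $c(p^n x)$ with $c(p^{n-1}x)$ multiplicatively: show that their ratio is a $p$-adic unit congruent to $1$ modulo $p^{\,n+\vp(x)}$. Put $y = p^{n-1}x$. For any integer $N \geq 0$ one has $(pN)! = p^N N!\, Q(N)$, where
\[
 Q(N) = \prod_{\substack{1 \leq j \leq pN \\ p \nmid j}} j .
\]
Since $\abs{py} = p\abs{y}$ and the powers of $p$ cancel because $\sum_i y_i = \abs{y}$, this gives the exact identity
\[
 c(py) = c(y)\cdot R, \qquad R = \frac{Q(\abs{y})}{\prod_i Q(y_i)} .
\]
Here $R$ is a quotient of $p$-adic units, hence $R \in \Zp^\times$. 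Therefore
\[
 c(p^n x) - c(p^{n-1}x) = c(p^{n-1}x)\,(R-1).
\]
The first factor has valuation $\vp(c(p^{n-1}x)) = \vp(c(x))$, by the digit-shift invariance of Kummer carries (Theorem~\ref{thm:kummer}, as already used in Proposition~\ref{prop:alg3}). The claim thus reduces to $\vp(R-1) \geq n + \vp(x)$.

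For this I will use a generalized Wilson theorem. Let $k \geq 1$ and let $M$ be a multiple of $p^k$. The units of $\Z/p^k\Z$ in $[1,M]$ come in $M/p^k$ complete blocks of residues. The product of all units modulo $p^k$ is $\varepsilon_k = -1$, except that $\varepsilon_k = +1$ when $p = 2$ and $k \geq 3$. Hence
\[
 \prod_{1 \leq j \leq M,\ p \nmid j} j \;\equiv\; \varepsilon_k^{\,M/p^k} \pmod{p^k}.
\]
Take $k = n + \vp(x)$. Each $p y_i = p^n x_i$ is divisible by $p^k$, and so is $p\abs{y} = p^n\abs{x}$, because $\vp(\abs{x}) \geq \vp(x)$. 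So
\[
 Q(y_i) \equiv \varepsilon_k^{\,x_i/p^{\vp(x)}} \quad\text{and}\quad Q(\abs{y}) \equiv \varepsilon_k^{\,\abs{x}/p^{\vp(x)}} \pmod{p^k}.
\]
The exponents are additive, $\sum_i x_i/p^{\vp(x)} = \abs{x}/p^{\vp(x)}$, so the signs cancel exactly and $R \equiv 1 \pmod{p^k}$. This is the required bound.

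The main obstacle is exactly this sign bookkeeping, and also getting the sharper modulus $p^{\,n+\vp(x)}$ rather than just $p^n$. Both are handled by choosing $k = n+\vp(x)$ from the start. This choice is legitimate only because every relevant length, including $\abs{x}$, is divisible by $p^{\vp(x)}$. The multiplicativity of $\varepsilon_k^{(\cdot)}$ then removes the need to determine $\varepsilon_k$ at all. That matters for $p = 2$, where $\varepsilon_k$ changes with $k$, and for the small cases $k \leq 2$.

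For the convergence statement, the valuation of successive differences satisfies
\[
 \vp\bigl(c(p^n x) - c(p^{n-1}x)\bigr) \;\geq\; n + \vp(x) + \vp(c(x)).
\]
Here $\vp(x)$ is finite because $x \neq 0$, and $\vp(c(x))$ is finite because $c(x) \geq 1$. So the right-hand side tends to $\infty$, and $(c(p^n x))_n$ is Cauchy in $\Zp$. By completeness it has a limit $L(x) \in \Zp$.
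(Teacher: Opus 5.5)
Your proof is correct and complete. The paper does not prove this statement itself; it only cites \S6 of Bernard--Letac. So there is no internal argument to compare against, and yours is a genuine self-contained proof.

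The route is worth a comment. The exact factorization $c(p^n x) = c(p^{n-1}x)\,R$ with $R \in \Zp^\times$ turns the valuation of the difference into exactly $\vp(c(x)) + \vp(R-1)$. The summand $\vp(c(x))$ then comes for free from the shift-invariance of Kummer carries. The remaining congruence $R \equiv 1 \pmod{p^{\,n+\vp(x)}}$ needs only that every complete block of units modulo $p^k$ has the same product. As you note, the actual value of that product (Wilson/Gauss) never matters, because the block counts add up exactly.

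A natural alternative is the polynomial congruence $(\sum_i t_i)^{p^k} \equiv (\sum_i t_i^p)^{p^{k-1}} \pmod{p^k}$ with $k = n + \vp(x)$, raised to the power $\abs{x}/p^{\vp(x)}$, followed by comparing the coefficients of $\prod_i t_i^{p^n x_i}$. That gives $c(p^n x) \equiv c(p^{n-1}x) \pmod{p^{\,n+\vp(x)}}$ directly. However, it loses the extra $\vp(c(x))$ summand, and combining it with your factorization recovers only $\vp(R-1) \geq n + \vp(x) - \vp(c(x))$. Your multiplicative decomposition is therefore what delivers the full bound.

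It also makes clear that any sharper congruence for the unit $R$ transfers verbatim to the theorem. Examples are Jacobsthal-type congruences for $p \geq 5$, applied to $R$ written as a product of binomial ratios.
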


Theorem~\ref{thm:padic} yields the following guarantee for Algorithm~\ref{alg:padic}.

\begin{proposition}
\label{prop:alg5}
Algorithm~\ref{alg:padic} computes $L(x) \bmod p^N$ in $O(N)$ iterations, with each iteration requiring $O(|I| \log p)$ arithmetic operations.
\end{proposition}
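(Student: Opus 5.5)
The plan is to read off the algorithm from Theorem~\ref{thm:padic} and account for its cost directly. Algorithm~\ref{alg:padic} is not displayed yet, so I take it to be the natural procedure: for $n = 1, 2, \ldots$, compute $c(p^n x) \bmod p^N$ and stop once the residue is guaranteed to have stabilised.

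The first step is the stopping index. Theorem~\ref{thm:padic} gives $\vp\bigl(c(p^n x) - c(p^{n-1}x)\bigr) \geq n + \vp(x) + \vp(c(x)) \geq n$. Telescoping, and using the ultrametric inequality, gives $\vp\bigl(L(x) - c(p^n x)\bigr) \geq n+1$. Hence $c(p^n x) \equiv L(x) \pmod{p^N}$ as soon as $n \geq N-1$, and the algorithm can stop after at most $N$ iterations. If one also uses the extra slack $\vp(x) + \vp(c(x))$, the bound improves to $N - \vp(x) - \vp(c(x))$, but $O(N)$ already suffices.

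The second step, which I expect to be the main obstacle, is showing that one iteration costs only $O(|I|\log p)$ arithmetic operations. Recomputing $c(p^n x)$ from factorials would be far too expensive. Instead, the update $c(p^{n-1}x) \to c(p^n x)$ should be done multiplicatively modulo $p^N$, by tracking a unit part and a $p$-power separately. The unit part is handled through the $p$-adic factorial decomposition $(pk)! = p^k\, k!\, \prod_{j \leq pk,\ p\nmid j} j$, which links to Morita's $\Gamma_p$. Applied to each coordinate, the passage from $p^{n-1}x$ to $p^n x$ multiplies the multinomial coefficient by a ratio of unit products. The powers $p^{k}$ cancel, since $\sum_i p^{n-1}x_i = p^{n-1}\abs{x}$.

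Each unit product should then be reduced, by Wilson-type congruences (products of units over full residue blocks are $\pm1$ modulo powers of $p$), to $O(\log p)$ multiplications per coordinate modulo $p^N$. Doing this for $|I|$ coordinates plus $\abs{x}$ gives $O(|I|\log p)$ operations per iteration. Here the operation count treats arithmetic modulo $p^N$ as unit cost. If this reduction proves too delicate, my fallback is to define the iteration as the step counted by Theorem~\ref{thm:padic} together with the Kummer computation of Algorithm~\ref{alg:kummer} for the valuation. That costs $O(|I|)$ digit operations per level, with $O(\log p)$ word cost per digit.

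Combining the two steps, the total is $O(N)$ iterations at $O(|I|\log p)$ operations each, which is the stated bound.
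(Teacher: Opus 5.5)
The genuine gap is in your second step, the per-iteration cost. Your multiplicative update is a correct identity. Write $M!_p$ for the product of the integers $1\le j\le M$ prime to $p$; then $c(p^n x)=c(p^{n-1}x)\cdot\frac{(p^n\abs{x})!_p}{\prod_i (p^n x_i)!_p}$.

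The Wilson step fails in exactly the range that matters. For $n<N$, the product $(p^n y)!_p$ runs over $y$ complete systems of units modulo $p^n$, not modulo $p^N$. So the generalized Wilson theorem only gives $(p^n y)!_p\equiv(\pm1)^y\pmod{p^n}$. The residue modulo $p^N$ is, up to sign, $\Gamma_p(p^n y+1)\bmod p^N$ (Morita's $\Gamma_p$), and no block identity supplies its remaining $N-n$ digits. For $n\ge N$ your claim is true, but by your own first step those iterations are not needed.

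This is not a technicality. For $x=(1,1)$ and $p\ge5$, the very first iteration must produce $c(px)=\binom{2p}{p}=2\binom{2p-1}{p-1}$ modulo $p^N$; in your scheme the ratio $c(px)/c(x)$ is exactly $\binom{2p-1}{p-1}$. For $N\ge4$ this decides whether $p$ is a Wolstenholme prime, that is, whether $\binom{2p-1}{p-1}\equiv1\pmod{p^4}$, equivalently $p\mid B_{p-3}$. In your cost model (unit-cost arithmetic modulo $p^N$), no method using $O(\log p)$ operations is known for that.

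Your fallback does not close the gap either. Algorithm~\ref{alg:kummer} returns only $\vp(c(p^n x))$. By digit-shift invariance this equals $\vp(c(x))$ for every $n$, and it says nothing about the unit part of $c(p^n x)$ modulo $p^N$. An ``iteration'' redefined as a valuation computation therefore no longer computes $L(x)\bmod p^N$. The paper itself gives no argument for the $O(\abs{I}\log p)$ bound, so there is nothing to borrow there. What your proposal actually establishes is $O(N)$ iterations, each consisting of one evaluation of $c(p^k x)\bmod p^N$ at whatever that evaluation costs.

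Your first step is sound. It is exactly how Theorem~\ref{thm:padic} yields the iteration bound, which is the only justification the paper offers for the proposition. Telescoping $\vp\bigl(c(p^jx)-c(p^{j-1}x)\bigr)\ge j$ with the ultrametric inequality gives $c(p^nx)\equiv L(x)\pmod{p^N}$ for $n\ge N-1$.

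However, the displayed Algorithm~\ref{alg:padic} is not quite the procedure you analyse: it also returns as soon as $c(p^kx)\equiv c(p^{k-1}x)\pmod{p^N}$. Theorem~\ref{thm:padic} bounds each successive difference only from below, so agreement at some $k<N-1$ does not by itself force agreement at later steps. Your argument certifies the variant without the early exit. The early exit as written needs an additional input, for instance that $\vp\bigl(c(p^kx)-c(p^{k-1}x)\bigr)$ is nondecreasing in $k$.
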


For an index set $K$ we write $\Mon_K$ for the free abelian monoid on $K$, so that an element $z \in \Mon_{I \times J}$ is a finitely supported matrix, and its margins are its row-sum vector in $\Mon_I$ and its column-sum vector in $\Mon_J$. The limit satisfies a product identity (Th\'eor\`eme~10 of~\cite{bernardletac}). If $x \in \Mon_I$ and $y \in \Mon_J$ (for a second index set $J$) satisfy $|x| = |y|$ and $\inf(\vp(x), \vp(y)) = 0$, then
\[
 L(x)\,L(y) \;=\; \sum_z L(z),
\]
where $z$ ranges over the elements of $\Mon_{I \times J}$ whose margins are $p^k x$ and $p^k y$ for some $k \geq 0$ and which satisfy $\vp(z) = 0$; the series converges $p$-adically. The restriction $\inf(\vp(x), \vp(y)) = 0$ is only a normalization, since $L(px) = L(x)$.

Determining whether the limit $L(x)$ possesses an information-theoretic interpretation, such as through the entropy of a distribution associated with $x$, constitutes Open Problem~P5 (Section~\ref{sec:open}).

\begin{algorithm}[H]
\caption{$p$-adic Limit $L(x) = \lim_{n\to\infty} c(p^n x) \in \Zp$}
\label{alg:padic}
\begin{algorithmic}[1]
\Require $x \in \Mon$, prime $p$, target precision $N$
\Ensure $L(x) \bmod p^N$
\State $\text{prev} \gets c(x) \bmod p^N$
\For{$k = 1, 2, \ldots, N$}
 \State $\text{curr} \gets c(p^k x) \bmod p^N$
 \If{$\text{curr} \equiv \text{prev} \pmod{p^N}$}
 \State \Return $\text{curr}$
 \Comment{Convergence detected at precision $p^N$}
 \EndIf
 \State $\text{prev} \gets \text{curr}$
\EndFor
\State \Return $\text{curr}$
\end{algorithmic}
\end{algorithm}

\subsection{Two worked examples}
\label{sec:examples}

\subsubsection*{Example 1: prime modulus}

Let $I = \{0,1\}$ and $m = 5$ (prime). The random walk $S_t = (k, t-k)$ records draws of each symbol; $c(k,t-k) = \binom{t}{k}$. The stopping condition is $\binom{t}{k} \equiv 0 \pmod{5}$. The \emph{level}~$t$ of the walk is the set of states $x$ with $\abs{x} = t$; since $\abs{S_t} = t$, the walk occupies level~$t$ exactly at time~$t$.

\medskip
For the first stopping level, we have according to Lucas' theorem, $\binom{t}{k} \equiv 0 \pmod{5}$ first occurs at $t = 5$ for $k \in \{1,2,3,4\}$. If all five draws are identical, the random walk does not stop at this stage; otherwise, it stops at $t = 5$.

The state $S_5 = (2,3)$ is reached when two symbols $0$ and three symbols $1$ have appeared. In this case, $c(2,3) = \binom{5}{2} = 10 \equiv 0 \pmod{5}$. Since no state with $t < 5$ lies in $H_5$, no path can stop before $t = 5$. Therefore, $m_E((2,3)) = c((2,3)) = 10$, and every path to $(2,3)$ is observable. This property is unique to the first stopping level; see Example~2 and Remark~\ref{rem:avoiding-pitfall}.

\medskip
Each path to $(2,3)$ corresponds to a sequence of five draws containing exactly two 0s and three 1s. There are $\binom{5}{2} = 10$ such paths, which can be listed lexicographically:

\begin{center}
\begin{tabular}{cl|cl}
\toprule
\textbf{Path} & \textbf{Sequence} & \textbf{Path} & \textbf{Sequence} \\
\midrule
1 & $0\ 0\ 1\ 1\ 1$ & 6 & $1\ 0\ 1\ 0\ 1$ \\
2 & $0\ 1\ 0\ 1\ 1$ & 7 & $1\ 0\ 1\ 1\ 0$ \\
3 & $0\ 1\ 1\ 0\ 1$ & 8 & $1\ 1\ 0\ 0\ 1$ \\
4 & $0\ 1\ 1\ 1\ 0$ & 9 & $1\ 1\ 0\ 1\ 0$ \\
5 & $1\ 0\ 0\ 1\ 1$ & 10 & $1\ 1\ 1\ 0\ 0$ \\
\bottomrule
\end{tabular}
\end{center}

Each path has probability $\pi_0^2\pi_1^3$. Since $10/5 = 2$, assign two consecutive paths to each group:

\begin{center}
\begin{tabular}{ccc}
\toprule
\textbf{Group $A_j$} & \textbf{Paths} & \textbf{Probability} \\
\midrule
$A_1$ & 1, 2 & $2\pi_0^2\pi_1^3$ \\
$A_2$ & 3, 4 & $2\pi_0^2\pi_1^3$ \\
$A_3$ & 5, 6 & $2\pi_0^2\pi_1^3$ \\
$A_4$ & 7, 8 & $2\pi_0^2\pi_1^3$ \\
$A_5$ & 9, 10 & $2\pi_0^2\pi_1^3$ \\
\bottomrule
\end{tabular}
\end{center}

Each group accounts for exactly $\tfrac{1}{5}$ of the total probability mass at the stopping event $\{S_5 = (2,3)\}$, regardless of $\pi_0, \pi_1$. For instance, the observed sequence $1,0,1,0,1$ is Path~6, yielding outcome~$A_3$.

\subsubsection*{Example 2: composite modulus}

Let $I = \{0,1,2\}$ and $m = 6 = 2 \times 3$ (composite). The state $S_t = (k_1,k_2,k_3)$ records draw counts; $c(k_1,k_2,k_3) = (k_1+k_2+k_3)!/(k_1!\,k_2!\,k_3!)$. The stopping condition is $6 \mid c(S_t)$.

\medskip
\noindent For the first stopping level at $t = 3$, we have

\begin{center}
\begin{tabular}{ccc}
\toprule
$(k_1,k_2,k_3)$ & $c(k_1,k_2,k_3)$ & $c(k_1,k_2,k_3)\bmod 6$ \\
\midrule
$(3,0,0)$ and permutations & $1$ & $1$ \\
$(2,1,0)$ and permutations & $3$ & $3$ \\
$(1,1,1)$ & $6$ & $0$ \\
\bottomrule
\end{tabular}
\end{center}

The unique stopping state at $t = 3$ is $(1,1,1)$. The 6 paths are all permutations of $\{0,1,2\}$:

\begin{center}
\begin{tabular}{ccc}
\toprule
\textbf{Group $A_j$} & \textbf{Path} & \textbf{Sequence} \\
\midrule
$A_1$ & 1 & $0\ 1\ 2$ \\
$A_2$ & 2 & $0\ 2\ 1$ \\
$A_3$ & 3 & $1\ 0\ 2$ \\
$A_4$ & 4 & $1\ 2\ 0$ \\
$A_5$ & 5 & $2\ 0\ 1$ \\
$A_6$ & 6 & $2\ 1\ 0$ \\
\bottomrule
\end{tabular}
\end{center}

Each singleton group has probability $\pi_0\pi_1\pi_2$; the walk stops optimally in just 3 draws whenever all three symbols appear in the first round.

\medskip
For the continuation at $t = 4$, if the first 3 draws are, say, $0,0,1$ then $S_3 = (2,1,0)$ and $c(2,1,0) = 3 \not\equiv 0 \pmod 6$. At $t = 4$:

\begin{center}
\begin{tabular}{ccc}
\toprule
$(k_1,k_2,k_3)$ (up to permutation) & $c(k_1,k_2,k_3)$ & $c(k_1,k_2,k_3) \bmod 6$ \\
\midrule
$(4,0,0)$ & 1 & 1 \\
$(3,1,0)$ & 4 & 4 \\
$(2,2,0)$ & 6 & 0 \\
$(2,1,1)$ & 12 & 0 \\
\bottomrule
\end{tabular}
\end{center}

Both $(2,2,0)$ and $(2,1,1)$, together with their permutations, therefore lie in $H_6$ and stop the walk at $t = 4$.

For the state $(2,1,1)$, the total number of paths is $c = 12$. However, six of these twelve paths pass through $(1,1,1) \in H_6$ at $t = 3$, specifically those whose fourth symbol is
the repeated~$0$, and would have terminated at that point. The observable paths are the $m_E((2,1,1)) = 6$ avoiding ones, listed in lexicographic order as $0012$, $0021$, $0102$, $0201$, $1002$, and $2001$; these form six singleton
groups. The observed path $0,0,1,2$ has avoiding rank~$0$, which yields
group~$A_1$.

\medskip
Composite modulus $m$'s irregularity makes the group sizes $m_E(x)/6$ vary, and more notably, some states of $H_6$ are unreachable. Every path to $(2,2,1)$ encounters $H_6$ at an earlier stage, so $m_E((2,2,1)) = 0$ and the walk never terminates at this state. Furthermore, no stopping is possible at $t = 5$, since all three $H_6$-states at that level are unreachable; the same situation occurs for $(2,2,2)$ at $t = 6$. This irregularity is the primary obstacle to obtaining a closed-form expression for $E(T)$ when $m$ is composite (see Section~\ref{sec:complexity}).

\begin{center}
\begin{tabular}{ccccc}
\toprule
$t$ & Representative state & $c(S_t)$ & $m_E(S_t)$ & Group size \\
\midrule
3 & $(1,1,1)$ & 6 & 6 & 1 \\
4 & $(2,1,1)$ & 12 & 6 & 1 \\
4 & $(2,2,0)$ & 6 & 6 & 1 \\
5 & $(2,2,1)$ & 30 & 0 & unreachable \\
6 & $(3,2,1)$ & 60 & 12 & 2 \\
6 & $(4,1,1)$ & 30 & 18 & 3 \\
6 & $(2,2,2)$ & 90 & 0 & unreachable \\
\bottomrule
\end{tabular}
\end{center}

\section{Complexity and Entropy}
\label{sec:complexity}

The exact expected stopping time of Algorithm~\ref{alg:fairsample} for prime modulus $m = p$ is now derived. The following theorem is Corollaire~4 in \S3 of~\cite{bernardletac}.

\begin{theorem}
\label{thm:ET}
Let $m = p$ be a prime and $\bfpi$ any non-degenerate distribution on $I$. Define $U_n(\bfpi) = \sum_{i \in I} \pi_i^{p^n}$.
\[
 E(T) = p \prod_{n=1}^{\infty} \frac{1 - (U_n(\bfpi))^p}{1 - U_n(\bfpi)}\,.
\]
\end{theorem}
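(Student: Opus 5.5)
The plan is to compute $E(T)$ through the tail-sum formula $E(T)=\sum_{t\ge 0}P(T>t)$. The key claim, which I would prove first, is that $P(T>t)$ factors over the base-$p$ digits of $t$:
\[
 P(T>t) \;=\; \prod_{n\ge 1} U_n(\bfpi)^{\,h_n(t)} .
\]
The digit $h_0(t)$ is deliberately absent from this product. Granting the claim, the sum over $t$ splits digit by digit, because $t\mapsto (h_n(t))_{n\ge 0}$ is a bijection onto finitely supported digit sequences. This gives
\[
 E(T)=\Bigl(\sum_{h=0}^{p-1}1\Bigr)\prod_{n\ge1}\sum_{h=0}^{p-1}U_n^{h}
 = p\prod_{n\ge1}\frac{1-U_n^p}{1-U_n}.
\]
The infinite product converges because $U_n\le (\max_i\pi_i)^{p^n-1}\to 0$ doubly exponentially, and this uses non-degeneracy. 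Rearranging the series is legitimate since all terms are nonnegative.

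To prove the claim I would characterize the draw sequences $a_1,\dots,a_t$ for which none of the prefixes $S_1,\dots,S_t$ lies in $H_p$. Write $t=\sum_{n=0}^{d} h_n p^n$ and cut $\{1,\dots,t\}$ into consecutive blocks, taking the most significant digit first: $h_d$ blocks of length $p^d$, then $h_{d-1}$ blocks of length $p^{d-1}$, and so on down to $h_0$ blocks of length $1$. The structural assertion is that a sequence avoids $H_p$ up to time $t$ if and only if each block is constant. Blocks of length $1$ impose no constraint, which is why $h_0$ does not appear.

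Given this assertion, the blocks are independent. A block of length $p^n$ is constant with probability $\sum_i\pi_i^{p^n}=U_n$, so the product formula follows at once. I would first sanity-check the assertion on small cases. No stop occurs before level $p$, since $c(x)$ is prime to $p$ when $|x|<p$. At level $p$ the walk fails to stop exactly when all draws agree. For $p$ odd, $(p,p)\notin H_p$ because $c(p,p)\equiv 2\pmod p$.

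The proof of the structural assertion goes by induction on $t$, using the multinomial Lucas criterion that follows Theorem~\ref{thm:kummer}: $S\notin H_p$ if and only if in every column $\alpha$ the digits satisfy $\sum_i h_\alpha(S_i)<p$.

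For the \emph{if} direction, take a block-constant sequence. After completing the blocks of sizes $p^d,\dots,p^{n+1}$, every coordinate is a multiple of $p^{n+1}$, and the digit sums in columns $>n$ add up to the digits of $t$ there, so they are $<p$ without carries. Inside the current block of length $p^n$, filled with a single symbol $i$, the partial count $k<p^n$ only alters the digits of coordinate $i$ in columns $\le n$. In those columns every other coordinate has digit $0$, so no column sum reaches $p$.

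For the \emph{only if} direction, suppose the first non-constant position occurs inside a block of length $p^n$ with $n\ge1$. Then within that block two distinct symbols $i\neq j$ each have appeared $k_i,k_j\ge1$ times, with $k_i+k_j\le p^n$. I must exhibit a prefix whose low-order columns, meaning those $\le n-1$ relative to the block start, carry.

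This is the main obstacle. The simplest candidate is the prefix ending exactly at the block boundary, or the first moment the within-block count reaches a multiple of $p^{n'}$ where it cannot be represented carry-free. What I would try first is a sub-induction on $n$. Treat the block of length $p^n$ as $p$ sub-blocks of length $p^{n-1}$. By induction each sub-block must be constant, or the walk has already stopped. Then look at the sub-block symbols $i_1,i_2,\dots$ as a walk one scale up. The prefix of exactly $p$ sub-blocks has column-$(n-1)$ digit vector equal to the counts of the symbols, which sum to $p$, so it carries. That prefix therefore lies in $H_p$ unless the counts vector is $p\,\eps^{i}$, meaning the whole block is constant.

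The delicate point is checking that a carry in these low columns cannot be absorbed. The higher digits accumulated from earlier blocks all sit in columns $>n$, so they do not interact with columns $\le n-1$. This separation of scales is exactly what makes the Lucas test local, and I would verify it carefully using the most-significant-first ordering of the blocks.
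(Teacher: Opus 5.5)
Your proposal is correct. It is essentially the route the paper points to: the tail sum $E(T)=\sum_{t}P(T>t)$ combined with the digit-product identity of Bernard--Letac~\S3, which is exactly your factorization $P(T>t)=\prod_{n\ge1}U_n^{h_n(t)}$, except that you derive it from your block-constancy characterization rather than citing it.

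The step you flag as the main obstacle needs no sub-induction. The current block of length $p^n$ starts at a state whose coordinates are all divisible by $p^n$ (not $p^{n+1}$, if earlier blocks of the same length precede it). If the block is non-constant, its count vector $w$ satisfies $|w|=p^n$ with every $w_j<p^n$, so a carry-free addition would give $|w|\le p^n-1$. Hence some column $<n$ carries at the block's end; that column is untouched by the start state, and the block's end occurs at a time $\le t$, so the walk has entered $H_p$ by then.
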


\begin{proof}
See \S3 of~\cite{bernardletac}. The proof follows from the generating function identity (\S3 of~\cite{bernardletac}) using the tail-sum representation $E(T) = \sum_{t \geq 0} P(T > t)$.
\end{proof}

\begin{remarknum}\label{rem:simple-approx}
A na\"ive renewal argument at the first scale $t = p$ gives
\begin{equation}\label{eq:simple-approx}
 E(T) \;\approx\; \frac{p}{1 - U_1(\bfpi)}
 \;=\; \frac{p}{1 - \displaystyle\sum_{i \in I} \pi_i^p}\,.
\end{equation}
This is a renewal estimate, not a truncation of the product to its $n = 1$ factor. Retaining only that factor would give $p(1 - U_1^p)/(1 - U_1)$, a different quantity. The discrepancy arises because the shifted walk from $p\,\eps^i$ does not exhibit the same carry structure as a new walk. When the position-$1$ digit in base~$p$ is non-zero, carries from digit~$0$ may propagate to digit~$1$ and higher, resulting in additional stopping events not present in the original walk. (For $p = 2$ with a fair binary source, the simple formula yields~$4$ while the exact product yields~$3.40$.)

The accuracy of the approximation increases rapidly as $p$ grows; the two correction factors relating~\eqref{eq:simple-approx} to the exact product are quantified in the proof of Theorem~\ref{thm:ET-renyi}. Empirically, the simple formula is accurate to within $1\%$ for all non-degenerate $\bfpi$ once $p \geq 5$, while at $p = 3$ the error can exceed $1\%$; like the unimodality observed in Section~\ref{sec:unimodality}, this accuracy claim is numerical and we do not prove it. Theorem~\ref{thm:ET-renyi} shows that the simple formula always overestimates $E(T)$.
\end{remarknum}

For the uniform source the product specializes as follows.

\begin{proposition}
\label{prop:ET-uniform}
Under the uniform distribution $\pi_i = 1/|I|$ for all $i$:
\[
 E(T) = p \prod_{n=1}^{\infty}
 \frac{1 - |I|^{p(1-p^n)}}{1 - |I|^{1-p^n}}\,.
\]
For sufficiently large $|I|$ or $p$, the expression is well approximated by $p/(1 - |I|^{1-p})$.
\end{proposition}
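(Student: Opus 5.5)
The plan is to substitute the uniform distribution directly into the product formula of Theorem~\ref{thm:ET}. With $\pi_i = 1/|I|$ for every $i$ (so $I$ is finite and $|I| \geq 2$, which is exactly non-degeneracy), the power sums are
\[
 U_n(\bfpi) = \sum_{i \in I} |I|^{-p^n} = |I|^{1 - p^n}, \qquad (U_n(\bfpi))^p = |I|^{p(1-p^n)}.
\]
Inserting these into each factor $(1 - U_n^p)/(1 - U_n)$ gives the displayed product verbatim. Since $|I| \geq 2$ and $p^n \geq 2$, each $U_n$ lies in $(0,1)$, so no factor is singular. Convergence of the infinite product is already part of Theorem~\ref{thm:ET}, and it can also be checked directly: the $n$-th factor equals $1 + U_n + \cdots + U_n^{p-1}$, and $\sum_n U_n \leq \sum_n 2^{1-p^n} < \infty$.

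For the approximation claim, I will write $u = U_1 = |I|^{1-p}$. The aim is to compare the exact product with $p/(1-u)$ and to show that the ratio tends to $1$ as $|I| \to \infty$ or $p \to \infty$. I would factor
\[
 \frac{E(T)}{p/(1-u)} = (1 - u^p) \prod_{n \geq 2} \frac{1 - U_n^p}{1 - U_n}.
\]
The prefactor satisfies $1 - u^p \in (1 - u, 1)$. For the remaining product, $\log\bigl((1-U_n^p)/(1-U_n)\bigr) \leq -\log(1 - U_n) \leq U_n/(1 - U_n) \leq 2U_n$ whenever $U_n \leq 1/2$. Hence the product over $n \geq 2$ lies in $\bigl[1, \exp\bigl(2\sum_{n\geq 2} |I|^{1-p^n}\bigr)\bigr]$. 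That sum is dominated by its first term $|I|^{1-p^2}$, which is at most $u^{p+1}$ up to a bounded geometric factor. Both the product's deviation and $u$ go to $0$ as $|I|$ or $p$ grows, so the relative error is $O(u)$, and the ratio tends to $1$.

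The only step needing care is making ``well approximated'' precise. I would state it as a relative error of $O(|I|^{1-p})$ uniformly over the stated regimes. The worst case is $|I| = p = 2$, where $u = 1/2$, so the bound is informative only for large parameters, which is all the statement claims. Everything else is direct substitution into Theorem~\ref{thm:ET}.
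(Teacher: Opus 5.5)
Your proof is correct and follows the paper's route. The paper gives no separate proof here: the formula comes from substituting $U_n = |I|^{1-p^n}$ into Theorem~\ref{thm:ET}. The approximation is controlled by the same two correction factors, $1-U_1^{\,p}$ and $\prod_{n\geq 2}(1-U_n^{\,p})/(1-U_n)$, that the proof of Theorem~\ref{thm:ET-renyi} isolates.

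One small sharpening is available. Your ratio $(1-u^p)\prod_{n\geq 2}(1-U_n^{\,p})/(1-U_n)$ strictly exceeds $1-u^p$ and is strictly below $1$ by Theorem~\ref{thm:ET-renyi}. So the relative error is in fact less than $u^p = |I|^{p(1-p)}$, which is better than your $O(u)$.
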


The following proposition summarizes the principal monotonicity properties.

\begin{proposition}
\label{prop:ET-monotone}
Let $p$ be a prime and $\bfpi$ a non-degenerate distribution on a finite or countable alphabet~$I$. The quantity $E(T)$ of Theorem~\ref{thm:ET}, which depends on both $\bfpi$ and~$p$, satisfies:
\begin{enumerate}[label=\emph{(\alph*)}]
\item $E(T) > p$ strictly, with equality approached for uniform sources as $|I| \to \infty$ and, for fixed $\bfpi$, as $p \to \infty$.
\item $E(T)$ is Schur-convex in $\bfpi$; it therefore decreases under majorization towards the uniform distribution, and along the binary family $\bfpi = (a, 1-a)$ it is strictly decreasing in $\Hsh(\bfpi)$.
\item $E(T) \to \infty$ as $\pi_i \to 1$ for any fixed $i$
(near-degenerate sources).
\item For fixed $\bfpi$, $E(T)/p \to 1$ as $p \to \infty$.
\end{enumerate}
\end{proposition}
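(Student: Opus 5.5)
The plan is to work directly from the product formula of Theorem~\ref{thm:ET}. Write $E(T) = p\prod_{n\ge1} f(U_n)$ with $f(u) = (1-u^p)/(1-u) = 1 + u + \cdots + u^{p-1}$, and note that each $U_n = \sum_i \pi_i^{p^n}$ lies in $(0,1)$ for a non-degenerate source. The argument uses only two properties of $f$: $f(u) > 1$ for every $u \in (0,1)$, and $f$ is strictly increasing on $(0,1)$. Every item then reduces to a statement about the sequence $(U_n)$.

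\textbf{Parts (a) and (d).} Since each factor satisfies $f(U_n) > 1$, the product exceeds $1$, which gives $E(T) > p$. Convergence follows from $\log f(u) \le \log\bigl(1/(1-u)\bigr) \le u/(1-u)$ together with $U_n \le (\max_i \pi_i)^{p^n - 1}$, which decays doubly exponentially. For the uniform source with $|I|\to\infty$, or for fixed $\bfpi$ with $p \to \infty$, I bound
\[
\log\bigl(E(T)/p\bigr) \le \sum_n \frac{U_n}{1-U_n} \le \frac{1}{1-U_1}\sum_n U_n .
\]
I then show that $\sum_n U_n \le \sum_n M^{p^n-1} \to 0$, where $M = \max_i \pi_i < 1$; for the uniform case $M = |I|^{-1}$. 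This gives (d) and the limiting equality in (a).

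\textbf{Part (c).} If $\pi_i \to 1$, then $U_1 \ge \pi_i^p \to 1$, so $f(U_1) \to p$. A single factor does not diverge, so I also need $f(U_n) \to p$ for more and more $n$ at once. For each fixed $N$, $U_n \ge \pi_i^{p^n} \to 1$ for all $n \le N$, so $\liminf E(T) \ge p \cdot p^N$ for every $N$, hence $E(T) \to \infty$.

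\textbf{Part (b).} Each map $\bfpi \mapsto U_n(\bfpi) = \sum_i \pi_i^{p^n}$ is a symmetric sum of convex functions, hence Schur-convex. Composing with the increasing function $\log f$ and summing over $n$ preserves Schur-convexity, and exponentiating shows $E(T)$ is Schur-convex. For the binary family $(a,1-a)$ with $a \in [1/2,1)$, each $U_n = a^{p^n} + (1-a)^{p^n}$ is strictly increasing in $a$: its derivative $p^n\bigl(a^{p^n-1} - (1-a)^{p^n-1}\bigr)$ is positive for $a > 1/2$. Meanwhile $\Hsh$ is strictly decreasing in $a$ on this range, which yields strict monotonicity of $E(T)$ in $\Hsh$.

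The main obstacle is the countable-alphabet case, where I must justify that Schur-convexity, in the majorization sense for infinite sequences, passes to the infinite sum and product. My plan is to prove the inequality for finitely supported truncations and pass to the limit by monotone convergence of the $U_n$, using the uniform doubly exponential tail bound above to exchange limits.
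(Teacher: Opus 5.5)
Your proposal is correct and follows the paper's proof essentially step for step. You use the same factorization $\log E(T) = \log p + \sum_{n\ge 1}\log f(U_n)$ with $f$ increasing and $f>1$ on $(0,1)$, Schur-convexity of each $U_n$ composed with $\log f$, the fixed-$N$ truncation for (c), and the decay $U_n \to 0$ for (a) and (d). Your departures are refinements only. The $p$-independent bound $\log f(u)\le u/(1-u)$ avoids the factor $(p-1)$ that comes from the paper's $f(u)\le 1+(p-1)u$. Your derivative computation for $U_n(a)$ supplies the strictness along the binary family, which the plain (non-strict) Schur-convexity invoked in the paper does not by itself give.
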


\begin{proof}
Write $g(u) = (1 - u^p)/(1-u) = 1 + u + \cdots + u^{p-1}$, which is strictly increasing on $(0,1)$ and exceeds~$1$ there, so that
\begin{equation}\label{eq:logET}
 \log E(T) = \log p + \sum_{n \geq 1} \log g\bigl(U_n(\bfpi)\bigr).
\end{equation}
Since $U_n(\bfpi) > 0$ every factor is strictly greater than $1$, giving the strict inequality in~(a); the stated limits hold because $U_n \to 0$ both for uniform $\bfpi$ as $|I| \to \infty$ and for fixed $\bfpi$ as $p \to \infty$.

For part~(b), each $U_n(\bfpi) = \sum_i \pi_i^{p^n}$ is a coordinatewise sum of the convex function $t \mapsto t^{p^n}$ and is therefore Schur-convex; composing with the increasing map $\log g$ preserves Schur-convexity, and a sum of Schur-convex functions is Schur-convex, so~\eqref{eq:logET} makes $\log E(T)$ Schur-convex in $\bfpi$; since the exponential is increasing, $E(T)$ is Schur-convex as well. On the binary family $\bfpi = (a,1-a)$ with $a \geq \tfrac12$, majorization is a total order along which $\Hsh$ is strictly decreasing, so $E(T)$ is there strictly decreasing in $\Hsh(\bfpi)$. Note that no such statement can hold for general alphabets in terms of $\Hsh$ alone, since $E(T)$ depends on all the R\'enyi entropies $H_{p^n}$. The source $\bfpi = (0.45,0.45,0.1)$ has $\Hsh = 0.94892$ and $E(T) = 3.07269$ at $p = 2$, whereas $\bfpi' = (0.6,0.2,0.2)$ has the larger $\Hsh = 0.95027$ and the larger $E(T) = 3.31821$.

Part~(c) holds because $U_n \to 1$ as $\pi_i \to 1$ for every fixed~$n$, while $g$ increases to~$g(1^-) = p$; hence for every fixed~$N$, $\log E(T) \geq \log p + \sum_{n=1}^{N} \log g(U_n) \to (N+1)\log p$, and $E(T) \to \infty$, although no single factor diverges. For part~(d), put $q = \max_i \pi_i < 1$; then $U_1 = \sum_i \pi_i\,\pi_i^{\,p-1} \leq q^{\,p-1}$, a bound valid also for countable~$I$, and $U_n = \sum_i (\pi_i^{\,p^{n-1}})^{p} \leq U_{n-1}^{\,p}$, so $U_n \leq U_1^{\,p^{n-1}} \leq U_1^{\,n}$ and $\sum_{n \geq 1} U_n \leq U_1/(1 - U_1)$. Since $g(u) \leq 1 + (p-1)u$, once $p$ is large enough that $U_1 \leq \tfrac{1}{2}$ we get $0 \leq \log\bigl(E(T)/p\bigr) = \sum_{n \geq 1} \log g(U_n) \leq (p-1)\sum_{n \geq 1} U_n \leq 2(p-1)\,q^{\,p-1} \to 0$.
\end{proof}

The computational cost of the sampler is summarized next.

\begin{proposition}
\label{prop:per-step}
The total expected computational cost of Algorithm~\ref{alg:fairsample} is given by:
\[
 \text{Total cost} = O\!\left(E(T) \cdot |I| \cdot \log T \cdot \omega(m)\right)
\]
in expectation, where $\omega(m)$ denotes the number of distinct prime factors of $m$ and $T$ is the (random) stopping time. The path-assignment step, which ranks the observed path among the $m_E(S_T)$ paths avoiding $H_m$ using the backward dynamic program of Algorithm~\ref{alg:fairsample}, incurs an additional $O\bigl(|I| \cdot \prod_{i}(x_i + 1)\bigr)$ arithmetic operations and requires storing integers at termination. These integers have $O(T \log |I|)$ bits, since every count $A(v)$ is at most $|I|^{T}$, where $x = S_T$. This cost is polynomial in $T$ for fixed $|I|$ but exponential in the alphabet size. For $m = 2$ on a binary alphabet, the transfer theorem of Section~\ref{sec:transfer} reduces the assignment cost to $O(T \log T)$ bit operations (Corollary~\ref{cor:P1-binary}); the general case remains Open Problem~P1 (Section~\ref{sec:open}).
\end{proposition}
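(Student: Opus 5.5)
The argument splits into a detection phase and an assignment phase, and I would bound each in turn.

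\textbf{Detection phase.} Each iteration of the repeat loop of Algorithm~\ref{alg:fairsample} draws once and updates $S$ in one coordinate. For each prime power $p_j^{a_j}\,\|\,m$ we then need $v_{p_j}(c(S))$. Running Algorithm~\ref{alg:kummer} from scratch costs $O(|I|\log_{p_j} t)$ by Proposition~\ref{prop:alg1-complexity}, using $\max_i x_i\le t$ and absorbing the $\log_p|I|$ term into the $|I|$ factor for $t\ge|I|$. The incremental remark after that proposition would give $O(\log t)$ per step, but the cruder bound $O(|I|\log t)$ per prime suffices. There are $\omega(m)$ primes, so step $t$ costs $O(|I|\,\omega(m)\log t)$.

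Summing over $t\le T$ gives $O(T\,|I|\,\omega(m)\log T)$. To take expectations I would use $T\log T\le T\log T_{\max}$ only heuristically. Rigorously, the stated bound is interpreted in the form $E\bigl[T|I|\omega(m)\log T\bigr]$, which the statement writes as $E(T)\cdot|I|\cdot\log T\cdot\omega(m)$ with $T$ random, as the phrase ``$T$ is the (random) stopping time'' indicates. If a genuinely deterministic factor is wanted, Jensen's inequality cannot be applied directly, since $t\log t$ is convex. Instead I would use the geometric tail of $T$. For prime $p\mid m$, the event $T>t$ forces avoidance of $H_p$, which has exponentially decaying probability by the renewal structure behind Theorem~\ref{thm:ET}. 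Hence $E[T\log T]=O(E(T)\log E(T))$ up to constants depending on $\bfpi$. I will state this as the main technical point, though the statement's informal use of $\log T$ likely only needs the pathwise bound.

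\textbf{Assignment phase.} The backward dynamic program visits each $v$ in the box $[\mathbf{0},S]$, which has $\prod_i(x_i+1)$ points. At each point it performs one membership test in $H_m$ plus at most $|I|$ additions, giving $O(|I|\prod_i(x_i+1))$ arithmetic operations; membership costs are dominated or precomputed via Algorithm~\ref{alg:kummer}.

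For the bit size of the integers, $A(v)$ counts monotone paths from $v$ to $S$ of length $|S|-|v|\le T$. Each step chooses one of $|I|$ symbols, so $A(v)\le|I|^T$, i.e.\ $O(T\log|I|)$ bits. The rank $r$ sums at most $T|I|$ such terms, adding only $O(\log(T|I|))$ bits. Since $\prod_i(x_i+1)\le (T+1)^{|I|}$, the cost is polynomial in $T$ for fixed $|I|$ and exponential in $|I|$.

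The final sentence, the binary $m=2$ reduction, is deferred to Corollary~\ref{cor:P1-binary}.

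The main obstacle is the expectation bookkeeping in the detection phase, namely passing from the pathwise $T\log T$ to the stated form. I would first try the pathwise reading and only invoke the exponential tail if a deterministic factor is required.
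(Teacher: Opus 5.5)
Your core argument is the reasoning the paper relies on. The paper gives no separate proof, only these ingredients inline: $\omega(m)$ runs of Algorithm~\ref{alg:kummer} per draw, summed pathwise to $O(T\,|I|\,\omega(m)\log T)$; a backward pass over the $\prod_i(x_i+1)$ cells of $[\mathbf{0},S_T]$ with at most $|I|$ additions per cell; and $A(v)\le|I|^{T}$. The pathwise reading of the mixed expression $E(T)\cdot\log T$ is the intended one, and it suffices.

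\textbf{The genuine error.} It lies in the tail argument you call the main technical point. For $p\mid m$ one has $H_m\subseteq H_p$. Hence $T(H_p)\le T(H_m)$ and $\{T(H_p)>t\}\subseteq\{T>t\}$, so the event $\{T>t\}$ does \emph{not} force avoidance of $H_p$. Example~2 of Section~\ref{sec:examples} is a counterexample: for $m=6$ the draws $0,0,1$ put the walk at $(2,1,0)$, where $c=3$, a point of $H_3$, at $t=3$, and the walk continues. Moreover, Theorem~\ref{thm:ET} gives only the mean. The paper also stresses that there is no exact renewal regeneration even for prime $p$ (Remark~\ref{rem:simple-approx}), so ``the renewal structure behind Theorem~\ref{thm:ET}'' is not available.

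If you want a geometric tail, argue on $H_m$ directly. Fix letters $i\neq i'$ with $\pi_i,\pi_{i'}>0$. From any state, use at most $2m$ further draws: letter $i$ until $x_i\equiv-1$, then letter $i'$ until $x_{i'}\equiv 1 \pmod m$. This yields, for every $p_j^{a_j}\,\|\,m$, at least $a_j$ base-$p_j$ carries in $x_i+x_{i'}$. So $m$ divides $\binom{x_i+x_{i'}}{x_i}$, which divides $c(x)$, giving $P(T>2mk)\le(1-\delta)^k$ with $\delta=\min(\pi_i,\pi_{i'})^{2m}$. Note also that ``$E[T\log T]=O(E(T)\log E(T))$ with $\bfpi$-dependent constants'' says nothing beyond $E[T\log T]<\infty$.

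\textbf{Two smaller repairs in the pathwise part.} First, absorbing the $\log_p|I|$ term of Proposition~\ref{prop:alg1-complexity} works only for $t\ge|I|$. When $|I|\gg T$, the leftover $|I|\log_p|I|$ per draw is not $O(|I|\log T)$. The right justification is that column addition of the $x_i$ reproduces the base-$p$ digits of $\abs{x}=t$. So the carry out of position $\alpha$ is at most $\floor{t/p^{\alpha+1}}$ and vanishes once $p^{\alpha+1}>t$, and the loop runs $O(1+\log_p t)$ times for every $t$.

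Second, ``membership costs are dominated'' is not literally true if Algorithm~\ref{alg:kummer} is rerun at every cell of the box. Instead, propagate valuations through the box with $v_p(c(v+\eps^i))=v_p(c(v))+v_p(\abs{v}+1)-v_p(v_i+1)$. After tabulating $v_{p_j}(k)$ for $k\le T$, this costs $O(\omega(m))$ per cell.
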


Since $\omega(m) = O(\log m / \log \log m)$, the per-step cost is $O(|I| \log T \log m)$. For prime $m = p$ one has $\omega(p) = 1$, so with $\log T = O(\log p)$ the cost is $O(|I| \log p)$ per draw.

\begin{remarknum}[No closed form for composite $m$]
\label{prop:composite}
For composite $m = p_1^{a_1} \cdots p_r^{a_r}$ no analogue of the closed form of Theorem~\ref{thm:ET} is known, and its derivation does not extend. The generating-function argument behind Theorem~\ref{thm:ET} rests on the product structure of the avoiding path counts over the base-$p$ digit positions (\S3 of~\cite{bernardletac}), which supposes a single prime base. The stopping set $H_m = \bigcap_{j} H_{p_j^{a_j}}$ couples the digit and carry structures of the distinct bases $p_j$, and these are not simultaneously controllable; even within a single prime base there is no exact renewal regeneration (Remark~\ref{rem:simple-approx}). The irregularities of Example~2 in Section~\ref{sec:examples}, where $H_6$ contains unreachable states and the group sizes $m_E(x)/m$ vary, are symptoms of this coupling.
\end{remarknum}

One might hope to recover a bound of the form $E(T) = O(m)$ by observing that by time $t = m$ most draw patterns have a multinomial coefficient divisible by some prime factor of~$m$. This is not sufficient. Membership in $H_m$ requires the valuation conditions $v_{p_j}(c(x)) \geq a_j$ to hold \emph{simultaneously} for all~$j$. At $t = 6$, for instance, the state $(4,2)$ has $c = 15$, which is divisible by~$3$ but not by~$2$ and so does not lie in $H_6$; the axis states have $c = 1$ at every~$t$. Moreover no bound $O(m)$ can hold uniformly in $\bfpi$, since $E(T) \to \infty$ as $\bfpi$ degenerates for each fixed~$m$. For fixed non-degenerate $\bfpi$, we expect $E(T) = O_{\bfpi}(m)$, and the benchmark of Section~\ref{sec:benchmark} is consistent with this, but we do not prove it; see Open Problem~P2 (Section~\ref{sec:open}).

Remark~\ref{prop:composite} motivates two questions addressed in Section~\ref{sec:open}. The first asks whether a formula for $E(T)$ in terms of the prime-power R\'{e}nyi entropies nonetheless exists (Open Problem~P2). The second asks whether, given that prime moduli admit closed-form analysis, rounding $m$ up to the nearest prime is preferable in practice (Open Problem~P3).

We now establish the precise relationship between the expected stopping time
and the information-theoretic properties of the source.

\begin{definition}
The R\'enyi entropy of order $\alpha > 0$, $\alpha \neq 1$ of a distribution $\bfpi$ on $I$ is
\[
 H_\alpha(\bfpi) = \frac{1}{1 - \alpha} \log \sum_{i \in I} \pi_i^\alpha,
\]
with $H_1(\bfpi) = \lim_{\alpha \to 1} H_\alpha(\bfpi) = -\sum_i \pi_i \log \pi_i$ (the Shannon entropy).
\end{definition}

The entropy reading of Theorem~\ref{thm:ET}, together with its first-order approximation, can now be stated.

\begin{theorem}
\label{thm:ET-renyi}
For prime $m = p$, let $U_n(\bfpi) = \sum_{i \in I} \pi_i^{p^n}$ and note that $U_1(\bfpi) = e^{(1-p)\,H_p(\bfpi)}$ where $H_p$ is the R\'enyi entropy
of order~$p$. Then:
\begin{enumerate}[label=\emph{(\alph*)}]
 \item \textbf{Exact:}\quad
 $\displaystyle E(T) = p \prod_{n=1}^{\infty}
 \frac{1 - (U_n(\bfpi))^p}{1 - U_n(\bfpi)}$.
 \item \textbf{First-order approximation:}\quad
 $\displaystyle E(T) \approx \frac{p}{1 - e^{(1-p)\,H_p(\bfpi)}}$.
\end{enumerate}
The approximation~\emph{(b)} is a renewal estimate at the first scale $t = p$ rather than a truncation of the product, and it always overestimates, in the sense that $E(T) < p/(1 - U_1(\bfpi))$ strictly for every non-degenerate~$\bfpi$; in particular the estimate is never exact.
\end{theorem}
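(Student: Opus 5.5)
Part~(a) is exactly Theorem~\ref{thm:ET}, and the identity $U_1(\bfpi) = e^{(1-p)H_p(\bfpi)}$ is immediate from the definition of $H_p$: indeed $(1-p)H_p(\bfpi) = \log\sum_i \pi_i^p$. Part~(b) is only a heuristic. The substantive claim is therefore the strict inequality $E(T) < p/(1-U_1)$, and the plan is to prove it by comparing two infinite products factor by factor.

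As in the proof of Proposition~\ref{prop:ET-monotone}, write $g(u) = 1 + u + \cdots + u^{p-1} = (1-u^p)/(1-u)$, which is strictly increasing on $[0,1)$. The key observation is that $1/(1-u)$ factors over base-$p$ scales. Since every non-negative integer has a unique base-$p$ expansion, for $0 \le u < 1$
\[
 \frac{1}{1-u} \;=\; \sum_{N \ge 0} u^N \;=\; \prod_{k \ge 0} g\bigl(u^{p^k}\bigr).
\]
One can also see this by telescoping $\prod_{k<K} (1-u^{p^{k+1}})/(1-u^{p^k}) = (1-u^{p^K})/(1-u)$. Applying this with $u = U_1$ gives $p/(1-U_1) = p\prod_{n\ge1} g\bigl(U_1^{p^{n-1}}\bigr)$. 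Theorem~\ref{thm:ET} gives $E(T) = p\prod_{n \ge 1} g(U_n)$. It therefore suffices to compare $U_n$ with $U_1^{p^{n-1}}$ term by term.

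For that comparison, use the elementary superadditivity of $t\mapsto t^p$ on non-negative reals: $\sum_i a_i^p \le \bigl(\sum_i a_i\bigr)^p$ for $a_i \ge 0$. Equality holds iff at most one $a_i$ is non-zero, and the inequality extends to countable sums by monotone convergence. Taking $a_i = \pi_i^{p^{n-1}}$ yields $U_n \le U_{n-1}^p$, and by induction $U_n \le U_1^{p^{n-1}}$ for all $n\ge1$. Since $g$ is increasing, $g(U_n) \le g\bigl(U_1^{p^{n-1}}\bigr)$ for each $n$. All factors are at least $1$, and both products converge, because $\sum_n U_1^{p^{n-1}} < \infty$ when $U_1<1$. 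Hence $E(T) \le p/(1-U_1)$.

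The only delicate point is strictness, and it comes from non-degeneracy. Since $\pi_i<1$ for all $i$, at least two $\pi_i$ are positive, so $U_1 < 1$ and the $n=2$ comparison is strict: $U_2 < U_1^p$, hence $g(U_2) < g(U_1^p)$. All other factors satisfy the non-strict inequality and are finite and positive. Their infinite tails converge to strictly positive limits, so the strict inequality at $n=2$ survives in the full product, giving $E(T) < p/(1-U_1(\bfpi))$.

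This also explains the remark that~(b) is not a truncation. The approximation corresponds to replacing each $U_n$ by its upper bound $U_1^{p^{n-1}}$, that is, to pretending that only the first-scale Rényi sum matters. The slack $U_1^{p^{n-1}} - U_n$ at each scale $n \ge 2$ quantifies the two correction factors mentioned in Remark~\ref{rem:simple-approx}.
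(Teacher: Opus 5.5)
Your argument is correct, but it organizes the comparison differently from the paper. The paper splits off the $n=1$ factor and telescopes. After a limit $N\to\infty$ using $U_N\to 0$, it obtains the exact identity
\[
 \frac{E(T)}{p/(1-U_1)} \;=\; \prod_{n\geq 1}\frac{1-U_n^{\,p}}{1-U_{n+1}},
\]
in which every factor lies strictly in $(0,1)$ because $U_{n+1}<U_n^{\,p}$. Only consecutive scales are compared, and strictness appears at every scale.

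You instead rewrite the renewal estimate itself in the same product shape, $p/(1-U_1)=p\prod_{k\geq 0}g\bigl(U_1^{p^k}\bigr)$, via the base-$p$ factorization of the geometric series. You then compare factor by factor through the iterated bound $U_n\leq U_1^{p^{n-1}}$, with equality at $n=1$ and strictness already at $n=2$. Both proofs rest on the same strict power-sum inequality.

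Your route buys a clean conceptual reading. The first-order estimate is exactly the Bernard--Letac product with each $U_n$ replaced by its upper bound $U_1^{p^{n-1}}$, which is attained only by a single-atom source, so the overestimate becomes a monotonicity statement. Since $\prod_{n\geq 2}g\bigl(U_1^{p^{n-1}}\bigr)=1/(1-U_1^{\,p})$ and every $g(U_n)\geq 1$, it also shows at once that the overshoot factor $p/\bigl((1-U_1)E(T)\bigr)$ is at most $1/(1-U_1^{\,p})$. That is the quantitative content of the paper's correction-factor bounds.

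The paper's route gives a ratio whose factors are individually below $1$, and it ties directly to the two correction factors of Remark~\ref{rem:simple-approx}. For that reason, your closing identification of the per-scale slack with those two factors is loose: they arise from splitting at $n=1$, not from your scale-by-scale comparison.

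One small point: for countable $I$, monotone convergence gives only the non-strict power-sum inequality. Strictness needs one more line, such as $\bigl(\sum_j a_j\bigr)^p-\sum_i a_i^p=\sum_i a_i\bigl((\sum_j a_j)^{p-1}-a_i^{p-1}\bigr)>0$ whenever two $a_i$ are positive. The paper uses the same fact without comment.
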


\begin{proof}
Part (a) is Theorem~\ref{thm:ET}. For part (b), split the product at $n = 1$ and rearrange to obtain the exact identity
\[
 \frac{p}{1 - U_1} \;=\; E(T)\cdot\frac{1}{1 - U_1^{\,p}}\cdot\prod_{n \geq 2}\frac{1 - U_n}{1 - U_n^{\,p}}\,,
\]
so the estimate~\eqref{eq:simple-approx} is $E(T)$ multiplied by two correction factors, each tending to~$1$. For the first, $U_1 = \sum_i \pi_i^{p} = \sum_i \pi_i\,\pi_i^{p-1} \leq (\max_i \pi_i)^{p-1}$, whence $1 \leq 1/(1 - U_1^{\,p}) \leq \bigl(1 - (\max_i \pi_i)^{p(p-1)}\bigr)^{-1}$, which tends to~$1$ as $p$ grows. For the second, $U_n = \sum_i (\pi_i^{p^{n-1}})^{p} \leq \bigl(\sum_i \pi_i^{p^{n-1}}\bigr)^{p} = U_{n-1}^{\,p}$, so $U_n \leq U_1^{\,p^{n-1}}$ and the tail converges to~$1$ doubly exponentially. For the strict inequality, note that for every $N \geq 1$
\[
 (1 - U_1)\prod_{n=1}^{N}\frac{1 - U_n^{\,p}}{1 - U_n}
 \;=\; \bigl(1 - U_N^{\,p}\bigr)\prod_{n=1}^{N-1}\frac{1 - U_n^{\,p}}{1 - U_{n+1}}\,;
\]
letting $N \to \infty$ and using $U_N \to 0$, established just above, gives
\[
 \frac{E(T)}{p/(1 - U_1)} \;=\; \prod_{n \geq 1}\frac{1 - U_n^{\,p}}{1 - U_{n+1}}\,.
\]
Since $\bfpi$ is non-degenerate, at least two of the $\pi_i$ are positive, so $U_{n+1} = \sum_i (\pi_i^{\,p^n})^{p} < \bigl(\sum_i \pi_i^{\,p^n}\bigr)^{p} = U_n^{\,p}$ strictly for every $n \geq 1$. Hence every factor in the last product lies in $(0,1)$, and the product, which is positive because it equals $E(T)(1 - U_1)/p > 0$, is strictly less than~$1$.
\end{proof}

A key observation is that the complexity of the Bernard--Letac algorithm is governed by the R\'enyi entropies of the source at orders $p, p^2, p^3, \ldots$, with the principal contribution from the R\'enyi entropy associated with the modulus. The theorems above are stated without assuming $I$ is finite, and Bernard and Letac themselves work throughout with $I$ finite or countable, so the product formula and its R\'enyi reading apply verbatim on countable alphabets. What is not settled at the boundary, for sources of infinite Shannon entropy, is a replacement for the lower bound of Theorem~\ref{thm:lower-bound}; this is Open Problem~P6 (Section~\ref{sec:open}).

\begin{theorem}
\label{thm:shannon-limit}
Treating $p > 1$ as a continuous parameter, the expected stopping time
$E(T) = p\prod_{n\geq 1}(1 - U_n^p)/(1 - U_n)$, with
$U_n = \sum_{i\in I}\pi_i^{p^n}$, has a finite limit as $p \to 1^+$:
\[
  \lim_{p \to 1^+} E(T) \;=\; C(\bfpi)
  \;:=\; \exp\!\left(\int_0^\infty
      \frac{-\,\varphi(s)\,\log\varphi(s)}{1 - \varphi(s)}\,ds\right),
  \qquad
  \varphi(s) = \sum_{i\in I}\pi_i^{\,e^s}.
\]
The constant $C(\bfpi)$ is strictly greater than~$1$, depends on the full
distribution $\bfpi$ rather than on its Shannon entropy alone, and is
not in general equal to $1/\Hsh(\bfpi)$. For the symmetric binary
source $\bfpi = (\tfrac12,\tfrac12)$ one has $C(\bfpi) = 2.8457\ldots$,
whereas $1/\Hsh(\bfpi) = 1.4427\ldots$.
\end{theorem}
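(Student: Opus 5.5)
Write $p = e^{h}$ with $h \to 0^+$. Then $U_n = \sum_i \pi_i^{p^n} = \varphi(nh)$, and from~\eqref{eq:logET}
\[
 \log E(T) = h + \sum_{n\geq 1} \log \frac{1-\varphi(nh)^{p}}{1-\varphi(nh)}.
\]
The plan is to read this sum as a Riemann sum for the integral in the statement. Put $u = \varphi(nh)\in(0,1)$, using non-degeneracy, and $p = 1+\delta$ with $\delta = e^h - 1 \sim h$. Then
\[
 1-u^{p} = 1 - u\,e^{\delta\log u} = (1-u) - \delta\,u\log u + O\bigl(\delta^2 u\log^2 u\bigr),
\]
so each term equals $\log\bigl(1 + \delta\,\frac{-u\log u}{1-u} + \ldots\bigr) = h\,\psi(nh) + (\text{error})$, where $\psi(s) = -\varphi(s)\log\varphi(s)/(1-\varphi(s))$. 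Hence $\sum_n h\,\psi(nh) \to \int_0^\infty \psi(s)\,ds$, and the limit is $C(\bfpi)$. The concrete steps are the following.

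\textbf{(i) Properties of $\psi$.} Since $\varphi$ is continuous and strictly decreasing from $\varphi(0)=1$ to $0$, and $-u\log u/(1-u)\to 1$ as $u\to 1^-$, $\psi$ extends continuously to $s=0$ with value $1$.

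\textbf{(ii) Exponential decay of $\psi$.} With $q = \max_i \pi_i$ we have $\varphi(s) \le q^{e^s-1}$. Using $-u\log u \le C u^{1/2}$, this gives $\psi(s) \le C' q^{(e^s-1)/2}$, so $\psi$ decays superexponentially and is integrable. Since $\psi$ is monotone, or at least of bounded variation on compacts, Riemann sums with mesh $h$ converge to the integral.

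\textbf{(iii) Uniform error control.} For the error terms I would avoid Taylor expansion near $u\to 1$ and instead use the exact function $F_\delta(u) = \log\bigl((1-u^{1+\delta})/(1-u)\bigr)$. It satisfies $0 \le F_\delta(u) \le \log(1+\delta)$, since $(1-u^{p})/(1-u) \le p$ for $p\ge 1$ by concavity of $u\mapsto u^{p}$ integrated, or simply the mean value theorem. Moreover $F_\delta(u)/\delta \to -u\log u/(1-u)$ uniformly on $[0,1]$. Given this, $\sum_n F_\delta(\varphi(nh)) = (\delta/h)\sum_n h\,[\psi(nh)+o(1)]$ on a range $nh \le S$. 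The tail $nh > S$ is handled by dominated convergence using $F_\delta(u) \le \delta\,C u^{1/2}$ together with the bound in (ii).

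\textbf{(iv) Strict inequality.} $C(\bfpi) > 1$ because $\psi > 0$ on $(0,\infty)$.

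\textbf{(v) Numerics and dependence on $\bfpi$.} For $\bfpi = (\tfrac12,\tfrac12)$ we have $\varphi(s) = 2^{1-e^s}$, and substituting $u = 2^{1-e^s}$ reduces the integral to a one-dimensional quadrature giving $\log 2.8457\ldots$. This exhibits $C \ne 1/\Hsh$. Dependence on more than $\Hsh$ follows by comparing two sources with equal Shannon entropy but different $\varphi$, for example $(0.45,0.45,0.1)$ versus a binary source of the same entropy, and evaluating numerically, or by perturbing along a level set of $\Hsh$ and noting that the integral varies.

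The main obstacle is step (iii): getting a bound on $F_\delta(u)/\delta$ uniform in $u\in(0,1)$, including near $u = 1$ (small $n$) and $u\to 0$ (large $n$), so that the Riemann-sum limit can be exchanged with $\delta \to 0$. I would first try the integral representation
\[
 F_\delta(u) = \log\Bigl(1 + \frac{u(1-u^\delta)}{1-u}\Bigr),
\]
combined with $1-u^\delta \le \delta(-\log u)$, which gives the domination directly.
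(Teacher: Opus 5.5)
Your proposal is correct and follows the paper's proof. Both read $\log E(T) = \log p + \sum_n \log g(U_n)$ as a Riemann sum for $\int_0^\infty \psi$ via the first-order expansion $\log g(u) \approx -\delta\,u\log u/(1-u)$, obtain integrability from $\psi(0^+) = 1$ and the doubly exponential decay of $\varphi$, and get $C(\bfpi) > 1$ from $\psi > 0$. Two details of yours are tighter than the paper's sketch: the parametrization $p = e^h$ makes the mesh exactly $h$, and the exact identity $g(u) = 1 + u(1-u^\delta)/(1-u)$ gives the domination $0 \le \log g(u) \le \delta\,(-u\log u)/(1-u)$, which controls the tail more rigorously than the paper's remark that ``only $O(1/\eps)$ terms are non-negligible.'' As in the paper's proof, the value $2.8457\ldots$ and the dependence on more than $\Hsh(\bfpi)$ are left to numerical evaluation rather than derived.
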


Consequently, the Bernard--Letac sampler does not achieve the Shannon optimum as $p \to 1^+$. The lower bound in Theorem~\ref{thm:lower-bound} scales as $(\log p)/\Hsh(\bfpi) \to 0$, while $E(T)$ remains bounded below by the positive constant $C(\bfpi)$. Thus, the efficiency approaches zero (Corollary~\ref{cor:efficiency}). The naive reduction of the product to its single $n = 1$ factor is invalid because the tail $\prod_{n\geq 2}$ consists of infinitely many factors, each approaching $1$, yet their product converges to a nontrivial constant. This phenomenon is the same carry-coupling effect that differentiates the exact product in Theorem~\ref{thm:ET} from the single-scale approximation in Remark~\ref{rem:simple-approx}.

\begin{proof}
Write $p = 1 + \eps$ with $\eps \to 0^+$, put $g(u) = (1 - u^p)/(1 - u)$, and set $a_i = -\log\pi_i > 0$, so that $U_n = \sum_i e^{-a_i p^n} = \varphi(n\log p)$ with $\varphi(s) = \sum_i \pi_i^{e^s}$. Taking logarithms,
\[
  \log E(T) = \log p + \sum_{n\geq 1}\log g(U_n).
\]
For fixed $u \in (0,1)$, $u^p = u\,e^{\eps\log u} = u\bigl(1 + \eps\log u + O(\eps^2)\bigr)$, hence $1 - u^p = (1-u)\bigl(1 - \tfrac{\eps\,u\log u}{1-u} + O(\eps^2)\bigr)$ and
\[
  \log g(u) = -\,\eps\,\frac{u\log u}{1-u} + O(\eps^2),
\]
uniformly on all of $(0,1)$, and not merely on compact subsets; this stronger uniformity is what the sum-splitting below requires. Indeed the $O(\eps^2)$ coefficient is bounded by a constant multiple of $u\log^2\! u/(1-u) + \bigl(u\log u/(1-u)\bigr)^2$, and both expressions are bounded on $(0,1)$, vanishing at $u \to 0^+$ and tending to $0$ and $1$ respectively as $u \to 1^-$. Set $F(s) = -\varphi(s) \log\varphi(s)/(1-\varphi(s)) \geq 0$ and $s_n = n\log p$, so the mesh is $\Delta s = \log p$ and $\eps/\log p \to 1$. Only $O(1/\eps)$ of the terms $\log g(U_n)$ are non-negligible (those with $s_n$ in a bounded range), so the accumulated $O(\eps^2)$ errors sum to $O(\eps)$, and
\[
  \sum_{n\geq 1}\log g(U_n)
  = \frac{\eps}{\log p}\sum_{n\geq 1}(\log p)\,F(s_n) + O(\eps)
  \;\xrightarrow[\eps\to 0^+]{}\; \int_0^\infty F(s)\,ds,
\]
the sum being a Riemann sum of mesh $\Delta s = \log p \to 0$. The integral converges. As $s \to 0^+$, $\varphi(s) \to \varphi(0) = 1$ and $F(s) \to 1$ (since $-\log\varphi \sim 1-\varphi$); as $s \to \infty$, $\varphi(s) \leq (\max_i \pi_i)^{e^s - 1}$, valid also for countable~$I$ since $\pi_i^{e^s} \leq \pi_i\,(\max_j \pi_j)^{e^s - 1}$, so $\varphi(s) \to 0$ doubly exponentially and $F(s) \sim -\varphi\log\varphi \to 0$. Since $\log p \to 0$, we obtain $\log E(T) \to \int_0^\infty F(s)\,ds$, which is
strictly positive because $F > 0$ on $(0,\infty)$. Exponentiating yields $C(\bfpi) > 1$.
\end{proof}

The following theorem gives the information-theoretic lower bound.

\begin{theorem}
\label{thm:lower-bound}
For any algorithm in $\mathfrak{A}_m$,
\[
 E(T) \;\geq\; \frac{\log m}{\Hsh(\bfpi)}\,.
\]
\end{theorem}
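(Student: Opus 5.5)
The plan is a Wald-type entropy identity. Write $X_t = a_t$ for the draws and let $T \in \mathfrak{A}_m$ with its $\mathcal{F}_T$-measurable partition $(A_1,\dots,A_m)$. Define the output $J$ by $J = j$ on $A_j$. Since $J$ is a function of the stopped sequence $X^T = (a_1,\dots,a_T)$, we get
\[
\log m = \Hsh(J) \le \Hsh(X^T).
\]
The first equality uses $P(J=j)=1/m$ for the given $\bfpi$; uniformity over all $\bfpi$ is only needed at this one point. Everything therefore reduces to the identity $\Hsh(X^T) = E(T)\,\Hsh(\bfpi)$ whenever $E(T)<\infty$. If $E(T)=\infty$ the bound is trivial, so I assume $E(T)<\infty$; then $T<\infty$ almost surely and $X^T$ is a random finite word.

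To prove the identity, compute directly:
\[
-\log P(X^T = w) = \sum_{s=1}^{|w|} (-\log \pi_{w_s})
\]
for every word $w$ in the (prefix-free) set of stopped words. The probability of a stopped word is exactly the product of its letter probabilities, because the event $\{X^T = w\}$ is the cylinder of $w$: $T$ is a stopping time, so $\{T = |w|\}$ is determined by the first $|w|$ draws. Hence
\[
\Hsh(X^T) = E\Bigl[\sum_{s=1}^{T} Y_s\Bigr], \qquad Y_s = -\log \pi_{a_s},
\]
where the $Y_s$ are i.i.d., nonnegative, and have mean $\Hsh(\bfpi)$. Wald's identity for the stopping time $T$ then gives $E\sum_{s\le T} Y_s = E(T)\,E(Y_1) = E(T)\,\Hsh(\bfpi)$. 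Because $Y_s \ge 0$, I can justify this by monotone convergence, writing the sum as $\sum_s Y_s \mathbf{1}[T\ge s]$ and noting that $\{T \ge s\}\in\mathcal{F}_{s-1}$ is independent of $Y_s$. No integrability beyond $E(T)<\infty$ is needed, and the case $\Hsh(\bfpi)=\infty$ on a countable alphabet is handled by the same nonnegativity: it gives $E(T)\cdot\infty \ge \log m$ trivially.

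Combining the two steps yields $\log m \le E(T)\,\Hsh(\bfpi)$, which rearranges to the claim since $\Hsh(\bfpi) > 0$ for a non-degenerate source.

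I expect the main obstacle to be the rigor of $\Hsh(X^T) = E[-\log P(X^T)]$ when $X^T$ takes countably many values. The issue is ensuring the entropy of a countably supported variable is well defined and that $\Hsh(J)\le \Hsh(X^T)$ holds, via the data-processing inequality for the deterministic map $X^T\mapsto J$. Both are standard once the entropy sum is treated as a series of nonnegative terms, so I would state them carefully but briefly. Zero-probability symbols ($\pi_i = 0$) can be discarded from $I$ at the outset, since they almost surely never occur.
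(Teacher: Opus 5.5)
Your proposal is correct and follows the same route as the paper: Wald's identity applied to $Y_s=-\log\pi_{a_s}$ gives $\Hsh(a_1,\ldots,a_T)=E(T)\,\Hsh(\bfpi)$, and the data-processing bound $\log m=\Hsh(J)\le\Hsh(a_1,\ldots,a_T)$ finishes the argument. You also make explicit several points the paper leaves implicit: that $\{X^T=w\}$ is the full cylinder of $w$, the monotone-convergence form of Wald's identity for nonnegative summands, and the trivial cases $E(T)=\infty$ or $\Hsh(\bfpi)=\infty$.
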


\begin{proof}
Since the source is i.i.d.\ and $T$ is a stopping time with $E(T) < \infty$, Wald's identity applied to the variables $-\log \pi_{a_t}$ gives the entropy identity
\[
 \Hsh(a_1,\ldots,a_T) \;=\; E(T)\,\Hsh(\bfpi).
\]
The output $J$ is $\mathcal{F}_T$-measurable, so $\Hsh(J) \leq \Hsh(a_1,\ldots,a_T)$, while $J$ is uniform on $m$ values and hence $\Hsh(J) = \log m$. Combining the two gives $\log m \leq E(T)\,\Hsh(\bfpi)$.
\end{proof}

The following proposition shows that the lower bound is never attained.

\begin{proposition}
\label{prop:lower-bound-strict}
For every $m \geq 2$, every non-degenerate distribution $\bfpi$, and every algorithm in $\mathfrak{A}_m$, the inequality of Theorem~\ref{thm:lower-bound} is strict.
\end{proposition}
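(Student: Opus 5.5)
The plan is to revisit the chain of inequalities in the proof of Theorem~\ref{thm:lower-bound}, characterise exactly when equality could hold, and show that the equality case forces a contradiction with the requirement that $T \in \mathfrak{A}_m$ be fair for \emph{every} admissible distribution, not just the fixed $\bfpi$.

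If $E(T) = \infty$ there is nothing to prove, so assume $E(T) < \infty$. The proof of Theorem~\ref{thm:lower-bound} has two steps: the Wald identity $\Hsh(a_1,\ldots,a_T) = E(T)\,\Hsh(\bfpi)$, and the inequality $\Hsh(J) \leq \Hsh(a_1,\ldots,a_T)$. Only the second can be slack. Since $J$ is a function of the stopped word $W = (a_1,\ldots,a_T)$, we have $\Hsh(W) = \Hsh(J) + \Hsh(W \mid J)$. Equality in the bound therefore forces $\Hsh(W \mid J) = 0$, so $W$ is a.s.\ determined by $J$.

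Consequently, under $\bfpi$, $W$ takes at most $m$ values $w_1,\ldots,w_k$ with $k \leq m$, each of positive probability. These words have the following structure. Let $S = \{i : \pi_i > 0\}$, which has $\abs{S} \geq 2$ by non-degeneracy. Because $T < \infty$ a.s., the words form a finite, complete, prefix-free code over $S$: every infinite $S$-sequence has exactly one $w_r$ as a prefix. Moreover each fairness class $A_j$, restricted to $S$-sequences, is a union of the cylinders $[w_r]$. These cylinders and classes are determined by the stopping rule and the partition alone, independently of the weights.

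The key step is to perturb the distribution. Fix $i \in S$. Completeness of the code implies that the constant word $i i \cdots i$ of some length $L$ is one of the $w_r$. Let $A_j$ be the class containing it. Now take any non-degenerate $\bfpi'$ with the same support $S$. The code is a property of the stopping rule on $S$-sequences, so under $\bfpi'$ the walk still stops exactly on these $k$ words, and $A_j$ still contains $[i^L]$. Hence
\[
 \tfrac1m \;=\; P_{\bfpi'}(A_j) \;\geq\; (\pi'_i)^L .
\]
Choosing $\pi'_i$ close to $1$ (the remaining mass spread over $S \setminus \{i\}$, so $\bfpi'$ stays non-degenerate) makes the right-hand side exceed $1/m$, since $m \geq 2$. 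This is a contradiction, so equality is impossible.

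The main obstacle I expect is to justify that the finite code structure really transfers from $\bfpi$ to $\bfpi'$. The conclusion ``$W$ takes at most $m$ values'' was derived only $\bfpi$-almost surely. The argument needs that, for every $S$-sequence, the stopping rule halts at one of these finitely many words, and that the partition restricted to $S$-sequences is the same for $\bfpi'$.

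Both facts follow because $\bfpi$ and $\bfpi'$ have the same support, so they charge exactly the same cylinders. In particular, a $\bfpi$-null set of $S$-sequences is a finite-or-countable union of null cylinders, and such cylinders cannot exist among $S$-words. I would write this cylinder argument out carefully, together with the observation that a complete prefix code over $S$ must contain each constant word.
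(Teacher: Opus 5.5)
Your proof is correct. After the step it shares with the paper, it takes a genuinely different and shorter route. That shared step is: equality forces $\Hsh(W \mid J) = 0$, so each $A_j$ is a.s.\ a single event $\{W = w_j\}$, and the atom argument upgrades this to $[w_j] \subseteq A_j$ as sets.

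The paper treats the words one at a time. From $\bfpi^{x_j} \le 1/m$ for all admissible $\bfpi$, it excludes single-letter words by letting $\pi_i \to 1$. For multi-letter words it invokes uniqueness of the maximizer of the monomial $\bfpi \mapsto \bfpi^{x_j}$ (the Gibbs inequality) to force $\bfpi^0 = x_j/t_j$ for every $j$. All $m$ words then share one type $x$, giving $m \le c(x) < m$.

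You instead use how the words fit together. Finitely many words carrying full mass form a complete prefix code over the support $S$, so some $w_j$ is a constant word $i^L$. The paper's trivial single-letter case then already gives the contradiction, and the Gibbs and multinomial-counting half of the paper's proof disappears. Your route also proves more. The same atom argument shows that no $T \in \mathfrak{A}_m$ can stop on a constant sequence $i i i \cdots$, so no such $T$ is a.s.\ bounded under any non-degenerate $\bfpi$. Equality would force exactly that, since $T \le \max_j t_j$ a.s.

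When you write it out, a few justifications need repair.

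Completeness must come from finiteness, not from the claim that a $\bfpi$-null set of $S$-sequences is a union of null cylinders. That claim is false: the set of $S$-sequences on which the Bernard--Letac sampler never stops is null yet nonempty, since it contains the constant sequences. The correct argument is that $\Omega \setminus \bigcup_j [w_j]$ is a finite union of cylinders of length $\max_j t_j$. Each of these is null, so each uses a letter outside $S$. Hence $i^{\max_j t_j}$ has some $w_j$ as a prefix, and that $w_j$ is constant.

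The transfer to $\bfpi'$ that you anticipate as the main obstacle is unnecessary. Since $\{T = L\} \cap [i^L] \in \mathcal{F}_L$ has positive mass, it is all of $[i^L]$. Likewise $A_j \cap [i^L] = A_j \cap \{T = L\} \cap [i^L] \in \mathcal{F}_L$ is empty or all of $[i^L]$. So $[i^L]$ lies in a single $A_j$ as a set, and $P_{\bfpi'}(A_j) \ge (\pi'_i)^L$ for every admissible $\bfpi'$, with no same-support restriction.

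Finally, dispose of $\Hsh(\bfpi) = \infty$ separately, as the paper does. The bound then reads $E(T) \ge 0$, so strictness is immediate.
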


\begin{proof}
Write $P_{\bfpi}$ for the law of the draw sequence under an admissible $\bfpi$, and suppose that equality holds for some algorithm in $\mathfrak{A}_m$ at some non-degenerate $\bfpi^0$. If $\Hsh(\bfpi^0) = \infty$ the right-hand side vanishes while $E(T) \geq 1$, and if $\Hsh(\bfpi^0) < \infty$ while $E(T) = \infty$ the left-hand side alone is infinite, so in either case the inequality is strict; assume therefore that both quantities are finite. Equality forces equality in the step $\Hsh(J) \leq \Hsh(a_1, \ldots, a_T)$ of the proof of Theorem~\ref{thm:lower-bound}, and since $J$ is a function of the stopped trajectory, this reads $\Hsh(a_1, \ldots, a_T \mid J) = 0$ under $P_{\bfpi^0}$. Hence for each $j$ there is a single word $w_j$, of length $t_j \geq 1$ and letter-count vector $x_j \in \Mon$ with $\abs{x_j} = t_j$, such that $A_j$ and $\{(a_1, \ldots, a_T) = w_j\}$ differ by a $P_{\bfpi^0}$-null set; the $w_j$ are pairwise distinct, since otherwise two disjoint events of probability $1/m$ would coincide up to null sets.

Write $[w] = \{a_1 = w^{(1)}, \ldots, a_t = w^{(t)}\}$ for the cylinder of draw sequences beginning with the word $w = w^{(1)} \cdots w^{(t)}$ of length~$t$. The event $\{T = t_j\}$ lies in $\mathcal{F}_{t_j}$, so its intersection with $[w_j]$ is either empty or all of $[w_j]$. This intersection is the event $\{(a_1, \ldots, a_T) = w_j\}$, of $P_{\bfpi^0}$-probability $1/m > 0$, hence it equals $[w_j]$ as a set, and in particular $(\bfpi^0)^{x_j} = P_{\bfpi^0}([w_j]) = 1/m$, while $P_{\bfpi}([w_j]) = \bfpi^{x_j}$ for every admissible~$\bfpi$. Likewise $A_j \cap [w_j] = A_j \cap \{T = t_j\} \cap [w_j]$ lies in $\mathcal{F}_{t_j}$, is empty or all of $[w_j]$, and has $P_{\bfpi^0}$-probability $1/m$, whence $[w_j] \subseteq A_j$ as sets. Consequently, for every admissible $\bfpi$,
\[
 \bfpi^{x_j} \;=\; P_{\bfpi}([w_j]) \;\leq\; P_{\bfpi}(A_j) \;=\; \frac{1}{m}\,.
\]

If $w_j$ used only one letter $i$, the left-hand side would approach $1$ as $\pi_i$ increases to $1$ within the admissible distributions, so every $w_j$ uses at least two distinct letters. The maximum of the monomial $\bfpi \mapsto \bfpi^{x_j}$ over the whole simplex is then attained at the point $x_j/t_j$, which has all coordinates strictly below $1$ and is therefore admissible, and the maximum equals $e^{-t_j \Hsh(x_j/t_j)}$, uniquely attained by the Gibbs inequality~\cite[Chapter~XI]{gibbs}. The display gives $1/m = (\bfpi^0)^{x_j} \leq e^{-t_j \Hsh(x_j/t_j)} \leq 1/m$, so by uniqueness $\bfpi^0 = x_j/t_j$, whence $1/m = e^{-t_j \Hsh(\bfpi^0)}$, that is, $t_j = \log m/\Hsh(\bfpi^0)$. The $m$ distinct words $w_j$ therefore share one length $t$ and one letter-count vector $x = t\,\bfpi^0$, which forces $m \leq c(x)$. On the other hand, summing $P_{\bfpi^0}(S_t = y) = c(y)\,(\bfpi^0)^{y}$ over all $y$ with $\abs{y} = t$ gives $1 > c(x)\,(\bfpi^0)^{x} = c(x)/m$, the inequality being strict because the type $t\,\eps^{i}$, for any letter $i$ in the support of $\bfpi^0$, differs from $x$ and carries positive probability. Hence $c(x) < m$, a contradiction.
\end{proof}

In particular, the efficiency $\eta(p)$ of Corollary~\ref{cor:efficiency} satisfies $\eta(p) < 1$ for every prime $p$ and every non-degenerate $\bfpi$.

\begin{corollary}
\label{cor:efficiency}
The efficiency of the Bernard--Letac algorithm,
\[
  \eta(p) = \frac{\log p}{E(T)\cdot\Hsh(\bfpi)},
\]
tends to~$0$ at both extremes of the scale. As $p$ approaches $1^+$, $E(T)$ converges to $C(\bfpi) > 0$ (Theorem~\ref{thm:shannon-limit}) while $\log p$ approaches zero. As $p$ approaches infinity, $E(T)$ is asymptotically equivalent to $p$ (Proposition~\ref{prop:ET-monotone}(d)), so $\eta(p)$ behaves as $\log p/(p\,\Hsh)$ and also approaches zero. Since $\eta$ is continuous and positive on $(1,\infty)$ with limits of zero at both ends, it attains a maximum at some interior point $p^\star = p^\star(\bfpi) \in (1,\infty)$. The sampler therefore achieves its highest efficiency at an intermediate scale. Under the first-order approximation (Theorem~\ref{thm:ET-renyi}(b)),
\[
  \eta(p) \;\approx\; \frac{(1 - e^{(1-p)H_p})\log p}{p\,\Hsh(\bfpi)}\,.
\]
\end{corollary}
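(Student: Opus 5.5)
The plan treats $p$ as a continuous parameter on $(1,\infty)$, exactly as in Theorem~\ref{thm:shannon-limit}. The first step is to show that $E(T)=p\prod_{n\ge1}g(U_n)$, with $g(u)=(1-u^p)/(1-u)$ and $U_n=\varphi(n\log p)$, is finite, positive and continuous in $p$. Each factor is continuous in $p$ and lies in $[1,p)$. Continuity of the product then follows from local uniform convergence of $\sum_n \log g(U_n)$ on compact subintervals $[p_0,p_1]\subset(1,\infty)$. For this I would use $\log g(u)\le (p-1)u$ together with the bound $U_n\le U_1^{p^{n-1}}$ from the proof of Theorem~\ref{thm:ET-renyi}. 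On such a compact range $U_1\le (\max_i\pi_i)^{p_0-1}<1$, and $p^{n-1}\ge p_0^{\,n-1}$ grows geometrically, so the Weierstrass M-test applies. Hence $\eta(p)=\log p/(E(T)\Hsh(\bfpi))$ is continuous on $(1,\infty)$. It is also strictly positive, because $\log p>0$, $E(T)<\infty$ and $0<\Hsh(\bfpi)<\infty$. The case $\Hsh=\infty$ is excluded here, or else $\eta\equiv0$ is treated separately.

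The next step is the two endpoint limits. As $p\to1^+$, Theorem~\ref{thm:shannon-limit} gives $E(T)\to C(\bfpi)\in(1,\infty)$, while $\log p\to0$, so $\eta(p)\to0$. As $p\to\infty$, I would invoke Proposition~\ref{prop:ET-monotone}(d). Its proof only used the bound $U_1\le q^{p-1}$ and $g(u)\le 1+(p-1)u$, both valid for real $p$, so it applies verbatim on the continuous parameter. This gives $E(T)/p\to1$, hence $\eta(p)\sim \log p/(p\,\Hsh)\to0$.

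For the interior maximum, pick any $p_1$ and note $\eta(p_1)>0$. By the two limits there are $1<a<p_1<b$ with $\eta<\eta(p_1)$ on $(1,a]\cup[b,\infty)$. On the compact interval $[a,b]$ continuity gives a maximizer $p^\star$, and it is a global maximizer lying in $(1,\infty)$. The approximate formula is immediate: substitute $E(T)\approx p/(1-e^{(1-p)H_p})$ from Theorem~\ref{thm:ET-renyi}(b) into $\eta$.

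The main obstacle is justifying continuity of the infinite product near the left end, where the number of non-negligible factors is $O(1/(p-1))$. Keeping the M-test on compact subsets bounded away from $1$ sidesteps this, and the behaviour at $1^+$ is delegated entirely to Theorem~\ref{thm:shannon-limit}. A second point to flag is that Proposition~\ref{prop:ET-monotone}(d) was stated for primes. I must check that its argument does not use primality; it does not.
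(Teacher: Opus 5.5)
Your route is the paper's. The endpoint limits come from Theorem~\ref{thm:shannon-limit} and Proposition~\ref{prop:ET-monotone}(d), and the interior maximum from continuity plus the extreme value theorem. The paper simply asserts that $\eta$ is continuous, and your M-test on compact subintervals is the right way to supply this.

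One step fails as written. The bound $\log g(u)\le (p-1)u$ is false for real $p\in(1,2)$. There $\log g(u)=u-u^{p}+O(u^{2})$ as $u\to 0^{+}$, which exceeds $(p-1)u$. For instance, $p=\tfrac{3}{2}$ and $u=\tfrac{1}{4}$ give $g(u)=\tfrac{7}{6}$ and $\log\tfrac{7}{6}>\tfrac{1}{8}$. So on a compact interval $[p_0,p_1]$ with $p_0<2$ your majorant is invalid, and that is exactly the region near the left end you care about.

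The same caveat applies to your claim that $g(u)\le 1+(p-1)u$ is ``valid for real $p$''. That inequality is the tangent-line bound for the convex function $u^{p-1}$ and holds only for $p\ge 2$. The step $U_1^{\,p^{n-1}}\le U_1^{\,n}$ in Proposition~\ref{prop:ET-monotone}(d) also needs $p\ge 2$. Both are harmless for the $p\to\infty$ limit, but they do not extend to all $p>1$.

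The repair takes one line. For every real $p>1$ and $u\in(0,1)$,
\[
\log g(u)\;\le\; g(u)-1\;=\;\frac{u\,(1-u^{p-1})}{1-u}\;\le\;\max(1,p-1)\,u .
\]
The last step holds because $(1-u^{r})/(1-u)$ is at most $r$ when $r\ge 1$ (secant slope of a convex function) and at most $1$ when $0<r<1$ (since $u^{r}\ge u$). With this bound your M-test goes through unchanged, with summable majorant $\max(1,p_1-1)\,\rho^{\,p_0^{\,n-1}}$, where $\rho=(\max_i\pi_i)^{p_0-1}<1$.

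Finally, note that the endpoint limits need neither cited result. Every factor $g(U_n)$ is at least $1$, so $E(T)\ge p$ and hence $0<\eta(p)\le \log p/(p\,\Hsh(\bfpi))$. This bound tends to $0$ both as $p\to 1^{+}$ and as $p\to\infty$.
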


For instance, with $\bfpi = (0.7, 0.3)$, the efficiency increases from near zero as $p$ approaches $1^+$ to a maximum value of $\eta \approx 0.44$ near $p^\star \approx 5$, and then decreases (Figure~\ref{fig:efficiency}). For the more skewed source $\bfpi = (0.9, 0.1)$, the maximum efficiency is $\eta \approx 0.46$ near $p^\star \approx 10$. Two distinct effects reduce efficiency at the extremes. For large $p$, $H_p(\bfpi) \leq \Hsh(\bfpi)$ because R\'enyi entropy is non-increasing in order, so the algorithm detects less randomness than is present in the source. As $p$ approaches $1^+$, the target contains vanishing information ($\log p \to 0$), while $E(T)$ remains bounded below by the positive constant $C(\bfpi)$, resulting in most draws being wasted.

\begin{proposition}
\label{prop:unified}
The following asymptotic regimes hold for the expected stopping time of Algorithm~\ref{alg:fairsample}:

\begin{center}
\begin{tabular}{lll}
\toprule
\textbf{Regime} & $E(T)$ & \textbf{Governing quantity} \\
\midrule
$p \to 1^+$ & $\to C(\bfpi) > 1$ & full law of $\bfpi$ (Thm~\ref{thm:shannon-limit}); $\neq 1/\Hsh$ \\
$p$ moderate & $= p\prod_{n\geq 1}\frac{1-U_n^p}{1-U_n}$ & R\'enyi-$p^n$ entropies \\
$p$ moderate (approx.) & $\approx p/(1 - e^{(1-p)H_p})$ & R\'enyi-$p$ entropy \\
$p \to \infty$ & $\sim p$ \ ($E(T)/p \to 1$) & floor; correction $\sim p\,e^{(1-p)H_p}$ \\
$\bfpi$ near-degenerate & $\to \infty$ & $\Hsh(\bfpi) \to 0$ \\
$\bfpi$ uniform & $\approx p/(1 - |I|^{1-p})$ & $H_p = \log|I|$ for all $p$ \\
\bottomrule
\end{tabular}
\end{center}
\end{proposition}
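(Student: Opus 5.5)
The table in Proposition~\ref{prop:unified} collects asymptotic facts, and I will prove it row by row, citing earlier results wherever they apply and filling in the few gaps by short direct arguments.

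\textbf{Rows one to three.} The row $p \to 1^+$ is exactly Theorem~\ref{thm:shannon-limit}. The second row is Theorem~\ref{thm:ET}. The third row is the first-order approximation of Theorem~\ref{thm:ET-renyi}(b). That theorem also gives the identity
\[
\frac{E(T)}{p/(1-U_1)}=\prod_{n\ge1}\frac{1-U_n^{p}}{1-U_{n+1}}<1,
\]
which justifies reading the third row as an approximation.

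\textbf{Row $p \to \infty$.} The statement $E(T)/p \to 1$ is Proposition~\ref{prop:ET-monotone}(d). For the stated correction term I will use the expansion
\[
\log(E(T)/p)=\sum_{n\ge1}\log g(U_n), \qquad g(u)=1+u+\cdots+u^{p-1}.
\]
Step one: bound the tail. We have $U_n\le U_1^{p^{n-1}}$ from the proof of Theorem~\ref{thm:ET-renyi}, and $U_1\le q^{p-1}$ with $q=\max_i\pi_i$. So the terms with $n\ge2$ are $O(pU_1^{p})$, which is negligible against $pU_1$.

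Step two: expand the $n=1$ term. Since $\log g(U_1)=U_1+O(U_1^2)$, this gives
\[
E(T)=p+pU_1\bigl(1+o(1)\bigr),
\]
that is, a correction of order $p\,e^{(1-p)H_p}$. This is the only place where some care is needed. Because $(p-1)U_1\to0$, the bound $g(u)\le 1+(p-1)u$ keeps the exponentiation from $\log(E(T)/p)$ to $E(T)-p$ under control. I read the phrase ``correction $\sim p\,e^{(1-p)H_p}$'' in this sense, as the leading behaviour up to a $1+o(1)$ factor.

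\textbf{Row near-degenerate $\bfpi$.} Here $E(T)\to\infty$ is Proposition~\ref{prop:ET-monotone}(c). The governing quantity $\Hsh(\bfpi)\to0$ is the identification that $\pi_i\to1$ forces $\Hsh\to0$. I will be careful to state the row as a regime description only. The converse, that $\Hsh\to0$ forces $E(T)\to\infty$, I will justify separately: $\Hsh\to0$ forces $\max_i\pi_i\to1$ on any alphabet, because $\Hsh\ge-\log\max_i\pi_i$, and then part (c) applies.

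\textbf{Row uniform $\bfpi$.} This is Proposition~\ref{prop:ET-uniform}. I will note that $U_1=|I|^{1-p}$, so the identity $H_p=\log|I|$ for all orders is immediate.

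I expect no real obstacle here. The only nonroutine step is making the $p\to\infty$ correction precise enough to match the table entry.
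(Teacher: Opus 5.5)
Your proposal is correct and follows the paper, which gives no separate proof and simply assembles the table from Theorems~\ref{thm:shannon-limit}, \ref{thm:ET}, \ref{thm:ET-renyi} and Propositions~\ref{prop:ET-uniform} and~\ref{prop:ET-monotone}(c),(d). Your derivation of $E(T) = p + pU_1\bigl(1+o(1)\bigr)$ also makes precise the correction term that the paper only asserts. One caveat: your remark that $\pi_i \to 1$ forces $\Hsh(\bfpi) \to 0$ holds only for finite $I$, since on a countable alphabet the residual mass can carry non-vanishing or even infinite entropy; your argument via $\Hsh(\bfpi) \geq -\log\max_i \pi_i$ proves the direction the row actually needs, so nothing breaks.
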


\section{A Transfer Theorem for the Binary First-Passage Kernel}
\label{sec:transfer}

Proposition~\ref{prop:per-step} identifies group assignment as the computational bottleneck of Algorithm~\ref{alg:fairsample}. This section addresses its removal in the classical case where $m = 2$ on a binary alphabet, corresponding to the extraction of a single fair bit from a biased source. It is demonstrated that the first-passage kernel underlying the ranking step can, modulo~$2$, be computed by an explicit seven-state automaton that processes the binary digits of its arguments. Consequently, each kernel value requires $O(\log T)$ word operations, replacing the need for a dynamic program over the box $[\mathbf{0}, S_T]$, and the exact fair assignment incurs $O(T \log T)$ bit operations in total (Corollary~\ref{cor:P1-binary}). The resulting procedure is presented as self-contained pseudocode in Algorithm~\ref{alg:automaton}. Only one step of the proof does not constitute an exact path bijection, namely a fixed-point-free involution that swaps the two orders of a terminal mixed block, introducing the reduction modulo~$2$.

The term transfer is employed in the transfer-matrix sense, rather than the singularity-analysis sense from analytic combinatorics. The section operators defined below transfer a kernel evaluation at one binary scale to an evaluation at the next, enabling the global value $\kappa(v,z)$ to be constructed by composing local digit rules. Here, the digit expansion serves as the chain along which local transfer data are multiplied. Equivalently, Theorem~\ref{thm:transfer} establishes that the family of sections of $\kappa$ (its $2$-kernel in the terminology of automatic sequences~\cite{alloucheshallit}) is finite, which defines $2$-automaticity. The term kernel is reserved for $\kappa$ itself, while sections are used to distinguish between these concepts.

Throughout this section $I = \{0,1\}$ and $m = 2$; we identify $\Mon$ with $\N^2$, writing $u = (u_0, u_1)$ and $\abs{u} = u_0 + u_1$, with unit steps $\eps^0, \eps^1$ and the componentwise partial order. Let $\mathbin{\&}$ denote bitwise \textsc{and}. By Theorem~\ref{thm:kummer}, $c(u)$ is odd precisely when the binary addition $u_0 + u_1$ is carry-free, so
\[
 H_2 \;=\; \{z \neq \mathbf{0} : z_0 \mathbin{\&} z_1 \neq 0\},
 \qquad
 \Cc \;:=\; \Mon \setminus H_2 \;=\; \{u : u_0 \mathbin{\&} u_1 = 0\},
\]
with $\mathbf{0} \in \Cc$, which defines the continuation region of the walk.

The first-passage kernel is defined next.

\begin{definition}
\label{def:kernel}
For $z \in H_2$ and $v \leq z$, let $\varphi(v,z)$ denote the number of monotone paths from $v$ to $z$ all of whose points except the terminal point lie in $\Cc$; in particular $\varphi(z,z) = 1$ and $\varphi(v,z) = 0$ when $v \in H_2 \setminus \{z\}$. The kernel is
\[
 \kappa(v,z) \;=\; \varphi(v,z) \bmod 2 \;\in\; \F_2,
\]
extended by $\kappa(v,z) = 0$ whenever $v \not\leq z$ or $z \notin H_2$.
\end{definition}

The relation $v \leq z$ is the componentwise order fixed in the notation of Section~\ref{sec:intro}. In contrast, the lexicographic order is used solely for ranking paths and for organizing the columns of Table~\ref{tab:transfer}. The extension $\kappa(v,z) = 0$ for $v \not\leq z$ simply indicates that no monotone path exists from $v$ to $z$.

The kernel takes values in $\F_2$ and is not normalized; thus, it does not constitute a stochastic kernel. However, the integer count $\varphi$ serves as the combinatorial foundation for such a kernel. From any starting point $v$, the walk reaches $H_2$ almost surely. This is ensured by the least-significant bits. The parity pair $(u_0 \bmod 2, u_1 \bmod 2)$ attains $(1,1)$ within any two consecutive steps with probability at least $\pi_0\pi_1$. Furthermore, a common $1$ in bit $0$ immediately places $u$ in $H_2$. The probability that a walk starting at $v \in \Cc$ first enters $H_2$ at $z$ is given by $\varphi(v,z)\,\bfpi^{z-v}$, since each path from $v$ to $z$ that avoids $H_2$ has probability $\bfpi^{z-v}$. Therefore, $\sum_{z \in H_2} \varphi(v,z)\,\bfpi^{z-v} = 1$. The mapping $(v,z) \mapsto \varphi(v,z)\,\bfpi^{z-v}$ defines the stochastic first-passage (harmonic-measure) kernel for the stopped walk from $\Cc$ to $H_2$, with $\kappa$ preserving only the combinatorial structure.

The barrier being crossed at time $T$ is the combinatorial set $H_2$ rather than a scalar level, and $H_2$ is not absorbing for the unstopped walk. From $(1,1) \in H_2$ the step $\eps^0$ lands at
$(2,1) \in \Cc$. Higher-order passages are therefore definable, but they play no role here, because the walk is stopped at $T$ and the convention $\varphi(v,z) = 0$ for $v \in H_2 \setminus \{z\}$ forbids interior visits to $H_2$ in the count.

The kernel provides a unified framework for the two path counts discussed in this paper. For $x \in H_2$, $m_E(x) = \varphi(\mathbf{0}, x)$, so that part~(c) of Theorem~\ref{thm:BL-fundamental} involves the evaluation $\kappa(\mathbf{0}, x) = c(x) \bmod 2$, while the ranking quantities in Algorithm~\ref{alg:fairsample} are given by $A(v) = \varphi(v, S)$. Two reductions ensure that the kernel modulo $2$ is sufficient to achieve exact fairness.

The first reduction is the following residue rule.

\begin{lemma}
\label{lem:residue}
For any modulus $m$, replacing the return value of Algorithm~\ref{alg:fairsample} by $J = (r \bmod m) + 1$ preserves exact fairness, in the sense that $P(J = j) = 1/m$ for every admissible $\bfpi$.
\end{lemma}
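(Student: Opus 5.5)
The plan is to condition on the stopping state, exactly as in the proof of Theorem~\ref{thm:alg2-correct}, and to show that on each event $\{S_T = x\}$ the residue $r \bmod m$ is uniformly distributed over $\{0,\ldots,m-1\}$.

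First, fix $x \in H_m$ with $P(S_T = x) > 0$. The observable trajectories on $\{S_T = x\}$ are precisely the $m_E(x)$ monotone paths from $\mathbf{0}$ to $x$ that avoid $H_m \setminus \{x\}$. Each has probability $\bfpi^x$, so conditionally on $\{S_T = x\}$ the observed path is uniform among them.

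Second, I will check that the quantity $r$ computed by Algorithm~\ref{alg:fairsample} is the lexicographic rank of the observed path among these avoiding paths. This amounts to showing that $r \mapsto$ path is a bijection onto $\{0,1,\ldots,m_E(x)-1\}$. The argument is the standard ranking identity. For each prefix $S_{t-1}$ and each smaller letter $b < a_t$, the term $A(S_{t-1}+\eps^b)$ counts the avoiding completions that branch off at step $t$ with letter $b$. Here $A(v)$ is the number of paths $v \to x$ avoiding $H_m$, with $A(v)=0$ when $v \in H_m\setminus\{x\}$, which is guaranteed by the backward recursion. Summing these terms over $t$ counts exactly the avoiding paths that are lexicographically smaller than the observed one. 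The conventions $A(v) = 0$ for $v \in H_m \setminus\{x\}$ and outside the box $[\mathbf{0},x]$ ensure that only observable paths are counted. Hence $r$ is uniform on $\{0,\ldots,m_E(x)-1\}$ given $S_T = x$.

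Third, by Theorem~\ref{thm:BL-fundamental}, the implication (b)$\Rightarrow$(a), we have $m \mid m_E(x)$. Therefore each residue class mod $m$ contains exactly $m_E(x)/m$ elements of $\{0,\ldots,m_E(x)-1\}$. This gives
\[
P(J=j,\ S_T=x) = \tfrac{m_E(x)}{m}\,\bfpi^x = \tfrac1m P(S_T=x).
\]
Summing over $x \in H_m$ and using almost sure termination (Theorem~\ref{thm:alg2-correct}) yields $P(J=j)=1/m$ for every admissible $\bfpi$. This holds because the class sizes $m_E(x)/m$ are purely combinatorial.

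The only step requiring care is the second one, the rank identity. In particular, I must confirm that the prefix-branch counts use $A$ evaluated with respect to the final state $x = S_T$, and that no branch point lies in $H_m$ other than through the zero convention. I would argue this by induction on $t$, peeling off the first letter. The divisibility step is immediate from the earlier theorem.

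A final remark motivates the lemma: the residue rule depends on $r$ only modulo $m$. This is what later lets $m=2$ ranking be carried out with $\kappa$ in place of $\varphi$.
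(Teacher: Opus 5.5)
Your proposal is correct and follows the paper's route: condition on $\{S_T = x\}$, use that the observed path is uniform over the $m_E(x)$ avoiding paths with ranks exactly $0,\ldots,m_E(x)-1$, then use $m \mid m_E(x)$ to get residue classes of equal size $m_E(x)/m$. The paper takes the rank identity for granted, so your check that the branch counts $A(S_{t-1}+\eps^b)$ enumerate exactly the lexicographically smaller avoiding paths adds detail but does not change the argument.
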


\begin{proof}
Conditional on $\{S_T = x\}$ the observed path is uniform over the $m_E(x)$ avoiding paths, by using the proof of Theorem~\ref{thm:alg2-correct}, whose ranks are the consecutive integers $0, \dots, m_E(x) - 1$. Since $m \mid m_E(x)$, each residue class modulo $m$ contains exactly
$m_E(x)/m$ of them.
\end{proof}

Second, the rank reduces to kernel evaluations along the observed path. On the binary alphabet the branch points are $S_{t-1} + \eps^0$ at the
steps with $a_t = 1$, and their counts $A(v)$ vanish for $v \in H_2$,
which is precisely the convention built into $\kappa$. It follows that
\begin{equation}
\label{eq:rank-kernel}
 r \;\equiv\; \sum_{t \leq T \,:\, a_t = 1}
  \kappa\bigl(S_{t-1} + \eps^0,\; S_T\bigr) \pmod 2 .
\end{equation}

\subsection*{Statement of the transfer theorem}

For digit pairs $d_v, d_z \in \{0,1\}^2$, define the section operator on functions $K \colon \N^2 \times \N^2 \to \F_2$ by $(\mathcal{S}_{d_v, d_z} K)(v,z) = K(2v + d_v,\, 2z + d_z)$, and set
\begin{align*}
K_0(v,z) &= \kappa(v,z), &
K_1(v,z) &= 0, \\
K_2(v,z) &= \mathbf{1}[v \in \Cc]\, \kappa(v + \eps^1,\, z), &
K_3(v,z) &= \delta_{v=z}\,\mathbf{1}[z \in H_2], \\
K_4(v,z) &= \delta_{v=z}\,\mathbf{1}[v \in \Cc], &
K_5(v,z) &= \mathbf{1}[v \in \Cc]\, \kappa(v + \eps^0,\, z), \\
K_6(v,z) &= \delta_{v=z}. &&
\end{align*}

The transfer theorem can now be stated.

\begin{theorem}
\label{thm:transfer}
The family $\{K_0, \dots, K_6\}$ is closed under all sixteen sections $\mathcal{S}_{d_v,d_z}$, with transitions given by Table~\ref{tab:transfer}. Consequently $\kappa(v,z)$ is computed by the
deterministic finite automaton with initial state $K_0$, transition table Table~\ref{tab:transfer}, input the base-2 digit columns $\bigl(\hp(v_0), \hp(v_1), \hp(z_0), \hp(z_1)\bigr)$ read for
$\alpha = 0, 1, 2, \dots$ (least-significant first, zero-padded), and accepting set $\{K_4, K_6\}$.
\end{theorem}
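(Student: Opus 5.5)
The plan is to prove closure of $\{K_0,\dots,K_6\}$ under the sections by a \emph{coarse-graining} of lattice paths along the least significant bit, and then to read off the automaton statement from a fixed-point observation at $(\mathbf{0},\mathbf{0})$.

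For $u \in \N^2$ write $u = 2w + b$ with $w = \floor{u/2}$ (componentwise) and $b \in \{0,1\}^2$. Since $u_0 \mathbin{\&} u_1 = 2(w_0 \mathbin{\&} w_1) + (b_0 \mathbin{\&} b_1)$, we have
\[
 u \in \Cc \iff w \in \Cc \text{ and } b \neq (1,1).
\]
Hence a monotone fine path partitions into maximal runs inside \emph{cells} $\{2w + b\}$. Inside a cell with $w \in \Cc$, the only continuation points are $b \in \{(0,0),(1,0),(0,1)\}$, and the point $(1,1)$ is a stopping point. A fine path therefore either crosses the cell in a unique way or terminates at $2w + (1,1)$.

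The crossing rules inside a cell $w \in \Cc$ are as follows. Entering at $(0,0)$, the path can leave toward $w + \eps^0$ only via $(1,0)$, and toward $w + \eps^1$ only via $(0,1)$. The only cell-internal ambiguity is the terminal mixed block $(0,0) \to (1,0) \to (1,1)$ versus $(0,0) \to (0,1) \to (1,1)$. This pair is exactly the fixed-point-free involution mentioned before the theorem, and its two members cancel modulo~$2$.

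Consequently, for fixed low digits $(d_v, d_z)$, the quantity $\varphi(2v + d_v, 2z + d_z) \bmod 2$ equals a count of coarse monotone paths from $v$ to $z$, with boundary decorations determined by $d_v$ and $d_z$. On the entry side there are three cases:
\begin{itemize}[label=--]
 \item if $d_v = (1,1)$ the start is already in $H_2$, giving $K_1$ or a $\delta$-term;
 \item if $d_v = (1,0)$ the first coarse step is forced to be $\eps^0$ unless the path stops immediately, giving a $\kappa(v + \eps^0, \cdot)$-type term, that is $K_5$;
 \item if $d_v = (0,1)$ symmetrically gives $K_2$.
\end{itemize}
On the exit side, $d_z = (1,1)$ means the path stops inside the cell $z$, which requires $z \in \Cc$ and produces $\delta$/$\mathbf{1}[\cdot \in \Cc]$ factors such as $K_4$. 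By contrast, $d_z \neq (1,1)$ means the fine endpoint is in $H_2$ only because $z \in H_2$, and the path must enter cell $z$ at a prescribed corner, so it reduces to $\kappa(\cdot, z)$ or to $K_3$ when no coarse move is possible.

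I will carry this out state by state. For $K_0 = \kappa$, all sixteen digit pairs are handled by the case analysis above. For $K_2$ and $K_5$, the same analysis applies after the shift by $\eps^1$ or $\eps^0$, noting that $2(v + \eps^1) + d = 2v + (d + 2\eps^1)$ so the shift acts at the coarse level. For the diagonal states $K_3, K_4, K_6$, sections are immediate from $2v + d_v = 2z + d_z \iff (v = z,\ d_v = d_z)$ together with the cell membership rule displayed above. The state $K_1 = 0$ is trivially closed.

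Each computed section must then be recognized as some $K_j$ evaluated at $(v,z)$, which fills Table~\ref{tab:transfer}. The main obstacle I expect is bookkeeping at the boundaries. Cases where the coarse path has length zero, for example $v = z$ with $d_v \leq d_z$, must be shown to be absorbed by exactly the diagonal states $K_3, K_4, K_6$. Indicator factors like $\mathbf{1}[v \in \Cc]$ must also be tracked so that each case closes up within the seven states rather than spawning new ones. I would first tabulate all cases by brute force on small $(v,z)$ to guess the table, and then prove each entry by the cell decomposition.

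For the automaton conclusion, iterate the sections. Reading digit columns $\alpha = 0, 1, \dots$ transforms $K_0(v,z)$ into $K_{j_N}\bigl(\floor{v/2^N}, \floor{z/2^N}\bigr)$, and once $N$ exceeds the bit length of $z$ the arguments are $(\mathbf{0},\mathbf{0})$. Evaluating there gives
\[
 K_4(\mathbf{0},\mathbf{0}) = K_6(\mathbf{0},\mathbf{0}) = 1,
\]
since $\mathbf{0} \in \Cc$. All other states vanish at $(\mathbf{0},\mathbf{0})$: $K_0$ because $\mathbf{0} \notin H_2$, $K_2$ and $K_5$ because $\eps^i \not\leq \mathbf{0}$, $K_3$ because $\mathbf{0} \notin H_2$, and $K_1$ trivially. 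This yields the accepting set $\{K_4, K_6\}$. Zero-padding is harmless, because $\mathcal{S}_{(0,0),(0,0)}$ maps each $K_j$ evaluated at $(\mathbf{0},\mathbf{0})$ to itself, as one checks from the table.
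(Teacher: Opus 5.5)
Your proposal is correct and is essentially the paper's own proof: your cell-membership rule and crossing rules are Lemmas~\ref{lem:doubling} and~\ref{lem:blocks}, your entry cases $d_v=(0,0)$, $(1,0)/(0,1)$, $(1,1)$ are Lemmas~\ref{lem:aligned}, \ref{lem:halfblock} and~\ref{lem:oddodd} (with the same order-swap involution cancelling terminal mixed blocks), and the remaining steps all match the paper. These are the $K_2/K_5$ rows handled by reducing the shifted start to one of these start types, the diagonal rows via $2v+d_v=2z+d_z \iff (v=z,\ d_v=d_z)$, and acceptance via evaluation at $(\mathbf{0},\mathbf{0})$ with zero-column self-loops. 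One small correction: the arguments reach $(\mathbf{0},\mathbf{0})$ only once $N$ exceeds the bit lengths of both $v$ and $z$, not just of $z$, because for $v\not\leq z$ the point $v$ may be longer (e.g.\ $v=(3,1)$, $z=(1,1)$ is in the accepting state $K_6$ after one column, and only the next column $1000$ sends it to $K_1$).
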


\begin{table}[ht]
\centering
\small
\begin{tabular}{l*{16}{l}}
\toprule
 & \multicolumn{16}{c}{digit column $(d_{v_0}, d_{v_1}, d_{z_0},
   d_{z_1})$,}\\
 & \multicolumn{16}{c}{in lexicographic order $0000, 0001, \dots, 1111$} \\
\cmidrule(lr){2-17}
$K_0$ & 0 & 1 & 1 & 1 & 2 & 3 & 1 & 4 & 5 & 1 & 3 & 4 & 1 & 1 & 1 & 6 \\
$K_1$ & 1 & 1 & 1 & 1 & 1 & 1 & 1 & 1 & 1 & 1 & 1 & 1 & 1 & 1 & 1 & 1 \\
$K_2$ & 2 & 1 & 1 & 4 & 2 & 1 & 1 & 1 & 1 & 1 & 1 & 4 & 1 & 1 & 1 & 1 \\
$K_3$ & 3 & 1 & 1 & 1 & 1 & 3 & 1 & 1 & 1 & 1 & 3 & 1 & 1 & 1 & 1 & 6 \\
$K_4$ & 4 & 1 & 1 & 1 & 1 & 4 & 1 & 1 & 1 & 1 & 4 & 1 & 1 & 1 & 1 & 1 \\
$K_5$ & 5 & 1 & 1 & 4 & 1 & 1 & 1 & 4 & 5 & 1 & 1 & 1 & 1 & 1 & 1 & 1 \\
$K_6$ & 6 & 1 & 1 & 1 & 1 & 6 & 1 & 1 & 1 & 1 & 6 & 1 & 1 & 1 & 1 & 6 \\
\bottomrule
\end{tabular}
\caption{The transfer table. Entry $j$ in row $K_i$ means
$\mathcal{S}_{d_v,d_z} K_i = K_j$.}
\label{tab:transfer}
\end{table}

As an initial consistency check, the $K_0$ row confirms part~(c) of Theorem~\ref{thm:BL-fundamental} on $H_2$. For $v = \mathbf{0}$, every input column has $d_v = (0,0)$, so $K_0$ either self-loops (on the all-zero column $0000$) or moves to the dead state~$K_1$. Since any $x \neq \mathbf{0}$ contributes at least one non-zero column, such a run can never end in the accepting set $\{K_4,K_6\}$, and $K_0$ itself evaluates to~$0$ at the origin. Therefore, $\kappa(\mathbf{0}, x) = 0$ for every $x \in H_2$, as required, since $m_E(x) \equiv c(x) \equiv 0 \pmod 2$.

The automaton described in Theorem~\ref{thm:transfer} can be represented as the standard five-tuple $(Q, \Sigma, \delta, q_0, F)$, where the state set is $Q = \{K_0, \dots, K_6\}$, the input alphabet is $\Sigma = \{0,1\}^4$ (representing the sixteen digit columns), the transition function $\delta$ is specified in Table~\ref{tab:transfer}, the initial state is $q_0 = K_0$, and the accepting set is $F = \{K_4, K_6\}$. The accepting set is determined by the states whose kernels evaluate to $1$ at $(\mathbf{0}, \mathbf{0})$ (see the proof of Theorem~\ref{thm:transfer}). The state $K_1$ is the unique dead state; its row in Table~\ref{tab:transfer} is constant and it is rejecting. Two clarifications regarding finiteness are necessary. The kernel $\kappa$ is an infinite array even over the finite alphabet. Only its columns are finite, with the column of $z$ supported in the box $[\mathbf{0}, z]$. This property explains why the dynamic program of Proposition~\ref{prop:per-step} is finite but increases in size with $S_T$. According to Theorem~\ref{thm:transfer}, the family of sections of $\kappa$ is finite. This finiteness is established by the theorem itself and does not result from the finiteness of the alphabet. For a general prime $p$ and alphabet, this property is precisely the subject of Open Problem~P1.

\subsection*{Parity and block structure}

The first lemma records the parity structure of points in $\Cc$.

\begin{lemma}
\label{lem:parity-bits}
If $u \in \Cc$, then every bit of $\abs{u}$ is the sum of the corresponding bits of $u_0$ and $u_1$. In particular $\abs{u}$ is even iff $u_0, u_1$ are both even, and odd iff exactly one of them is odd.
\end{lemma}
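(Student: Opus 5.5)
The plan is to read the lemma directly off the carry description of $\Cc$ recorded just above, namely $u \in \Cc$ iff $u_0 \mathbin{\&} u_1 = 0$. This is the $p = 2$ specialization of Theorem~\ref{thm:kummer} applied to the two-component vector $u$.

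First, I will argue that the addition of $u_0$ and $u_1$ in base~$2$ produces no carry in any column. Let $q_\alpha$ denote the carry out of column~$\alpha$. Proceed by induction on $\alpha$, starting from a zero incoming carry at $\alpha = 0$. If the incoming carry is zero, the column sum is $h_\alpha(u_0) + h_\alpha(u_1)$. Since $u_0 \mathbin{\&} u_1 = 0$, the two bits are never both~$1$, so this sum is at most~$1$ and $q_\alpha = 0$. Hence every carry vanishes, consistent with $\vp(c(u)) = \sum_\alpha q_\alpha = 0$.

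Second, with all carries zero, the $\alpha$-th bit of the sum is exactly the column sum. This gives $h_\alpha(\abs{u}) = h_\alpha(u_0) + h_\alpha(u_1)$ as integers, where the right side lies in $\{0,1\}$. That is the first claim.

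The parity statement is the case $\alpha = 0$. Here $\abs{u}$ is even iff $h_0(u_0) + h_0(u_1) = 0$, meaning both are even. It is odd iff the sum equals~$1$, meaning exactly one is odd; the case where both are odd is excluded by $u \in \Cc$.

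There is no real obstacle in this argument. The only point requiring care is the induction that keeps the incoming carry zero, since a nonzero carry could otherwise raise a column sum to~$2$.
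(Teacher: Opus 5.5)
Your proposal is correct and follows the paper's own argument. The condition $u_0 \mathbin{\&} u_1 = 0$ makes the binary addition carry-free, so each bit of $\abs{u}$ is the sum of the corresponding bits of $u_0$ and $u_1$, and the parity claim is the case $\alpha = 0$. Your explicit induction on the incoming carry spells out the step the paper states in one line.
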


\begin{proof}
$u \in \Cc$ means $u_0 \mathbin{\&} u_1 = 0$, so the addition $u_0 + u_1$ is carry-free and each bit of the sum is the binary sum of the bits.
\end{proof}

The next lemma describes the effect of doubling on membership in $H_2$.

\begin{lemma}
\label{lem:doubling}
For every $u \in \N^2$, $2u \in H_2 \iff (2u_0+1,\, 2u_1) \in H_2 \iff (2u_0,\, 2u_1+1) \in H_2 \iff u \in H_2$, while $(2u_0+1,\, 2u_1+1) \in H_2$ always.
\end{lemma}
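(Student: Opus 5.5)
The plan is to reduce every claim in Lemma~\ref{lem:doubling} to a statement about the bitwise \textsc{and} of the two coordinates, using the description $H_2 = \{z \neq \mathbf{0} : z_0 \mathbin{\&} z_1 \neq 0\}$ obtained from Theorem~\ref{thm:kummer}. The one tool is how doubling and adding a low bit act on this \textsc{and}.

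\textbf{Step 1: the all-odd case.} For $u = (u_0,u_1)$, the binary expansion of $2u_i + d_i$ with $d_i \in \{0,1\}$ is that of $u_i$ shifted up one place, with $d_i$ written in bit~$0$. Taking the bitwise \textsc{and} column by column gives
\[
 (2u_0 + d_0) \mathbin{\&} (2u_1 + d_1) \;=\; 2\,(u_0 \mathbin{\&} u_1) + (d_0 d_1).
\]
If $d_0 = d_1 = 1$, the right-hand side is odd and hence nonzero. The point $(2u_0+1, 2u_1+1)$ is also nonzero, so it lies in $H_2$. This proves the final clause.

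\textbf{Step 2: the three equivalences.} For $(d_0,d_1) \in \{(0,0),(1,0),(0,1)\}$ we have $d_0 d_1 = 0$. The identity then becomes $(2u_0+d_0) \mathbin{\&} (2u_1+d_1) = 2(u_0 \mathbin{\&} u_1)$, which is nonzero exactly when $u_0 \mathbin{\&} u_1 \neq 0$.

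\textbf{Step 3: the nonzero condition.} If $u_0 \mathbin{\&} u_1 \neq 0$, then $u \neq \mathbf{0}$ and the shifted point is nonzero as well. Conversely, if the shifted point has nonzero \textsc{and}, then $u_0 \mathbin{\&} u_1 \neq 0$, which forces $u \neq \mathbf{0}$. The nonzero condition in the definition of $H_2$ therefore never causes a discrepancy: $\mathbf{0}$ and the points $(1,0), (0,1)$ all have \textsc{and} equal to zero, so none of them is in $H_2$, consistently on both sides. Hence each of the three shifted points lies in $H_2$ if and only if $u \in H_2$.

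The only delicate point is this bookkeeping of the $z \neq \mathbf{0}$ clause. It is harmless because a nonzero \textsc{and} already implies a nonzero point, so the \textsc{and} criterion alone characterizes $H_2$ (the definition's extra clause is only relevant to $\mathbf{0}$, which has \textsc{and}~$0$ anyway). I do not expect any real obstacle beyond verifying the digit-shift identity.
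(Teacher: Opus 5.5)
Your proposal is correct and takes essentially the same route as the paper: both reduce membership in $H_2$ to the digit-shift identity $(2u_0+d_0) \mathbin{\&} (2u_1+d_1) = 2(u_0 \mathbin{\&} u_1) + d_0 d_1$, read off the three equivalences from $d_0 d_1 = 0$, and get the doubly-odd case from the common bit in position~$0$. Your handling of the $z \neq \mathbf{0}$ clause corresponds to the paper's one-line remark that the excluded point $\mathbf{0}$ maps to excluded points, and you correctly note that the clause is redundant, since a nonzero \textsc{and} already forces a nonzero point.
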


\begin{proof}
In the first three cases, the coordinates do not share a bit in position~$0$, and their higher bits correspond to those of $u_0$ and $u_1$. Therefore, the bitwise intersection is $2(u_0 \mathbin{\&} u_1)$, which is nonzero if and only if $u \in H_2$. The excluded point $\mathbf{0}$ maps to excluded points. In the last case, both coordinates have bit~$0$ set.
\end{proof}

The next lemma organizes the steps of an avoiding path into two-step blocks.

\begin{lemma}
\label{lem:blocks}
Consider a path in which all non-terminal points lie in $\Cc$, passing through a point $2u$ where $u \in \Cc$. The subsequent two steps either (i) repeat one coordinate, resulting in $2(u + \eps^i)$ with the intermediate point remaining in $\Cc$, or (ii) use both coordinates, arriving at $(2u_0+1,\, 2u_1+1) \in H_2$, which must serve as the terminal point. In both possible orders of (ii), the intermediate point remains in $\Cc$.
\end{lemma}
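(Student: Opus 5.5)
The plan is a direct case analysis on the two steps after $2u$, using Lemma~\ref{lem:doubling} to decide membership in $H_2$ at each point. Since all non-terminal points of the path lie in $\Cc$, the first point reached that lies in $H_2$ must be the terminal point. So it suffices to determine, for each of the four possible two-step continuations $\eps^i\eps^j$ from $2u$, which points lie in $\Cc$ and which in $H_2$.

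First, the intermediate point after one step is $(2u_0+1, 2u_1)$ or $(2u_0, 2u_1+1)$. By Lemma~\ref{lem:doubling}, each lies in $H_2$ if and only if $u \in H_2$. Since $u \in \Cc$, both intermediate points lie in $\Cc$, so the walk may continue past them. This already gives the claim about intermediate points in both cases (i) and (ii).

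Second, consider the endpoint after two steps. If both steps use the same coordinate $i$, the endpoint is $2u + 2\eps^i = 2(u+\eps^i)$, which is case~(i). Lemma~\ref{lem:doubling} places this point in $H_2$ exactly when $u+\eps^i \in H_2$; the lemma only asserts the form of the endpoint, so nothing further is needed. If the two steps use different coordinates, in either order, the endpoint is $(2u_0+1, 2u_1+1)$. The last clause of Lemma~\ref{lem:doubling} puts this point in $H_2$, so by the avoidance hypothesis it cannot be a non-terminal point and must be the terminal point. This gives case~(ii).

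The only delicate point is the edge case where the path terminates before two steps are completed, for instance when $2u$ itself is the terminal point. That cannot happen here: $2u \in \Cc$ by Lemma~\ref{lem:doubling}, and a terminal point lies in $H_2$ because paths counted by $\varphi$ end at $z \in H_2$. Likewise, the single-step points lie in $\Cc$, so they cannot be terminal either. Hence the path always takes at least two more steps after $2u$, and the dichotomy between (i) and (ii) is exhaustive.
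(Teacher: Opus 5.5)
Your proof is correct and follows the paper's argument. Lemma~\ref{lem:doubling} puts both one-step points $(2u_0+1,2u_1)$ and $(2u_0,2u_1+1)$ in $\Cc$ because $u \in \Cc$, and it puts the mixed landing point $(2u_0+1,2u_1+1)$ in $H_2$, which forces that point to be terminal. Your extra paragraph, checking that the path really takes two further steps after $2u$, makes explicit a point the paper leaves implicit: $2u$ and the one-step points lie in $\Cc$, while the terminal point of a path counted by $\varphi$ lies in $H_2$.
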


\begin{proof}
It follows directly from Lemma~\ref{lem:doubling} that the intermediate points $(2u_0+1, 2u_1)$ and $(2u_0, 2u_1+1)$ are contained in $\Cc$ since $u \in \Cc$. Furthermore, the mixed landing point is located in $H_2$.
\end{proof}

\subsection*{The section lemmas}

Throughout, $\zeta$ denotes the terminal point of the paths being counted. Recall that $\kappa(\cdot, \zeta) \equiv 0$ when $\zeta \notin H_2$, and any point of $H_2$ other than $\zeta$ terminates every path passing through it. The coordinate swap $(v_0, v_1, z_0, z_1) \mapsto (v_1, v_0, z_1, z_0)$ preserves $H_2$, $\Cc$, and $\kappa$; it fixes $K_1, K_3, K_4, K_6$, exchanges $K_2$ and $K_5$, and conjugates $\mathcal{S}_{(a,b),(c,d)}$ to $\mathcal{S}_{(b,a),(d,c)}$. Only one member of each mirror pair is proved.

The first section lemma treats aligned starts, namely starting points of the form $2v$.

\begin{lemma}
\label{lem:aligned}
For all $v, z \in \N^2$:
\begin{enumerate}
\item[(i)] $\varphi(2v,\, 2z) = \varphi(v, z)$ as integers; hence $\mathcal{S}_{(0,0),(0,0)}\, \kappa = \kappa$.
\item[(ii)] $\kappa(2v,\, 2z + \eps^0) = \kappa(2v,\, 2z + \eps^1) = 0$, the underlying integer counts being zero.
\item[(iii)] $\kappa(2v,\, 2z + (1,1)) = 0$ in $\F_2$, the underlying integer count being even.
\end{enumerate}
\end{lemma}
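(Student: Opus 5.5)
The plan is to build everything on the block decomposition of Lemma~\ref{lem:blocks}, treating the three parts in turn.

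\emph{Part (i).} Take an avoiding path from $2v$ to $2z$ with $2z \in H_2$; by Lemma~\ref{lem:doubling} this means $z \in H_2$. If $v \notin \Cc$, then $2v \in H_2$ by Lemma~\ref{lem:doubling}. So either $v=z$ and both counts equal $1$, or $v \neq z$ and both counts vanish. Hence assume $v \in \Cc$, $v \neq z$.

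Cut the path at its even times relative to the start. Each point $2u$ reached at such a time is non-terminal and lies in $\Cc$, so $u \in \Cc$ by Lemma~\ref{lem:doubling}. Lemma~\ref{lem:blocks} then says the next two steps are either a repeated step $\eps^i\eps^i$ or a mixed pair. A mixed pair lands on the odd point $(2u_0+1,2u_1+1) \neq 2z$, and that point is forced to be terminal, which is impossible. So every block is a repeated step, and the path is the image of a path $v \to z$ under $\eps^i \mapsto \eps^i\eps^i$.

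Next check that avoidance transfers in both directions. The block endpoints $2u$ avoid $H_2$ exactly when $u$ does. The intermediate points $2u+\eps^i$ lie in $\Cc$ by Lemma~\ref{lem:blocks}, because $u \in \Cc$ holds for every non-terminal $u$. This gives a bijection, so $\varphi(2v,2z)=\varphi(v,z)$.

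\emph{Part (ii).} Run the same block argument toward $2z+\eps^0$. Now the terminal point has odd $|{\cdot}|$, so it is reached at an odd time from $2v$. Before that time the path passes through aligned points $2u$ with $u \in \Cc$. The last full block before the final step cannot be mixed, since a mixed block would terminate the path at a point with both coordinates odd, and $2z+\eps^0$ has $z_1$-coordinate even. So the path reaches some $2u$ and then takes one final step $\eps^0$ to $2u+\eps^0 = 2z+\eps^0$, which forces $u=z$. But $2z$ would then be a non-terminal point lying in $H_2$, and it must lie in $H_2$: the target $2z+\eps^0$ is in $H_2$ only if $z \in H_2$, by Lemma~\ref{lem:doubling}, which gives $2z \in H_2$. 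This is a contradiction, so the count is zero. If instead $z \notin H_2$, then the target is not in $H_2$ and $\kappa=0$ by convention. The $\eps^1$ case follows by the coordinate swap.

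\emph{Part (iii).} Any avoiding path to $2z+(1,1)$ consists of repeated blocks from $2v$ to $2z$, followed by one terminal mixed block. There is an edge case where the path would step onto $2z+\eps^i$ and continue past it, but that point is in $\Cc$ only when $z \in \Cc$, and the block structure handles it.

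The plan for this part is to pair the two orders $\eps^0\eps^1$ and $\eps^1\eps^0$ of the terminal mixed block. This pairing is a fixed-point-free involution. Both intermediate points lie in $\Cc$ by Lemma~\ref{lem:blocks}, and the prefix is unchanged, so the involution preserves avoidance. Consequently the count equals $2\,\varphi'$, where $\varphi'$ counts the avoiding prefixes, and is therefore even.

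The main obstacle is verifying that the prefix ends at exactly $2z$ with $z \in \Cc$, and that no path can reach the terminal point any other way. I would handle this by the parity-of-time argument from part (ii): the target has even $|{\cdot}|$, but both of its coordinates are odd, so it cannot be the endpoint of a repeated block.
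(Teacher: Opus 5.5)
Your proposal is correct and follows essentially the paper's proof: the doubling bijection on repeated blocks for (i), and for (iii) the fixed-point-free involution that swaps the two orders of the terminal mixed block $2z \to 2z+(1,1)$ with $z \in \Cc$. The one variation is that in (ii) and (iii) you rerun the global block decomposition, whereas the paper inspects only the last one or two steps (the predecessors of the target lie in $H_2$, or the path must pass through $2z$); that local check also covers a start $v \in H_2$, a case you treat explicitly only in (i).
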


\begin{proof}
(i) If $z \notin H_2$, then $2z \notin H_2$ (Lemma~\ref{lem:doubling}), and both sides are zero by convention; similarly, if $v \not\leq z$. If $v \in H_2$, both sides are equal to $\delta_{v=z}$. Therefore, assume $z \in H_2$, $v \in \Cc$, and $v \leq z$. Any path counted by $\varphi(2v, 2z)$ starts and ends at points with even coordinates. By Lemma~\ref{lem:parity-bits}, its points in $\Cc$ at even times have the form $2u$ with $u \in \Cc$. According to Lemma~\ref{lem:blocks}, its steps are organized into two-step blocks, none of which may be mixed. A mixed block terminates at a point of $H_2$ with two odd coordinates, which cannot be equal to $2z$. Each pure block $2u \to 2(u + \eps^i)$ has exactly one realization, and its intermediate point is necessarily in $\Cc$. The sequence of blocks establishes a bijection with monotone paths from $v$ to $z$. A block landing at $2u'$ with $u' \in H_2$ is allowed only when $2u' = 2z$ is terminal, which matches the constraint on paths from $v$ to $z$ at $u' = z$.

(ii) By mirror symmetry, consider $\zeta = 2z + \eps^0 = (2z_0+1, 2z_1)$. If $\zeta \notin H_2$, the count is zero by convention; otherwise, $z \in H_2$ by Lemma~\ref{lem:doubling}. The starting point $2v$ has both coordinates even, whereas $\zeta$ does not. Therefore, $\zeta$ cannot be the starting point, and every path to $\zeta$ must have a final step, either from $\zeta - \eps^0 = 2z \in H_2$ or from $\zeta - \eps^1 = (2z_0+1, 2z_1-1)$, which has both coordinates odd and thus lies in $H_2$ when it exists. Both predecessors are non-terminal points of $H_2$, so no path survives.

(iii) In this case, $\zeta = (2z_0+1, 2z_1+1)$ always belongs to $H_2$. If $z \in H_2$, both predecessors $(2z_0+1, 2z_1)$ and $(2z_0, 2z_1+1)$ of $\zeta$ are in $H_2$ by Lemma~\ref{lem:doubling}, so the count is zero. If $z \in \Cc$, consider any counted path. Its predecessor of $\zeta$ is one of the two intermediate points above, whose own predecessor must be $2z$ (the alternative has two odd coordinates and thus lies in $H_2$). The starting point $2v$ has even coordinates, so it cannot be an intermediate point. Therefore, every counted path ends with a full mixed block $2z \to \zeta$, in one of its two possible orders, both of which are valid since both intermediates lie in $\Cc$. Swapping the order of the final block defines a fixed-point-free involution on the set of counted paths, so the cardinality is even.
\end{proof}

The next lemma treats half-block starts, namely starting points of the form $2v + \eps^0$ or $2v + \eps^1$.

\begin{lemma}
\label{lem:halfblock}
For all $v, z \in \N^2$, with $w = 2v + \eps^0 = (2v_0+1,\, 2v_1)$:
\begin{enumerate}
\item[(i)] $\kappa(w,\, 2z) = \mathbf{1}[v \in \Cc]\,
  \kappa(v + \eps^0,\, z)$;
\item[(ii)] $\kappa(w,\, 2z + \eps^1) = 0$;
\item[(iii)] $\kappa(w,\, 2z + \eps^0) = \delta_{v=z}\,
  \mathbf{1}[z \in H_2]$;
\item[(iv)] $\kappa(w,\, 2z + (1,1)) = \delta_{v=z}\,
  \mathbf{1}[v \in \Cc]$ in $\F_2$.
\end{enumerate}
The lemma also holds for the four analogous statements when $w = 2v + \eps^1$.
\end{lemma}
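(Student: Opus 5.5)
The plan is a case split on whether the start $w=(2v_0+1,2v_1)$ lies in $H_2$. Inside the main case I would do a last-step (or first-step) analysis using the parity facts of Lemmas~\ref{lem:parity-bits}--\ref{lem:blocks}. The statements for $w=2v+\eps^1$ then follow from the coordinate-swap symmetry noted before Lemma~\ref{lem:aligned}, so only $w=2v+\eps^0$ needs proof.

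Two parity observations are used throughout. By Lemma~\ref{lem:doubling}, $w\in H_2$ iff $v\in H_2$. Any point with both coordinates odd lies in $H_2$.

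\emph{Degenerate case $v\in H_2$.} Then $w\in H_2$, so by convention $\kappa(w,\zeta)=\delta_{w=\zeta}$.
\begin{itemize}
\item[(i),(ii),(iv)] Comparing parities of coordinates shows $w$ differs from $2z$, from $2z+\eps^1$ and from $2z+(1,1)$. So all three sides vanish, matching the factors $\mathbf 1[v\in\Cc]=0$.
\item[(iii)] Here $w=2z+\eps^0$ iff $v=z$, and then $z=v\in H_2$. This gives $\delta_{v=z}\mathbf 1[z\in H_2]$.
\end{itemize}

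\emph{Main case $v\in\Cc$, so $w\in\Cc$.}
\begin{itemize}
\item[(i)] I would analyse the first step of a path from $w$ to $\zeta=2z$. The step $\eps^1$ lands on $(2v_0+1,2v_1+1)\in H_2$, which is not the even terminal point, so it is dead. Every counted path therefore begins $w\to 2(v+\eps^0)$. Deleting this step is a bijection onto the paths counted by $\varphi(2(v+\eps^0),2z)$. Lemma~\ref{lem:aligned}(i) identifies that count with $\varphi(v+\eps^0,z)$, including the convention when $v+\eps^0\in H_2$. This yields (i) even as an integer identity.
\item[(ii)] I would analyse the last step instead. If $\zeta=(2z_0,2z_1+1)\notin H_2$ the value is $0$ by convention. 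Otherwise $z\in H_2$, so the predecessor $2z$ lies in $H_2$ and is not $\zeta$. The other predecessor $(2z_0-1,2z_1+1)$ is odd–odd, hence in $H_2$. Neither predecessor can be the start $w$, since $w$ has even second coordinate. So no path survives.
\item[(iii)] Same last-step argument. If $v=z$ then $w=\zeta$, and the count is $1$ exactly when $\zeta\in H_2$, i.e.\ $z\in H_2$. If $v\ne z$, membership $\zeta\in H_2$ forces $z\in H_2$. Then both predecessors $2z$ and $(2z_0+1,2z_1-1)$ lie in $H_2$ and differ from $w$, so the count is $0$.
\item[(iv)] This is the delicate part, an involution argument as in Lemma~\ref{lem:aligned}(iii). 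Here $\zeta=(2z_0+1,2z_1+1)$ always lies in $H_2$.
\begin{itemize}
\item[--] If $z\in H_2$, both predecessors $(2z_0+1,2z_1)$ and $(2z_0,2z_1+1)$ lie in $H_2$. Such a predecessor is usable only as the start $w$, which would need $v=z\in H_2$, contradicting $v\in\Cc$. So the count is $0=\delta_{v=z}\mathbf 1[v\in\Cc]$.
\item[--] If $z\in\Cc$, both predecessors lie in $\Cc$. A path through $(2z_0,2z_1+1)$ must come from $2z$: the other option is odd–odd, and $w$ cannot equal it because of parity.
\item[--] A path through $(2z_0+1,2z_1)$ either starts there, i.e.\ $w=(2z_0+1,2z_1)$, which happens iff $v=z$ and contributes exactly one path, or else comes from $2z$.
\item[--] The paths passing through $2z$ end in a full mixed block $2z\to\zeta$. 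Swapping the two orders of this block is a fixed-point-free involution on them, so they contribute an even number.
\end{itemize}
Hence the parity is $\delta_{v=z}$, which equals $\delta_{v=z}\mathbf 1[v\in\Cc]$ in this case.
\end{itemize}

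I expect (iv) to be the main obstacle. Care is needed to check that the start $w$ is never an interior point of the swapped block unless $v=z$. I would verify this by comparing the parities of the coordinates of $w$, $2z$ and the two intermediate points.
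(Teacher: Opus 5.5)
Your proof is correct and follows essentially the paper's argument: a first-step reduction to Lemma~\ref{lem:aligned}(i) for (i), a dead-predecessor analysis for (ii) and (iii), and the order-swap involution on the terminal mixed block $2z\to\zeta$ for (iv), with only cosmetic differences (you isolate $v\in H_2$ up front, and you decompose (iv) by the last step rather than the first). One harmless slip: in (ii) the predecessor $2z$ also has an even second coordinate, so it is the first coordinate (or simply $w\in\Cc$ versus $2z\in H_2$) that separates it from $w$.
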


\begin{proof}
Observe that $w$ belongs to $H_2$ if and only if $v$ does, as established in Lemma~\ref{lem:doubling}. Therefore, $\mathbf{1}[w \in \Cc] = \mathbf{1}[v \in \Cc]$.

(i) For $\zeta = 2z$, both sides are zero unless $z \in H_2$; thus, assume $z \in H_2$. In this case, $\zeta \neq w$ due to the parity of the coordinates. If $w \in H_2$, both sides are zero. If $w \in \Cc$, the initial step of any counted path leads to $(2v_0+1, 2v_1+1) \in H_2$, which is not $\zeta$ and terminates the path, or to $2(v + \eps^0)$. Subsequently, the count is $\varphi(2(v + \eps^0), 2z)$, where the convention for $H_2$-starts determines whether $2(v + \eps^0)$ is allowed as an interior point or coincides with $\zeta$. By Lemma~\ref{lem:aligned}(i), this expression equals $\varphi(v + \eps^0, z)$.

(ii) For $\zeta = (2z_0, 2z_1+1)$, assume $\zeta \in H_2$, which implies $z \in H_2$. The starting point $w$ has an odd first coordinate and an even second coordinate, whereas $\zeta$ exhibits the opposite pattern, ensuring $\zeta \neq w$. The predecessors $2z \in H_2$ and $(2z_0-1, 2z_1+1) \in H_2$, both with odd coordinates, terminate all possible paths.

(iii) For $\zeta = (2z_0+1, 2z_1)$, if $v = z$ then $\zeta = w$ and $\kappa(w, w) = \mathbf{1}[w \in H_2] = \mathbf{1}[z \in H_2]$. If $v \neq z$, assume $\zeta \in H_2$; otherwise, both sides are zero.
In this case, $z \in H_2$; the predecessors of $\zeta$ are $2z \in H_2$ and $(2z_0+1, 2z_1-1) \in H_2$, both of which terminate all paths. Consequently, the right side is also zero.

(iv) For $\zeta = (2z_0+1, 2z_1+1) \in H_2$, if $w \in H_2$ then no path can originate from $w$, and $\zeta \neq w$ due to the parity of the second coordinate. Therefore, the count is zero, and the right side is also zero because $\mathbf{1}[v \in \Cc] = 0$. Assume $w \in \Cc$. If $v = z$, the single step $w \to w + \eps^1 = \zeta$ is the unique path of length $\abs{\zeta} - \abs{w} = 1$ and is avoiding; it contributes exactly~$1$. If $v \neq z$, the first step to $(2v_0+1,\, 2v_1+1) \in H_2$ is not terminal (that would force $v = z$) and dies, so counted paths first complete to $2(v + \eps^0)$ and, arguing as in Lemma~\ref{lem:aligned}(iii), are killed outright when $z \in H_2$ and otherwise end with a full mixed block $2z \to \zeta$. The start $w$ coincides with neither intermediate point of that block ($v \neq z$ rules out one, parity the other), so the order-swap involution applies verbatim, and the count is even.
\end{proof}

The last section lemma treats doubly-odd starts, namely starting points of the form $2v + (1,1)$.

\begin{lemma}
\label{lem:oddodd}
For $w = 2v + (1,1)$ and any $\zeta$, $\kappa(w, \zeta) = \delta_{w=\zeta}$. Hence $\mathcal{S}_{(1,1),(c,d)}\, \kappa = K_6$ if $(c,d) = (1,1)$ and $K_1$ otherwise.
\end{lemma}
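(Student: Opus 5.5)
The plan is a direct argument from the definition of $\varphi$, followed by reading off the section. By Lemma~\ref{lem:doubling}, the point $w = 2v + (1,1) = (2v_0+1,\, 2v_1+1)$ always lies in $H_2$, since both coordinates have bit~$0$ set; note that $w \neq \mathbf{0}$, so the exclusion of the origin from $H_2$ plays no role.

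The first step is to evaluate $\kappa(w,\zeta)$ from Definition~\ref{def:kernel}. If $\zeta \notin H_2$ or $w \not\leq \zeta$, then $\kappa(w,\zeta) = 0$ by convention. In the first case $\zeta \neq w$ because $w \in H_2$, and in the second case $\zeta \neq w$ trivially, so $\delta_{w=\zeta} = 0$ as well and the identity holds. Otherwise $\zeta \in H_2$ and $w \leq \zeta$. In this case the convention $\varphi(v',z) = 0$ for $v' \in H_2 \setminus \{z\}$ gives $\varphi(w,\zeta) = 0$ whenever $w \neq \zeta$. The reason is that every counted path begins at $w$, and $w$ is then a non-terminal point of $H_2$. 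When $w = \zeta$ we have $\varphi(\zeta,\zeta) = 1$. Together these give $\kappa(w,\zeta) = \delta_{w=\zeta}$.

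The second step is to compute the section $\mathcal{S}_{(1,1),(c,d)}\kappa$. By definition,
\[
(\mathcal{S}_{(1,1),(c,d)}\kappa)(v,z) = \kappa\bigl(2v + (1,1),\, 2z + (c,d)\bigr) = \delta_{2v + (1,1) = 2z + (c,d)}.
\]
Comparing parities coordinatewise, equality is impossible unless $(c,d) = (1,1)$. In that case the condition reduces to $v = z$. So the section is identically $0$, which is $K_1$, when $(c,d) \neq (1,1)$, and it equals $\delta_{v=z} = K_6$ when $(c,d) = (1,1)$. This matches the columns $1000$ through $1111$ of the $K_0$ row of Table~\ref{tab:transfer}, read in the order $(d_{v_0},d_{v_1},d_{z_0},d_{z_1})$ with $d_v = (1,1)$.

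There is no real obstacle here. The only point needing care is to state explicitly that the extension convention $\kappa = 0$ for $v \not\leq z$ or $z \notin H_2$ is consistent with $\delta_{w=\zeta}$. This holds because $w \in H_2$ and $w \leq w$, so those degenerate cases never have $w = \zeta$.
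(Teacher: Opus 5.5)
Your proof is correct and follows the paper's own argument: $w \in H_2$ by Lemma~\ref{lem:doubling}, so no path can leave $w$ and $\varphi(w,\zeta)=\delta_{w=\zeta}$, and the section is then read off from the parities in $2v+(1,1)=2z+(c,d)$; your explicit check that the extension conventions agree with $\delta_{w=\zeta}$ spells out a step the paper leaves implicit. One small slip in your table cross-check: the columns with $d_v=(1,1)$ are $1100$ through $1111$ (entries $1,1,1,6$), not $1000$ through $1111$, because columns $1000$--$1011$ belong to $d_v=(1,0)$.
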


\begin{proof}
By Lemma~\ref{lem:doubling}, $w \in H_2$, which implies that no path may leave $w$ and $\varphi(w, \zeta) = \delta_{w=\zeta}$, with both points lying in $H_2$. Considering the sections, the equation $2v + (1,1) = 2z + (c,d)$ implies $(c,d) = (1,1)$ and $v = z$.
\end{proof}

\subsection*{Proof of Theorem~\ref{thm:transfer}}

\begin{proof}
For row $K_0$, the sixteen sections of $\kappa$ correspond precisely to the statements of Lemmas \ref{lem:aligned} through \ref{lem:oddodd} and their mirror cases. Specifically, $d_v = (0,0)$ yields $K_0, K_1, K_1, K_1$ for $d_z = (0,0), (0,1), (1,0), (1,1)$, respectively (Lemma~\ref{lem:aligned}). For $d_v = (1,0)$, the resulting kernels are $K_5, K_1, K_3, K_4$ (Lemma~\ref{lem:halfblock}, with part~(i) producing $\mathbf{1}[v \in \Cc] \,\kappa(v + \eps^0, z) = K_5$). The case $d_v = (0,1)$ produces the mirror row $K_2, K_3, K_1, K_4$, while $d_v = (1,1)$ yields $K_1, K_1, K_1, K_6$ (Lemma~\ref{lem:oddodd}). These results constitute the first row of Table~\ref{tab:transfer}.

For rows $K_1, K_6, K_3, K_4$, the kernel $K_1$ remains invariant under all sections. For the $\delta$-type kernels, the condition $2v + d_v = 2z + d_z$ is satisfied if and only if $d_v = d_z$ and $v = z$. In conjunction with Lemma~\ref{lem:doubling}, one obtains $\mathcal{S}_{d,d} K_6 = K_6$, while all other sections of $K_6$ yield $K_1$. For $\mathcal{S}_{d,d} K_3 = \delta_{v=z}\,\mathbf{1}[2z + d \in H_2]$, this expression equals $K_3$ when $d \in \{(0,0), (0,1), (1,0)\}$ and $K_6$ when $d = (1,1)$; $\mathcal{S}_{d,d} K_4 = \delta_{v=z}\, \mathbf{1}[2v + d \in \Cc]$, which is $K_4$ for $d \in \{(0,0), (0,1), (1,0)\}$ and $K_1$ for $d = (1,1)$.

For row $K_5$ ($K_2$ by mirror symmetry), by definition, $(\mathcal{S}_{(a,b),(c,d)} K_5)(v,z) = \mathbf{1}[2v + (a,b) \in \Cc] \, \kappa\bigl(2v + (a,b) + \eps^0,\, 2z + (c,d)\bigr)$. For $(a,b) = (0,0)$, the indicator is $\mathbf{1}[v \in \Cc]$ and the kernel argument is $2v + \eps^0$. According to Lemma~\ref{lem:halfblock}, the four possible values of $(c,d)$ yield: $\mathbf{1}[v \in \Cc]^2\, \kappa(v + \eps^0, z) = K_5$; $K_1$; $\mathbf{1}[v \in \Cc]\,\delta_{v=z}\,\mathbf{1}[z \in H_2] = K_1$ (since the conditions $v = z$, $v \in \Cc$, $z \in H_2$ cannot be satisfied simultaneously); and $\mathbf{1}[v \in \Cc]\,\delta_{v=z}\, \mathbf{1}[v \in \Cc] = K_4$. For $(a,b) = (0,1)$, the indicator remains $\mathbf{1}[v \in \Cc]$ and the kernel argument is $2v + (1,1)$, which is a doubly-odd start. By Lemma~\ref{lem:oddodd}, the section is $\mathbf{1}[v \in \Cc]\,\delta_{v=z} = K_4$ when $(c,d) = (1,1)$ and $K_1$ otherwise. For $(a,b) = (1,0)$, the indicator is $\mathbf{1}[v \in \Cc]$, and the kernel argument is $2(v + \eps^0)$, which is an aligned start. By Lemma~\ref{lem:aligned}, the section is $\mathbf{1}[v \in \Cc] \,\kappa(v + \eps^0, z) = K_5$ when $(c,d) = (0,0)$ and $K_1$ otherwise. For $(a,b) = (1,1)$, the indicator vanishes identically, and the section is $K_1$. This reproduces the $K_5$ row, and conjugation by the coordinate swap yields the $K_2$ row.

The automaton evaluation proceeds by induction on the number of digit columns of $(v,z)$. Consuming the least-significant column $(d_v, d_z)$ replaces the current kernel $K$ evaluated at $(2v' + d_v, 2z' + d_z)$ with $\mathcal{S}_{d_v,d_z} K$ evaluated at $(v', z')$. Each kernel remains stable under the zero-column self-loops described in Table~\ref{tab:transfer}, ensuring consistency with zero-padding. After all columns have been processed, the arguments reduce to $(\mathbf{0}, \mathbf{0})$. At this point, $K_0$ evaluates to $\kappa(\mathbf{0},\mathbf{0}) = 0$ (since $\mathbf{0} \notin H_2$), $K_1$ and $K_3$ to $0$, $K_2$ and $K_5$ to $0$ (as their $z$-argument is $\mathbf{0} \notin H_2$), and $K_4$ and $K_6$ to $1$. Therefore, the automaton accepts precisely in $\{K_4, K_6\}$.
\end{proof}

\subsection*{Algorithmic consequences}

The automaton yields the following evaluation cost.

\begin{corollary}
\label{cor:kernel-eval}
$\kappa(v,z)$ can be computed in $O(\log \max(v_0, v_1, z_0, z_1))$ word operations, with the digit columns processed in a single pass.
\end{corollary}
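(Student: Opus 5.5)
The plan is to run the automaton of Theorem~\ref{thm:transfer} directly and count the work. Set $L = \lfloor \log_2 \max(v_0, v_1, z_0, z_1) \rfloor + 1$; this is the number of binary digit positions at which some coordinate can have a nonzero bit. Beyond position $L-1$ every column is $0000$. By Table~\ref{tab:transfer}, every state has a self-loop on the all-zero column. So once the $L$ significant columns have been read, the current state already equals the state the automaton would reach after reading any amount of zero padding. That state is therefore the final state, and the proof of Theorem~\ref{thm:transfer} shows that acceptance (membership in $\{K_4, K_6\}$) is exactly the value $\kappa(v,z)$. This settles correctness, provided we also dispose of the degenerate inputs. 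If all four coordinates are zero, there are no significant columns, the state stays $K_0$, and the output is $0 = \kappa(\mathbf{0},\mathbf{0})$. So the bound should be read as $O(1 + \log \max(\cdot))$.

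For the cost, each step does a constant amount of work. It extracts bit $\alpha$ of each of the four coordinates, either by shifting and masking or by keeping four shifted copies and masking the low bit. It then packs the four bits into an index $d \in \{0, \dots, 15\}$ and performs one lookup in the fixed $7 \times 16$ table. On a word-RAM where the coordinates fit in $O(1)$ words, each of these is $O(1)$ word operations. The loop runs $L = O(\log \max(v_0, v_1, z_0, z_1))$ times and reads the columns least-significant first, in one pass with no backtracking. State storage is constant, and the early exit on reaching the dead state $K_1$ is an optional optimization.

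The only delicate point is the model of computation. If the coordinates exceed a machine word, the shifts are no longer $O(1)$. The remedy is to read the bits sequentially from the binary representations, which gives $O(\log \max)$ bit operations in any case. I do not expect a substantive obstacle: the content is entirely in Theorem~\ref{thm:transfer}, and the corollary is the standard fact that a DFA evaluation costs linear time in the input length.
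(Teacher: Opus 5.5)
Your proposal is correct and follows the same route the paper intends. The paper gives no separate proof of the corollary and treats it as immediate from Theorem~\ref{thm:transfer}: run the seven-state automaton over the $O(\log\max(v_0,v_1,z_0,z_1))$ significant digit columns with one constant-time table lookup per column, zero padding being harmless because every state self-loops on the column $0000$. Your extra care about the degenerate inputs (reading the bound as $O(1+\log\max)$) and about the word-RAM model is a reasonable refinement, not a different argument.
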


Combining the automaton with the residue rule yields fast exact fair assignment.

\begin{corollary}
\label{cor:P1-binary}
For $m = 2$ and $I = \{0,1\}$, Algorithm~\ref{alg:fairsample}, when combined with the residue rule from Lemma~\ref{lem:residue} and rank computation via~\eqref{eq:rank-kernel} and Theorem~\ref{thm:transfer}, achieves exact fair group assignment in $O(T \log T)$ bit operations. In comparison, the dynamic program of Proposition~\ref{prop:per-step} executed modulo~$2$ requires $O(T^2)$ bit operations.
\end{corollary}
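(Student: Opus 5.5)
The plan is to combine three ingredients already in place: the residue rule of Lemma~\ref{lem:residue}, the rank congruence~\eqref{eq:rank-kernel}, and the per-evaluation cost of Corollary~\ref{cor:kernel-eval}.

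\emph{Fairness.} By Lemma~\ref{lem:residue} with $m=2$, returning $J = (r \bmod 2)+1$ is exactly fair. By~\eqref{eq:rank-kernel}, $r \bmod 2$ is the $\F_2$-sum of $\kappa(S_{t-1}+\eps^0, S_T)$ over the steps $t \le T$ with $a_t = 1$. Each summand is computed by the automaton of Theorem~\ref{thm:transfer}. To justify~\eqref{eq:rank-kernel}, I would recall that on the binary alphabet the rank line of Algorithm~\ref{alg:fairsample} contributes, at each step with $a_t=1$, the single term $A(S_{t-1}+\eps^0)$ with $b=0$. The definitions then give $A(v)=\varphi(v,S_T)$, including the value $0$ for $v\in H_2\setminus\{S_T\}$. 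Hence the output bit is computed exactly and not approximately.

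\emph{Cost.} There are at most $T$ kernel evaluations. Every argument satisfies $v \le S_T$ and $z = S_T$, so all four coordinates are at most $T$ and have at most $\lfloor\log_2 T\rfloor+1$ binary digits. Each automaton run reads one 4-bit column per position and performs one lookup in the fixed $7\times16$ Table~\ref{tab:transfer}, which costs $O(1)$ bit operations per column. Each evaluation therefore costs $O(\log T)$, and the total over all evaluations is $O(T\log T)$. Maintaining $S_{t-1}$ along the path costs $O(\log T)$ per increment, which fits within the same budget. The stopping tests are handled by Proposition~\ref{prop:per-step} and are also $O(\log T)$ per draw. Finally, taking the parity sum costs $O(T)$.

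\emph{Comparison.} The dynamic program over the box $[\mathbf{0},S_T]$ modulo $2$ touches $(x_0+1)(x_1+1)$ cells, each with $O(1)$ bit work. Since $x_0+x_1=T$, this count is $\Theta(T^2)$ in the worst case, for example when $x_0\approx x_1\approx T/2$. It is bounded by $O(T^2)$ in general, because the product is at most $((T+2)/2)^2$.

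The only delicate point I expect is checking that each kernel evaluation really costs $O(\log T)$ bit operations, as opposed to the word operations counted in Corollary~\ref{cor:kernel-eval}. Here it suffices that the state set and the alphabet are constant-size and that extracting one bit of a $\log T$-bit counter costs $O(1)$. A further subtlety is confirming that the automaton run on zero-padded inputs terminates correctly after $O(\log T)$ columns, which follows from the zero-column self-loops noted in the proof of Theorem~\ref{thm:transfer}.
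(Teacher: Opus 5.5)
Your proposal is correct and follows essentially the paper's route: exactness from Lemma~\ref{lem:residue} and~\eqref{eq:rank-kernel}, at most $T$ automaton runs of $O(\log T)$ columns each, and the box count $(x_0+1)(x_1+1) \le (T/2+1)^2$ for the comparison. One claim needs correcting: a branch point need not satisfy $v \le S_T$ (the draws $0,1$ give $v = (2,0)$ and $S_T = (1,1)$), but your digit bound survives because $\abs{v} = t \le T$ and the automaton computes the zero-extended $\kappa$ for all pairs, which is also why reading $\floor{\log_2 T}+1$ columns sidesteps the truncation issue of Remark~\ref{rem:truncation}.
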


\begin{proof}
Exactness follows from Lemma~\ref{lem:residue} and~\eqref{eq:rank-kernel}. The sum contains at most $T$ terms, each corresponding to an automaton evaluation with cost $O(\log T)$, as established by Corollary~\ref{cor:kernel-eval}. The digit columns of the branch points $S_{t-1} + \eps^0$ are maintained incrementally along the walk, with each step altering one coordinate at amortized constant cost. For comparison, the box $[\mathbf{0}, S_T]$ contains $(x_0 + 1)(x_1 + 1) \leq (T/2 + 1)^2$ cells for $x_0 + x_1 = T$, by the arithmetic mean--geometric mean inequality, which is of quadratic order when both letters occur with positive frequency. The dynamic program of Proposition~\ref{prop:per-step} run modulo~$2$, which is also exact by Lemma~\ref{lem:residue}, requires one bit operation per cell. In contrast, using the full integer counts in Algorithm~\ref{alg:fairsample}, where entries reach $\Theta(T)$ bits, increases the cost to $O(T^3)$ bit operations.
\end{proof}

Corollary~\ref{cor:P1-binary} is implemented through a self-contained specialization of Algorithm~\ref{alg:fairsample}. The sampling loop (lines~1 to~5 of Algorithm~\ref{alg:fairsample}) remains unchanged, except that for $m = 2$, the stopping test $c(S) \equiv 0 \pmod 2$ in line~4 simplifies, by Theorem~\ref{thm:kummer}, to the constant-time bitwise test $S_0 \mathbin{\&} S_1 \neq 0$. The group-assignment phase (lines~6 to~10), which involves the backward dynamic program for $A(v) = \varphi(v, S)$ over the box $[\mathbf{0}, S]$, the rank $r$, and the block rule $J = \floor{r \cdot m / A(\mathbf{0})} + 1$, is entirely replaced. No count $A(v)$ is computed; instead, the rank is accumulated modulo~$2$ via~\eqref{eq:rank-kernel} as a sum of automaton evaluations $\kappa(S_{t-1} + \eps^0, S_T)$ over the branch points of the observed path, and the residue rule of Lemma~\ref{lem:residue} finalizes the assignment.

\begin{breakablealgorithm}
\caption{Fair Bit Extraction via the Transfer Automaton (Algorithm~\ref{alg:fairsample} specialized to $m = 2$, $I = \{0,1\}$)}
\label{alg:automaton}
\begin{algorithmic}[1]
\Require Access to i.i.d.\ draws from unknown non-degenerate $\bfpi$
  on $\{0,1\}$
\Ensure Outcome $J \in \{1, 2\}$ with $P(J = j) = 1/2$
\State $S \gets \mathbf{0} \in \N^2$; $\mathcal{B} \gets []$
\Repeat
  \State Draw $a \sim \bfpi$
  \State \textbf{if } $a = 1$ \textbf{ then} append $S + \eps^0$
    to $\mathcal{B}$
    \Comment{branch point of this step; cf.~\eqref{eq:rank-kernel}}
  \State $S \gets S + \eps^a$
\Until{$S_0 \mathbin{\&} S_1 \neq 0$}
  \Comment{$S \in H_2$, i.e.\ $c(S) \equiv 0 \pmod 2$
    (Theorem~\ref{thm:kummer})}
\State $r \gets 0$; \ $D \gets \floor{\log_2 \max(S_0, S_1)}$
\For{$v$ in $\mathcal{B}$}
  \State $q \gets K_0$
  \For{$\alpha = 0, 1, \ldots, D$}
    \State $q \gets \delta\bigl(q,\,
      (\hp(v_0),\, \hp(v_1),\, \hp(S_0),\, \hp(S_1))\bigr)$
      \Comment{row $q$ of Table~\ref{tab:transfer}}
  \EndFor
  \State \textbf{if } $q \in \{K_4, K_6\}$ \textbf{ then}
    $r \gets 1 - r$
    \Comment{$r \gets r \oplus \kappa(v, S)$}
\EndFor
\State \Return $J \gets r + 1$
  \Comment{residue rule, Lemma~\ref{lem:residue}}
\end{algorithmic}
\end{breakablealgorithm}

Correctness and the $O(T \log T)$ complexity follow directly from Corollary~\ref{cor:P1-binary}. The outer loop executes at most $T$ automaton evaluations, each requiring $D + 1 = O(\log T)$ digit columns; the truncation to these columns is exact (Remark~\ref{rem:truncation}). The digit columns of the stored branch points are incrementally maintained along the walk, consistent with the proof of the corollary.

\begin{remarknum}[Digit width of the truncated run]
\label{rem:truncation}
Algorithm~\ref{alg:automaton} processes the $D + 1$ binary columns $\alpha = 0, \ldots, D$ with $D = \floor{\log_2 \max(S_0, S_1)}$, whereas a branch point $v = S_{t-1} + \eps^0$ can occupy one further bit, namely when $v_0 = S_0 + 1 = 2^{D+1}$. This forces $S_0 = 2^{D+1} - 1 \geq S_1$, so the binary expansion of $S_0$ consists of ones only and $S_0 \mathbin{\&} y = y \neq 0$ for every $y \geq 1$; every state $(S_0, y)$ with $y \geq 1$ therefore lies in $H_2$. The walk occupies $(S_0, v_1)$ without stopping just before the final step, so $v_1 = 0$, the walk stops at $S = (S_0, 1)$, and the offending branch point is $v = (2^{D+1}, 0)$, arising at the final step. The truncated run therefore reads only zero columns on the $v$ side and computes $\kappa(\mathbf{0}, S) = c(S) \bmod 2 = 0$, since $S \in H_2$, which agrees with the true value $\kappa(v, S) = 0$, because $v_0 > S_0$ implies $v \not\leq S$. The truncation is exact in every case.
\end{remarknum}

The proof delineates the modular mechanism underlying the transfer law. Each identity presented constitutes an exact path bijection, except for the order-swap involution described in Lemmas~\ref{lem:aligned}(iii) and~\ref{lem:halfblock}(iv). For a general prime $p$, this involution extends naturally. Blocks have length $p$, and the $p!/\prod_i d_i!$ orderings of a terminal full block realizing a digit column $d$ with $\sum_i d_i = p$ and all $d_i < p$ form free $\Z/p\Z$-orbits under cyclic rotation. Consequently, terminal full blocks contribute $0$ modulo~$p$. The resulting conjectural transfer systems for general $p$ and $\abs{I}$ are addressed in Open Problem~P1 (Section~\ref{sec:open}).

\section{Numerical Evidence}
\label{sec:numerical}

The following section presents numerical evidence addressing questions that remain unresolved by the theory.

\subsection{Unimodality of the efficiency}
\label{sec:unimodality}

Corollary~\ref{cor:efficiency} demonstrates that the efficiency satisfies $\eta(p) \to 0$ as $p \to 1^+$ and as $p \to \infty$, implying that $\eta$ attains a maximum at some interior $p^\star \in (1,\infty)$. However, it does not establish the uniqueness of this maximum. Figure~\ref{fig:efficiency} displays $\eta(p)$ for two sources. In all computed cases, the curve is unimodal, exhibiting a single interior peak near $p^\star \approx 5$ for $\bfpi = (0.7,0.3)$ and near $p^\star \approx 10$ for the more skewed $\bfpi = (0.9,0.1)$. This pattern suggests, but does not prove, that the interior maximum is always unique.

\begin{figure}[t]
\centering
\begin{tikzpicture}
\begin{axis}[
    width=0.86\textwidth, height=0.5\textwidth,
    xmode=log, log basis x=10,
    xlabel={modulus / scale $p$ (log scale)},
    ylabel={efficiency $\eta(p)=\dfrac{\log p}{E(T)\,\Hsh(\bfpi)}$},
    xmin=1.02, xmax=1000, ymin=0, ymax=0.5,
    xtick={1,2,5,10,20,50,100,200,500,1000},
    xticklabels={1,2,5,10,20,50,100,200,500,1000},
    ytick={0,0.1,0.2,0.3,0.4,0.5},
    grid=both, grid style={gray!20},
    legend pos=north east, legend cell align=left,
    tick label style={font=\small}, label style={font=\small},
]
\addplot[blue!70!black, thick, mark=none] coordinates {
 (1.02,0.0087)(1.05,0.0214)(1.1,0.0416)(1.2,0.0786)(1.35,0.1270)(1.5,0.1685)(1.7,0.2152)(2,0.2711)(2.5176,0.3395)(3.1692,0.3929)(3.9895,0.4265)(5.022,0.4371)(6.3218,0.4272)(7.9579,0.4017)(10.0176,0.3660)(12.6103,0.3254)(15.874,0.2841)(19.9824,0.2452)(25.1542,0.2099)(31.6645,0.1786)(39.8597,0.1514)(50.176,0.1277)(63.1623,0.1074)(79.5096,0.0901)(100.088,0.0753)(125.992,0.0628)(158.601,0.0523)(199.649,0.0434)(251.321,0.0360)(316.367,0.0298)(398.247,0.0246)(501.319,0.0203)(631.068,0.0167)(794.398,0.0138)(1000,0.0113)
};
\addlegendentry{$\bfpi=(0.7,0.3)$}
\addplot[red!70!black, thick, mark=none] coordinates {
 (1.02,0.0060)(1.05,0.0147)(1.1,0.0287)(1.2,0.0547)(1.35,0.0896)(1.5,0.1203)(1.7,0.1562)(2,0.2014)(2.5176,0.2622)(3.1692,0.3176)(3.9895,0.3661)(5.022,0.4062)(6.3218,0.4363)(7.9579,0.4551)(10.0176,0.4613)(12.6103,0.4545)(15.874,0.4352)(19.9824,0.4049)(25.1542,0.3665)(31.6645,0.3237)(39.8597,0.2801)(50.176,0.2388)(63.1623,0.2016)(79.5096,0.1693)(100.088,0.1416)(125.992,0.1181)(158.601,0.0983)(199.649,0.0816)(251.321,0.0676)(316.367,0.0560)(398.247,0.0462)(501.319,0.0381)(631.068,0.0314)(794.398,0.0259)(1000,0.0212)
};
\addlegendentry{$\bfpi=(0.9,0.1)$}
\addplot[blue!70!black, only marks, mark=*, mark size=1.6pt] coordinates {(5.022,0.4371)};
\addplot[red!70!black, only marks, mark=square*, mark size=1.6pt] coordinates {(10.0176,0.4613)};
\node[blue!70!black, font=\scriptsize, anchor=south] at (axis cs:5.022,0.4371) {$p^\star\!\approx\!5$};
\node[red!70!black, font=\scriptsize, anchor=south west] at (axis cs:10.0176,0.4613) {$p^\star\!\approx\!10$};
\end{axis}
\end{tikzpicture}
\caption{The efficiency $\eta(p)$ of the Bernard--Letac sampler as a function of the modulus $p$ (treated as continuous, log scale) for two binary sources. This figure supports Corollary~\ref{cor:efficiency}. Numerically, $\eta$ is unimodal, approaching zero as $p \to 1^+$ (where the target carries negligible information but $E(T) \to C(\bfpi) > 0$) and as $p \to \infty$ (where $E(T) \sim p$), with an interior maximum at $p^\star(\bfpi)$ that shifts to higher values as the source becomes more skewed. Although the uniqueness of the maximum is observed in all computed cases, it remains unproven. The curves are derived from the exact product formula in Theorem~\ref{thm:ET}.}
\label{fig:efficiency}
\end{figure}

\subsection{A composite-modulus benchmark}
\label{sec:benchmark}

For composite $m$, no closed-form prediction for $E(T)$ is known (Open Problem~P2). Therefore, the following comparison provides genuine empirical evidence rather than a verification of a theorem. The direct Bernard--Letac sampler (Algorithm~\ref{alg:fairsample}) is compared against two Von Neumann (VN) methods for generating a uniform variable in $\{0,\ldots,m-1\}$ from an i.i.d.\ biased
binary source with $P(\text{heads}) = \pi_0$. The metric used is the expected number of biased coin flips per uniform output.

\textbf{VN + Reject}: Von Neumann debiasing is followed by simple rejection sampling. The procedure reads $\lceil\log_2 m\rceil$ fair bits, keeps the values $0,\ldots,m-1$, and rejects and retries whenever the result is at least~$m$.

\textbf{VN + Lumbroso}: Von Neumann debiasing is followed by Lumbroso's Fast Dice Roller~\cite{lumbroso}, which implements the optimal Knuth--Yao DDG tree for uniform distributions. This combination recycles entropy from rejections and is the strongest such method.

\textbf{Bernard--Letac}: This approach uses the direct sampler of Algorithm~\ref{alg:fairsample}, operating on biased flips without intermediate debiasing.

All experiments utilize $2{,}000{,}000$ samples per configuration. The two VN columns report exact Markov-chain values, while the Bernard--Letac column presents the empirical mean. The final column, BL\,/\,Lum, reports the ratio of the Bernard--Letac mean to the VN+Lumbroso value. For prime $m$, the analogous comparison involves only two closed-form quantities (Bernard--Letac via Theorem~\ref{thm:ET} and the VN methods via exact analysis) and is therefore omitted. No closed form for $E(T)$ is known for composite $m$ (Remark~\ref{prop:composite}). The empirical results presented below demonstrate that the Bernard--Letac method retains its advantage over both alternatives.

\begin{center}
\small
\begin{tabular}{cc rrr r}
\toprule
$m$ & $\pi_0$ & VN+Reject & VN+Lumbroso & Bernard--Letac & BL\,/\,Lum \\
\midrule
 4 & 0.50 & 8.00 & 8.00 & 5.86 & 0.73 \\
 4 & 0.70 & 9.52 & 9.52 & 6.30 & 0.66 \\
 6 & 0.50 & 16.00 & 14.67 & 6.65 & 0.45 \\
 6 & 0.70 & 19.05 & 17.46 & 7.18 & 0.41 \\
10 & 0.50 & 25.60 & 18.40 & 7.37 & 0.40 \\
10 & 0.70 & 30.48 & 21.90 & 8.97 & 0.41 \\
\bottomrule
\end{tabular}
\end{center}

\medskip
A remark on the $m = 4$ row is in order for readers cross-reading~\cite{bernardletac}. Remarque~3 of that paper prints $E(T(H_4)) \approx 6{,}2$ for the fair coin, whereas the exact value is $5.86210$, in agreement with the empirical $5.86$ above; the printed figure is a hand estimate and is simply inaccurate. Their companion figure $2\,E(T(H_2)) \approx 6{,}8$, the cost of extracting two fair bits through $H_2$ instead, is correct (exactly $6.80294$).

\medskip
The information-theoretic lower bound $\log m / \Hsh(\pi_0)$ of Theorem~\ref{thm:lower-bound} is substantially lower than the performance of all methods, indicating that none approaches optimality in absolute terms. The advantage of the Bernard--Letac method is structural. By omitting the debiasing stage, it extracts more useful information per biased flip.

\section{Open Problems and Future Directions}
\label{sec:open}

Several open problems arise from the preceding analysis.

\begin{enumerate}[label=\textbf{P\arabic*.}, leftmargin=*, itemsep=6pt]

 \item \textbf{Transfer systems for the first-passage kernel.} Theorem~\ref{thm:transfer} establishes that the mod-2 first-passage kernel of the binary walk can be computed by a seven-state automaton, reducing the group-assignment cost of Algorithm~\ref{alg:fairsample} to $O(T \log T)$ bit operations for $m = 2$, $|I| = 2$ (Corollary~\ref{cor:P1-binary}); fairness requires only the rank of the observed path among the avoiding paths, taken modulo $m$ (Lemma~\ref{lem:residue}). The general question concerns the existence of a congruence law for the kernel. It is conjectured that for every prime $p$ and finite alphabet $I$, the kernel $\varphi(v,z) \bmod p$ can be computed by a finite $\F_p$-weighted transfer system that reads base-$p$ digit columns, utilizing Lucas multinomial weights and two types of states, namely carry-debt kernels $\mathbf{1}[v \in \Cc]\,\varphi(v + w, \cdot)$ for debt vectors $w$ in a bounded set, and $\delta$-type states carrying $H_p$/$\Cc$ certificates. The mechanism is that terminal full blocks contribute $0$ modulo $p$ due to the free cyclic action on their orderings. A proof providing a state-count bound in terms of $p$ and $|I|$ would resolve the assignment cost for all prime moduli. For composite $m$ an additional requirement is the kernels of the prime-power sets $H_{p^n}$, where the interaction with Kummer’s multi-carry criterion remains unresolved (cf.\ P2).

\item \textbf{Closed-form $E(T)$ for composite $m$.} Remark~\ref{prop:composite} explains why the derivation of Theorem~\ref{thm:ET} does not extend to composite~$m$. The question remains whether a formula for $E(T)$ exists in terms of the R\'enyi entropies at the prime-power orders $p_j^{a_j}$. The inclusion-exclusion structure of $H_m = \bigcap_j H_{p_j^{a_j}}$ suggests that a M\"{o}bius-type inversion could be relevant.

\item \textbf{Entropy order and the optimal modulus.} Given a fixed source $\bfpi$ and a required number of outcomes $m$, it is of interest to determine which decomposition of $m$ into prime powers minimizes $E(T)$. Since prime moduli allow for closed-form analysis, sampling instead with the modulus $p$, the smallest prime satisfying $p \geq m$, may often be more efficient in practice than using composite $m$, even if this requires a slightly larger outcome alphabet.

\item \textbf{Adaptive stopping boundaries.} The set $H_m$ is predetermined. It is an open question whether a data-adaptive stopping boundary can be designed that, after estimating $\bfpi$ online, dynamically selects a subset of $H_m$ to reduce $E(T)$ while maintaining fairness.

\item \textbf{$p$-adic limits and entropy.} The $p$-adic limit $L(x)$ of Algorithm~\ref{alg:padic} is a $\Zp$-valued function on $\Mon$. It remains to be determined whether $L(x)$ can be interpreted in terms of the entropy of a distribution associated with $x$. The product formula $L(x)\,L(y) = \sum_z L(z)$ of Th\'eor\`eme~10 of~\cite{bernardletac}, in which $z$ ranges over matrices with margins $p^k x$ and $p^k y$ across all scales $k \geq 0$, resembles a convolution, and the connection to $p$-adic $L$-functions and $p$-adic integration may yield further insights.

\item \textbf{Sources of infinite Shannon entropy.} On a countable alphabet every R\'{e}nyi entropy of order $p > 1$ is finite, since $0 < \sum_i \pi_i^p \leq (\max_i \pi_i)^{p-1} < 1$; only the orders $\alpha \leq 1$ can diverge. Theorem~\ref{thm:ET} and its R\'{e}nyi reading therefore carry over to countable $I$ unchanged, and $E(T) > p$ strictly in every case by Proposition~\ref{prop:ET-monotone}(a). The limit constant $C(\bfpi)$ of Theorem~\ref{thm:shannon-limit} likewise remains finite. The convergence of the integral defining $C(\bfpi)$ rests only on the behavior of $F$ near $s = 0$ and on the doubly exponential decay of $\varphi$, neither of which involves $\Hsh(\bfpi)$. The open question concerns the boundary. For a source with $\Hsh(\bfpi) = \infty$ but $H_\alpha(\bfpi) < \infty$ for every $\alpha > 1$, what replaces the lower bound $(\log p)/\Hsh(\bfpi)$ of Theorem~\ref{thm:lower-bound}, which becomes vacuous?

\item \textbf{Multivariate Kummer carries and zero-one matrix counts.} A corollary of \S5 of~\cite{bernardletac} provides an upper bound on the number of zero-one matrices with specified margins. Refining this bound using the carry structure of Algorithm~\ref{alg:kummer} remains an open problem, which is related to the longstanding challenge identified by Ryser~\cite{ryser}.

\end{enumerate}

\bigskip
\paragraph*{Acknowledgements.}
This work received no specific grant from any funding agency, commercial or
not-for-profit sectors.

\paragraph*{Competing interests.}
The author declares none.

\paragraph*{Data availability statement.}
No data were generated or analyzed in this work.


\end{document}